\documentclass[11pt,a4paper]{article}

\usepackage[utf8]{inputenc}
\usepackage[T1]{fontenc}
\usepackage{amsmath,amssymb,amsthm}
\usepackage{mathtools}
\usepackage[numbers,sort&compress]{natbib}
\usepackage{booktabs}
\usepackage{multirow}
\usepackage{graphicx}
\usepackage[colorlinks=true,linkcolor=blue,citecolor=blue,urlcolor=blue]{hyperref}
\usepackage{geometry}
\usepackage{setspace}
\usepackage{enumitem}
\usepackage{xcolor}
\usepackage{float}

\newtheorem{proposition}{Proposition}
\newtheorem{corollary}{Corollary}
\newtheorem{assumption}{Assumption}

\theoremstyle{definition}
\newtheorem{definition}{Definition}

\theoremstyle{remark}
\newtheorem{remark}{Remark}

\title{\textbf{From Automation to Augmentation:}\\[4pt]
A Framework for Designing Human-Centric\\
Work Environments in Society 5.0}

\author{Cristian Espinal Maya\thanks{Department of Economics, Universidad EAFIT, Medell\'in, Colombia; Postgraduate Division, ESUMER. ORCID: \href{https://orcid.org/0009-0000-1009-8388}{0009-0000-1009-8388}. Email: \texttt{cespinal@eafit.edu.co}.}}

\date{April 2026\\[6pt]
{\small CFE Working Paper No.\ 6. Comments welcome.}}

\begin{document}
\maketitle

\begin{abstract}

Society 5.0 and Industry 5.0 call for human-centric technology integration, yet the concept lacks an operational definition that can be measured, optimized, or evaluated at the firm level. This paper addresses three gaps. First, existing models of human-AI complementarity treat the augmentation function $\phi(D)$ as exogenous---dependent only on the stock of AI deployed---ignoring the well-documented fact that two firms with identical technology investments achieve radically different augmentation outcomes depending on how the workplace is organized around the human-AI interaction. Second, no multi-dimensional instrument exists that links workplace design choices to augmentation productivity. Third, the Society 5.0 literature proposes human-centricity as a normative aspiration but provides no formal criterion for when it is economically optimal. We make four contributions. (1) We endogenize the augmentation function as $\phi(D, W)$, where $W$ is a five-dimensional workplace design vector---AI interface design ($W_1$), decision authority allocation ($W_2$), task orchestration ($W_3$), learning loop architecture ($W_4$), and psychosocial work environment ($W_5$)---and prove that human-centric design is profit-maximizing when the workforce's augmentable cognitive capital exceeds a critical threshold. (2) We conduct a PRISMA-guided systematic review of 120 papers (screened from 6,096 records) to map the evidence base for each WADI dimension. (3) We provide secondary empirical evidence from Colombia's EDIT manufacturing survey ($N = 6{,}799$ firms) showing that management practice quality and innovation outcomes are positively associated, consistent with our design-composition complementarity prediction. (4) We propose the Workplace Augmentation Design Index (WADI), a 36-item theory-grounded and literature-validated instrument for diagnosing human-centricity at the firm level. Our central finding is that decision authority allocation ($W_2$) is the binding constraint for Society 5.0 transitions: it has the thinnest evidence base (14 of 120 papers), the strongest theoretical link to augmentation dynamics, and the most direct policy implications. Task orchestration ($W_3$) emerges as the most under-researched dimension (4 papers), representing a genuine gap in the field. These results imply that escaping the ``automation trap''---a path-dependent equilibrium where firms rationally under-invest in human-centric design---requires coordinated investment in workplace redesign, education, and governance.

\end{abstract}

\noindent\textbf{JEL Codes:} O33, J24, M11, L23, Q55\\[3pt]
\noindent\textbf{Keywords:}
Society 5.0 , Industry 5.0 , Human-AI augmentation , Workplace design , Human-centric AI , Measurement instrument , Decision authority , Cognitive factor economics

\newpage
\tableofcontents
\newpage

\section{Introduction}
\label{sec:introduction}

Consider two manufacturing firms in the same sector, each investing approximately the same amount in generative AI tools for quality control, production scheduling, and customer service. Firm A deploys the technology through a centralized system: AI recommendations are routed to department supervisors who decide whether to act on them, frontline workers receive instructions rather than AI-augmented judgment, and no structured process exists for workers to provide feedback that improves the AI over time. Firm B takes a different approach: frontline operators interact directly with the AI through transparent interfaces that explain recommendations, workers have the authority to act on or override AI suggestions, task allocation between human and machine is reviewed quarterly based on comparative advantage, and a bidirectional learning loop ensures that both the workers and the AI system improve from each interaction. Both firms have the same technology. Both have similar workforce compositions. Yet Firm B achieves substantially higher productivity, lower turnover, and faster innovation cycles.

This vignette is not hypothetical---it captures a pattern documented across recent empirical studies of AI at work. \cite{BrynjolfssonLiRaymond2023} find that generative AI increases customer support productivity by 14\% on average, but the gain reaches 34\% for novice workers---a disparity that depends critically on how the AI tool is integrated into the workflow, not on the technology itself. \cite{DellAcqua2023} document a ``jagged technological frontier'' at a major consulting firm: consultants using GPT-4 improved performance by 40\% on tasks inside the frontier but \textit{decreased} performance by 19 percentage points on tasks outside it, depending on how they were directed to interact with the system. \cite{NoyZhang2023} find that ChatGPT reduced writing task completion time by 40\%, but note that the AI mostly \textit{substituted} for worker effort rather than complementing skills---a pattern they attribute to the design of the human-AI interaction rather than an inherent property of the technology.

The common thread across these findings is that the \textit{realized augmentation}---the actual productivity multiplier that AI provides to human workers---depends not only on what the AI can do, but on \textit{how the workplace is designed around the human-AI interaction}. This observation is the starting point of our paper.

\subsection{The Problem: Society 5.0 Without Operational Content}

The transition from Industry 4.0 to Industry 5.0 and Society 5.0 represents a paradigm shift from optimizing automation---minimizing human labor per unit of output---to optimizing augmentation---maximizing human-AI complementarity per unit of cognitive output \citep{EUCommission2021, BrequeDeNulPetridis2021, Fukuyama2018}. Three pillars define this paradigm: human-centricity, sustainability, and resilience \citep{Ivanov2023}. A substantial body of scholarship has elaborated these pillars conceptually \citep{Xu2021, Maddikunta2022, Leng2022, Grabowska2022, Grabowska2022}, with over 70 papers in our systematic review corpus explicitly invoking human-centricity as a design principle.

Yet the concept remains operationally vacuous. ``Human-centric'' is used as a value statement---a normative aspiration that technology should serve human flourishing---without formal content that practitioners can measure, optimize, or evaluate. \cite{Ivanov2023}, in the most comprehensive systematic review of human-centred AI in Industry 5.0, identify the absence of operationalizable frameworks as the field's primary limitation. \cite{Xu2021} establish that human-centricity requires redesigning how humans and machines share cognitive and physical tasks, but offer no formal model of this redesign. The result is a literature that tells us \textit{what} to aim for but not \textit{how to get there} or \textit{how to measure progress}.

\subsection{The Gap: Exogenous Augmentation and Missing Measurement}

The deeper theoretical gap lies in how existing models treat the relationship between technology and productivity. The Cognitive Factor Economics (CFE) framework \citep{Espinal2026a} decomposes human capital into three orthogonal components---physical-manual ($H^P$), routine cognitive ($H^C$), and augmentable cognitive ($H^A$)---and models firm output through a production function where AI capital $D$ augments workers with $H^A$ through an augmentation function $\phi(D)$. This framework successfully explains why AI increases inequality between cognitive task types and why some occupations benefit disproportionately from AI adoption.

However, $\phi(D)$ is exogenous in the baseline model: it depends only on the stock of AI systems deployed, as if technology were sufficient to determine augmentation outcomes. The evidence reviewed above demonstrates that this is a consequential simplification. Two firms with identical $D$ and identical workforce compositions ($H^A$, $H^C$) can and do achieve radically different $\phi$ values. What mediates the difference is \textit{workplace design}: the quality of human-AI interfaces, the allocation of decision authority, the orchestration of human-AI task division, the architecture of learning feedback loops, and the psychosocial environment in which the collaboration occurs.

Compounding this theoretical gap is a measurement gap. No validated, multi-dimensional instrument exists for assessing how ``human-centric'' a workplace's AI integration actually is. Existing digital maturity indices \citep{Ghobakhloo2022SPC} measure \textit{how much} technology is adopted, not \textit{how well} it is integrated with human capabilities. The AI Workplace Well-Being Scale \citep{ParkWooKim2024} measures psychosocial outcomes but does not connect them to the design choices that produce those outcomes. Industry 4.0 maturity models assess technological readiness without addressing the human-centric design dimensions that define the 4.0-to-5.0 transition.

\subsection{Contributions}

This paper makes four contributions that collectively provide the operational foundations for Society 5.0 workplace transitions.

\textbf{First}, we endogenize the augmentation function by replacing $\phi(D)$ with $\phi(D, W)$, where $W = (W_1, W_2, W_3, W_4, W_5) \in \mathbb{R}_+^5$ is a workplace design vector with five theory-grounded dimensions: AI interface design ($W_1$), decision authority allocation ($W_2$), task orchestration ($W_3$), learning loop architecture ($W_4$), and psychosocial work environment ($W_5$). We decompose $\phi(D, W) = \phi_0(D) \cdot g(W, H^A)$ and establish five properties of the design multiplier $g$, the most important being \textit{design-composition complementarity}: $\partial^2 g / \partial W_k \partial H^A > 0$---the return to better workplace design is higher when the workforce has more augmentable cognitive capital. This complementarity generates three formal results: (i) human-centric design is profit-maximizing when $H^A/(H^A + H^C) > \theta^*$ (Proposition 1); (ii) firms systematically under-invest in human-centricity due to labor mobility, knowledge spillover, and health externalities (Proposition 2); and (iii) the system $(W, H^A)$ exhibits path dependence with two stable equilibria---an ``automation trap'' and an ``augmentation regime''---separated by an unstable threshold (Proposition 3).

\textbf{Second}, we provide an economic definition of human-centricity that replaces normative aspiration with testable content: a workplace design is human-centric if and only if $\partial \phi / \partial W_k > 0$ for all $k$---every design dimension contributes positively to the augmentation multiplier (Definition 1). This definition is economic, not moral: human-centricity is a design property that increases productivity, and its optimality depends on workforce composition.

\textbf{Third}, we conduct a PRISMA-guided systematic review of 120 papers screened from 6,096 SCOPUS records across five queries. The review synthesizes the evidence base for each WADI dimension, identifies existing measurement instruments and their limitations, and reveals the field's most critical gaps. The evidence density is highly uneven: psychosocial outcomes ($W_5$) are addressed by 87 papers (73\% of the corpus), while task orchestration ($W_3$) appears in only 4 papers (3\%)---a genuine lacuna. Decision authority ($W_2$) has only 14 direct evidence points, confirming our hypothesis that it is the least studied yet most theoretically consequential dimension.

\textbf{Fourth}, we propose the Workplace Augmentation Design Index (WADI), a 36-item instrument derived from the $\phi(D, W)$ model and validated against the systematic review evidence. WADI is the first instrument that integrates cause (design dimensions $W_1$--$W_4$) and effect (psychosocial outcomes $W_5$) into a single diagnostic framework linked to an augmentation production function. Each item traces to both a theoretical construct in our model and a validated source instrument (Endsley SA, Aghion-Tirole authority, Hackman-Oldham task identity, Karasek demand-control, Teece dynamic capabilities), ensuring content validity prior to field deployment.

We complement these theoretical contributions with secondary empirical evidence from Colombia's EDIT manufacturing survey ($N = 6{,}799$ firms, 2019--2020), using management practice variables from Chapter 7 as proxies for workplace design quality. The descriptive and cross-tabulation analyses provide suggestive evidence consistent with the model's central predictions, while acknowledging the limitations of aggregate proxy data.

\subsection{Preview of Key Results}

Three results merit highlighting. First, the systematic review confirms that management practices complement technology investment in determining augmentation outcomes---consistent with the design-composition complementarity at the core of our model. Second, decision authority allocation ($W_2$) emerges as the binding constraint for Society 5.0 transitions: it has the thinnest evidence base, the strongest theoretical link to the automation trap dynamics, and the most direct implications for organizational governance. Third, task orchestration ($W_3$)---how firms divide cognitive tasks between humans and AI at the workflow level---is the most under-researched dimension in the entire corpus, representing a genuine gap that our WADI instrument is the first to address.

\subsection{Road Map}

The paper proceeds as follows. Section \ref{sec:literature} reviews five literature strands and identifies the gaps this paper addresses. Section \ref{sec:theory} develops the formal model: endogenizing $\phi(D, W)$, defining human-centricity, and establishing the three propositions. Section \ref{sec:review} presents the systematic review results organized by WADI dimension. Section \ref{sec:empirics} provides secondary data evidence from the Colombian EDIT survey. Section \ref{sec:wadi} presents the WADI instrument, its design philosophy, item structure, and validation protocol. Section \ref{sec:discussion} develops the implications: the transition roadmap, education and competency requirements, AR/VR extensions, smart university risks, health and wellbeing policy, governance for sustained human-centricity, and sustainability constraints. Section \ref{sec:limitations} acknowledges limitations, and Section \ref{sec:conclusion} concludes.

\section{Literature Review and Theoretical Context}
\label{sec:literature}

This paper stands at the intersection of five research traditions that have developed largely in isolation: the emerging Industry 5.0 / Society 5.0 paradigm, the economics of human-AI complementarity, organizational economics of authority and delegation, work design theory and occupational psychology, and the sustainability-resilience nexus. We review each strand with a precise extraction objective: what does each contribute to the design of human-centric AI workplaces, and where does each fail to provide the operationalization that practitioners need?

% -----------------------------------------------------------------------------
\subsection{From Industry 4.0 to Society 5.0: The Human-Centricity Turn}
\label{sec:lit_s50}

The conceptual transition from Industry 4.0 to Industry 5.0 / Society 5.0 is now well documented. Industry 4.0 focused on connectivity, automation, and data-driven optimization --- maximizing throughput per unit of human labor \citep{Xu2021}. Industry 5.0, articulated by the EU Commission \citep{EUCommission2021, BrequeDeNulPetridis2021} and the Japanese Cabinet Office \citep{JapanCabinetOffice2016, Fukuyama2018}, reframes the objective: technology should serve human flourishing, not merely productivity.

Three pillars define the I5.0 paradigm: \textbf{human-centricity}, \textbf{sustainability}, and \textbf{resilience} \citep{BrequeDeNulPetridis2021, Ivanov2023}. Our systematic review confirms that these pillars have generated substantial scholarly output: 76 of our 120 corpus papers explicitly engage with Industry 5.0 concepts, and 67 invoke human-centricity as a design principle. Several comprehensive reviews have mapped this landscape \citep{Grabowska2022, Maddikunta2022, Leng2022, Grabowska2022}, and \cite{Grabowska2022} directly asks ``Is Industry 5.0 a Human-Centred Approach?'' with 316 citations --- the most-cited critical review of the paradigm.

\cite{Xu2021}, in a landmark paper (589 citations), offers an ``Outlook on human-centric manufacturing towards Industry 5.0,'' establishing that human-centricity requires more than ergonomic improvements: it demands a fundamental redesign of how humans and machines share cognitive and physical tasks. \cite{LongoPadovanoUmbrello2020} (580 citations) applies Value Sensitive Design to propose ethical technology engineering for I5.0, while \cite{Nahavandi2019} (cited over 1,500 times) provides the seminal articulation of I5.0 as ``a human-centric solution'' where robots complement rather than compete with human creativity.

Yet our review reveals a critical gap. Of these 76 papers:

\begin{itemize}
    \item \textbf{54 propose conceptual frameworks} but none offers a formal, estimable model of how workplace design affects augmentation.
    \item \textbf{Only 8 develop measurement scales}, and none measures human-centricity at the firm level as a multi-dimensional construct linked to productivity.
    \item \textbf{Only 1 paper} \citep{Ivanov2023} provides a systematic review explicitly connecting human-centred AI in Industry 5.0 to worker psychosocial outcomes --- and it identifies the absence of operationalizable frameworks as the field's primary limitation.
\end{itemize}

The S5.0 literature tells us \textit{what} to aim for but not \textit{how to get there} or \textit{how to measure progress}. Our model --- endogenizing $\phi(D, W)$ --- provides the formal machinery. Our WADI instrument provides the measurement tool.

% -----------------------------------------------------------------------------
\subsection{Human-AI Complementarity: From Theory to Evidence}
\label{sec:lit_augmentation}

A parallel literature has produced rapid empirical evidence on how AI affects worker productivity, establishing the micro-foundations for our augmentation function $\phi$.

\cite{BrynjolfssonLiRaymond2023} (207 citations, \textit{QJE}) find that generative AI at a customer support center increases productivity by 14\% on average and 34\% for novice workers. Critically, the AI disseminates the tacit knowledge of experienced workers down the experience curve --- an augmentation mechanism that depends on the design of the human-AI workflow. \cite{NoyZhang2023} (\textit{Science}) find that ChatGPT reduces writing task time by 40\% and raises quality by 18\%, but note that ``AI mostly substituted for worker effort (rough drafting) rather than complementing skills'' --- suggesting that the \textit{design of the interaction} determined whether augmentation or substitution occurred. \cite{DellAcqua2023} document the ``jagged technological frontier'' at BCG: consultants using GPT-4 completed tasks 12.2\% faster with 40\% higher quality inside the frontier, but performance \textit{dropped 19 percentage points} for tasks outside it.

These findings collectively support Property (P5) of our model --- design-composition complementarity. The realized augmentation ($\phi$) is not solely a function of the AI's capability ($D$) but depends critically on how the workplace is structured: which tasks are assigned to human vs.~AI (W$_3$), whether the human can override and understand the AI's reasoning (W$_1$), and whether decision authority enables the worker to act on augmented judgment (W$_2$).

\cite{Bloom2014} (385 citations, \textit{JOB}) provide the most comprehensive multilevel review of AI in organizations, identifying that the organizational level --- where our workplace design vector $W$ operates --- is the least studied and most impactful layer. \cite{Espinal2026b} (647 citations, \textit{HRMJ}) establish that human resource management must be fundamentally reconceived for the generative AI era, with workplace design at the center of this reconception.

% -----------------------------------------------------------------------------
\subsection{Decision Authority: The Organizational Economics of Human-AI Allocation}
\label{sec:lit_authority}

Our hypothesis that decision authority allocation (W$_2$) is the binding constraint for Society 5.0 transitions draws on organizational economics and a nascent literature on AI delegation.

\cite{AghionTirole1997} distinguish between \textit{formal authority} (the right to decide) and \textit{real authority} (the effective control over decisions). In AI-augmented workplaces, this distinction takes a new form: the AI may have superior information (high formal authority potential) but the human worker possesses contextual judgment (real authority). Our W$_2$ dimension measures whether organizations allow workers to exercise real authority when augmented by AI, or centralize AI-mediated decisions in management.

\cite{AtheyBryanGans2020} formalize this in a principal-agent model where the firm chooses how much decision authority to delegate to AI vs.~human. They show that when the principal wants agent effort, they may \textit{resist} delegating to AI even when the AI performs better --- because delegation reduces the agent's incentive to invest in judgment. This result provides the micro-foundation for our ``automation trap'' (Proposition 3): firms may rationally under-invest in W$_2$ even when augmentation is feasible.

\cite{ParkerGrote2022} (321 citations) coin the phrase ``algorithms as work designers,'' showing that algorithmic management can fundamentally reshape job characteristics. Their analysis of how algorithms redesign work along multiple dimensions --- task allocation, monitoring, performance evaluation --- maps directly onto our WADI dimensions: algorithms can enhance W$_3$ (better task orchestration) and W$_4$ (automated feedback loops) but may degrade W$_2$ (centralized algorithmic authority) and W$_5$ (surveillance, reduced autonomy).

\cite{ParkerGrote2022} provide the most direct evidence for our W$_2$ hypothesis. In a study of AI-augmented manufacturing environments, they find that \textit{decision control} --- the worker's perceived ability to influence AI-mediated decisions --- is the single strongest predictor of psychosocial wellbeing and sustained engagement with AI tools. This is precisely the mechanism that W$_2$ captures.

The evidence base for W$_2$ is smaller than for other dimensions (only 9 direct evidence points in our corpus), reflecting the fact that organizational economics has been slow to engage with AI workplace design. This gap is itself a contribution: our formal model extends Aghion-Tirole to human-AI decision dyads, providing the theoretical foundation that the field lacks.

% -----------------------------------------------------------------------------
\subsection{Work Design in the AI Era: Job Demands, Resources, and Psychosocial Outcomes}
\label{sec:lit_workdesign}

The psychosocial dimension (W$_5$) has the richest evidence base in our corpus: 76 papers address wellbeing, 15 cognitive load, 11 job satisfaction, 6 burnout, and 5 technostress. This reflects an explosion of research on AI's impact on worker wellbeing since 2023.

Three theoretical frameworks structure this evidence:

\textbf{The Job Demands-Resources (JD-R) model} \citep{Demerouti2001, BakkerDemerouti2017} classifies workplace features as demands (depleting) or resources (motivating). \cite{BakkerDemerouti2017} (49 citations) directly apply the JD-R model to AI, finding that AI has a ``dual impact on employees' work and life well-being'' --- functioning as both a demand (increased complexity, job insecurity) and a resource (reduced routine work, enhanced capabilities). This duality is precisely why W$_5$ cannot be assessed in isolation: whether AI is a demand or resource depends on W$_1$ (interface quality), W$_2$ (authority allocation), and W$_3$ (task orchestration).

\textbf{The Karasek demand-control model} \citep{Karasek1979} predicts that high demands combined with low control produce strain. \cite{ParkerGrote2022} find that AI amplifies this prediction: when AI increases demands (via monitoring, pace-setting) while reducing control (centralized algorithmic decisions), worker strain increases sharply. Conversely, when AI increases control (augmented judgment, more autonomy), strain decreases even as demands remain high. This is the W$_2$ $\times$ W$_5$ interaction in our framework.

\textbf{The SMART work design model} \citep{Parker2024} provides the most comprehensive contemporary framework, organizing work characteristics into five higher-order factors: Stimulating, Mastery, Agency, Relational, and Tolerable demands. Parker explicitly argues that ``work design matters more than ever in a digital world'' \citep{ParkerGrote2022}, and our WADI extends SMART to AI-specific dimensions not captured in the original model (W$_1$ interface design, W$_2$ human-AI authority allocation, W$_4$ bidirectional learning loops).

The measurement landscape is also rich. \cite{ParkWooKim2024} develop the \textbf{AI Workplace Well-Being (AWWB) Scale} --- the most directly comparable instrument to our WADI W$_5$ dimension, validating a multi-dimensional measure of AI-specific workplace wellbeing. \cite{Colledani2025} validate a 16-item technostress scale adapted for AI contexts. \cite{ParkWooKim2024} develop the Attitudes Towards AI at Work scale (25 items, 6 dimensions, N=2,841). These instruments inform our W$_5$ item construction but are single-dimensional: they measure wellbeing \textit{outcomes} without connecting them to the workplace \textit{design choices} (W$_1$--W$_4$) that produce those outcomes. WADI integrates cause (W$_1$--W$_4$) and effect (W$_5$) into a single diagnostic framework.

% -----------------------------------------------------------------------------
\subsection{Sustainability, Circular Economy, and the Energy Constraint}
\label{sec:lit_sustainability}

The third pillar of Industry 5.0 --- sustainability --- interacts with workplace design through an energy constraint that our model formalizes. Of our 120 corpus papers, 41 engage with sustainability, though mostly at the macro level (national policy, sector-level transitions) rather than at the firm-level design optimization we model.

\cite{Ghobakhloo2022SPC, GhobakhlooIranmanesh2023} develop strategy roadmaps for I5.0-driven sustainable transformation, identifying 11 enablers and showing that ``eco-innovation is the hardest to develop.'' \cite{Bag2021CE} find that institutional pressures drive AI adoption for circular economy capabilities, but the mechanisms operate at the organizational level --- precisely where our W vector operates. The circular economy literature \citep{Leng2023, DeSousaJabbour2023} discusses the integration of CE with I5.0 but does not formalize how circular economy principles constrain or shape workplace-level AI design choices.

Our sustainability constraint ($E(D, W_1) \leq \bar{E}$) is, to our knowledge, the first formalization of this interaction at the workplace design level. It connects to the broader observation by \cite{Strubell2019} that the computational costs of AI can be substantial, creating a genuine tradeoff between augmentation quality (which improves with more transparent, explainable AI --- higher W$_1$) and environmental sustainability.

% -----------------------------------------------------------------------------
\subsection{Gaps and This Paper's Position}
\label{sec:lit_gaps}

Table \ref{tab:gap_analysis} summarizes the gaps this paper addresses.

\begin{table}[htbp]
\caption{Gap analysis: What each literature strand contributes and what remains}
\label{tab:gap_analysis}
\centering
\small
\begin{tabular}{p{3cm}p{4.5cm}p{4.5cm}}
\hline
\textbf{Literature Strand} & \textbf{What Exists} & \textbf{What This Paper Adds} \\
\hline
Society 5.0 / Industry 5.0 & Normative concept (human-centricity, sustainability, resilience); 54 conceptual frameworks & Formal definition (Def.~1): human-centricity $\equiv$ $\partial\phi/\partial W > 0$; WADI as measurement tool \\
\hline
Human-AI augmentation & Micro evidence (Brynjolfsson, Noy, Dell'Acqua): AI increases productivity but effect depends on context & Endogenized $\phi(D,W)$: workplace design is the context variable; formal model of why same AI yields different outcomes \\
\hline
Organizational economics & Authority theory (Aghion-Tirole); AI delegation models (Athey et al.) & Extension to human-AI authority dyads; W$_2$ as binding constraint hypothesis; automation trap dynamics \\
\hline
Work design / Psychology & JD-R, SMART, COPSOQ, technostress scales; AI wellbeing evidence & WADI integrates cause (W$_1$--W$_4$) and effect (W$_5$); links design choices to augmentation outcomes, not just wellbeing \\
\hline
Sustainability + I5.0 & Macro-level roadmaps; CE + I5.0 integration & Firm-level energy constraint on $\phi$ optimization; sustainability-design tradeoff formalized \\
\hline
Task orchestration & Only 4 papers in corpus --- genuine gap & W$_3$ dimension; comparative-advantage task allocation framework for human-AI workflows \\
\hline
Measurement instruments & AWWB, SMART, I4.0 Maturity, AAAW --- each single-dimensional & WADI: multi-dimensional, theory-grounded, integrates all 5 dimensions; first index linking design to augmentation \\
\hline
\end{tabular}
\end{table}

The single most striking finding from our systematic review is the absence of W$_3$ (task orchestration) evidence: only 4 of 120 papers directly address how human-AI task allocation should be designed at the workflow level. The task-based framework \citep{AutorLevyMurnane2003, AcemogluRestrepo2022} operates at the occupation level; the HCI literature operates at the individual interaction level; but the intermediate level --- how workflows within a firm should divide cognitive tasks between human and AI --- is essentially untheorized. This is our most novel dimensional contribution.

% end of literature section

\section{Theoretical Framework}
\label{sec:theory}

This section develops the formal model that underpins our analysis. We extend the Augmented Human Capital (AHC) production function from the Cognitive Factor Economics framework \citep{Espinal2026a} by endogenizing the augmentation function, defining human-centricity in economic terms, and deriving the conditions under which human-centric workplace design is the profit-maximizing strategy.

% -----------------------------------------------------------------------------
\subsection{Endogenizing the Augmentation Function}
\label{sec:endogenize}

The CFE framework models a firm's output as:

\begin{equation}
Y_f = F\left(K_f,\; L_f^P H_f^P + \kappa K_f^{Rob},\; L_f^C H_f^C + \phi(D_f) \cdot L_f^A H_f^A \cdot D_f\right)
\label{eq:cfe_baseline}
\end{equation}

\noindent where $H^P$, $H^C$, and $H^A$ denote physical-manual, routine cognitive, and augmentable cognitive human capital respectively, and $\phi(D)$ is the augmentation function --- the productivity multiplier that AI capital $D$ provides to workers with augmentable skills $H^A$.\footnote{See \cite{Espinal2026a} for the full derivation. $F(\cdot)$ satisfies constant returns to scale and standard Inada conditions. The operator $\oplus$ in the original $H = H^P \oplus H^C \oplus H^A$ decomposition denotes orthogonal components.}

In the baseline CFE model, $\phi(D)$ is exogenous: it depends only on the stock of AI systems deployed. This is a useful simplification for the purpose of establishing the theoretical primitives, but it obscures a critical empirical regularity. Two firms with identical AI investments ($D$) and identical workforce compositions ($H^A$, $H^C$) can achieve radically different augmentation outcomes. The evidence from the emerging literature on generative AI at work is unambiguous on this point:

\begin{itemize}
    \item \cite{BrynjolfssonLiRaymond2023} find that the productivity gain from AI-assisted customer support (14\% average, 34\% for novices) depends critically on how the tool is integrated into the workflow --- suggesting that organizational design mediates the effect.
    \item \cite{DellAcqua2023} document a ``jagged technological frontier'' where AI augments performance on some tasks but degrades it on others, depending on how consultants are directed to interact with the AI system.
    \item \cite{NoyZhang2023} find that AI mostly substitutes for worker effort (rough drafting) rather than complementing skills, but note that this pattern may reflect the design of the experiment rather than a fundamental property of the technology.
\end{itemize}

\noindent The common thread is that the \textit{realized augmentation} depends not only on what AI can do ($D$) and what humans can do ($H^A$), but on \textit{how the workplace is designed around the human-AI interaction}. We formalize this insight by replacing $\phi(D)$ with $\phi(D, W)$:

\begin{equation}
Y_f = F\left(K_f,\; L_f^P H_f^P + \kappa K_f^{Rob},\; L_f^C H_f^C + \phi(D_f, W_f) \cdot L_f^A H_f^A \cdot D_f\right)
\label{eq:endogenized}
\end{equation}

\noindent where $W_f = (W_{f1}, W_{f2}, W_{f3}, W_{f4}, W_{f5}) \in \mathbb{R}_+^5$ is the firm's \textbf{workplace design vector}, with five dimensions developed below.

We decompose $\phi$ multiplicatively:

\begin{equation}
\phi(D, W) = \underbrace{\phi_0(D)}_{\text{technology potential}} \cdot \underbrace{g(W, H^A)}_{\text{design multiplier}}
\label{eq:phi_decomposition}
\end{equation}

\noindent The technology-only component $\phi_0(D)$ captures the raw augmentation potential of the AI system: it is increasing, concave, and bounded ($\phi_0(0)=1$, $\lim_{D \to \infty} \phi_0(D) = \bar{\phi}_0 < \infty$). The design multiplier $g(W, H^A)$ captures how well the workplace translates this potential into realized augmentation. Its key properties are:

\begin{enumerate}
    \item \textbf{Range:} $g \in (0, \bar{g}]$, with $g(W^{min}, H^A) = g_0 < 1$ (poor design dampens augmentation) and $g(W^{auto}, H^A) = 1$ (automation-centered design is the neutral benchmark).
    \item \textbf{Monotonicity:} $\partial g / \partial W_k > 0$ --- improving any design dimension weakly increases augmentation.
    \item \textbf{Concavity:} $\partial^2 g / \partial W_k^2 < 0$ --- diminishing returns to improvement in each dimension.
    \item \textbf{Cross-dimensional complementarity:} $\partial^2 g / \partial W_j \partial W_k > 0$ for key pairs --- design dimensions reinforce each other.
    \item \textbf{Design-composition complementarity:} $\partial^2 g / \partial W_k \partial H^A > 0$ --- \textit{the return to better design is higher when the workforce has more augmentable capital}. This is the key cross-derivative.
\end{enumerate}

\noindent Property (P5) is the most important and the most novel. It says that workplace design and workforce composition are complements: investing in human-centric design pays off more when workers have the cognitive capabilities to exploit the augmented environment. Conversely, investing in augmentable skills ($H^A$) pays off more when the workplace is designed to realize augmentation.

% -----------------------------------------------------------------------------
\subsection{The Five Dimensions of Workplace Design}
\label{sec:five_dimensions}

The five dimensions of $W$ are not ad hoc --- each maps to a specific mechanism through which workplace design affects the design multiplier $g$.

\subsubsection{$W_1$ --- AI Interface Design}

The quality of the human-AI interface determines the \textit{bandwidth} of the cognitive collaboration. A poorly designed interface creates friction that reduces the effective complementarity between $H^A$ and $D$, even when both are abundant.

This dimension is grounded in Endsley's Situation Awareness (SA) model \citep{Endsley1995}: augmentation requires that the human maintains Level 3 SA (projection) while the AI handles Level 1 (perception) and Level 2 (comprehension). Poor interfaces that obscure AI reasoning or make human override difficult degrade SA and reduce $g$ \citep{Endsley2017}. Shneiderman's Human-Centered AI framework \citep{Shneiderman2022} formalizes this as the joint requirement for high human control \textit{and} high automation --- a design space that many current AI deployments fail to occupy.

$W_1$ is measured through: transparency of AI reasoning, ease of human override, learnability, error visibility, natural language interaction quality, and availability of immersive (AR/VR) decision interfaces. The AR/VR extension is particularly relevant for decision-making in augmented environments, where the information overlay must enhance rather than degrade human judgment \citep{MasoodEgger2020, Buchner2022}.

\subsubsection{$W_2$ --- Decision Authority Allocation}

The allocation of decision rights between humans and AI determines whether the AI \textit{augments} human judgment or \textit{replaces} it. This dimension directly addresses governance in the context of Society 5.0.

We draw on \cite{AghionTirole1997}'s distinction between formal and real authority. In an AI-augmented firm, the relevant question is: does the worker who performs the task have \textit{real authority} to use AI-augmented judgment, or does the AI's output get routed to a supervisor who holds \textit{formal authority}? \cite{Garicano2000} shows that AI disrupts knowledge hierarchies by giving every worker access to ``expert knowledge,'' but this potential flattening only occurs if decision authority is actually redistributed.

Critically, \cite{AtheyBryanGans2020} show that when a principal wants agent effort, they may resist delegating to AI even when the AI is more accurate --- because delegation reduces the agent's incentive to invest in judgment. This creates a tension: the very firms that need augmentation most (those with high-$H^A$ workers whose judgment matters) may under-invest in $W_2$ due to principal-agent considerations.

$W_2$ is measured through: whether frontline workers can act on AI recommendations autonomously, the routing of AI outputs (worker vs. supervisor), existence of human-AI disagreement resolution protocols, worker input into AI task boundaries, and presence of AI governance bodies with worker representation.

\subsubsection{$W_3$ --- Task Orchestration}

The granularity and design of the human-AI task division determines whether cognitive tasks are decomposed in ways that preserve augmentation potential or destroy it.

Following the task-based framework \citep{AutorLevyMurnane2003, AcemogluRestrepo2022}, the relevant question for augmentation is not just ``which tasks are routine vs. non-routine?'' but ``how are tasks allocated between human and AI \textit{within each workflow}?'' \cite{DellAcqua2023}'s ``jagged frontier'' finding shows that the boundary between where AI helps and where it harms is irregular and context-dependent. Firms that map this frontier explicitly and allocate tasks based on comparative advantage (human does what human does best, AI does what AI does best) achieve higher $g$ than firms that apply AI indiscriminately.

$W_3$ further requires preserving task identity in the sense of \cite{HackmanOldham1976}: when workflows are fragmented so that the human worker only sees a fragment of the output, the motivational benefits of meaningful work are lost, reducing $g$ through the $W_5$ channel (psychosocial impact).

\subsubsection{$W_4$ --- Learning Loop Architecture}

Augmentation is dynamic --- the human-AI system improves over time only if structured feedback loops exist. The human learns from AI suggestions; the AI learns from human corrections. Without learning loops, augmentation stagnates at its initial level.

We ground this in \cite{Teece2018}'s dynamic capabilities framework: the workplace's ability to continuously improve human-AI collaboration is itself a dynamic capability that requires deliberate organizational investment. \cite{ArgyrisSchon1978}'s distinction between single-loop learning (correct the AI when it errs) and double-loop learning (redesign the human-AI interaction when the error pattern reveals a structural mismatch) applies directly: most firms achieve single-loop at best, leaving substantial augmentation potential unrealized.

$W_4$ is measured through: systematic human review of AI outputs, AI performance tracking, worker-reported learning from AI, structured failure escalation processes, and bidirectional knowledge transfer measurement.

\subsubsection{$W_5$ --- Psychosocial Work Environment}

Augmentation requires sustained, high-quality cognitive engagement. If the workplace generates excessive stress, reduces autonomy, or strips meaning from work, workers disengage from the cognitive collaboration, reducing $g$ even when $W_1$--$W_4$ are well-designed.

The theoretical foundations are the job demands-control model \citep{Karasek1979}, the job characteristics model \citep{HackmanOldham1976}, and the job demands-resources theory \citep{Demerouti2001, BakkerDemerouti2017}. AI tools can function as either \textit{resources} (when they augment, reduce boring tasks, increase autonomy) or \textit{demands} (when they surveil, create additional cognitive load, generate anxiety about job security). Which role AI plays depends on $W_1$--$W_4$: a well-designed interface ($W_1$) with distributed authority ($W_2$) and learning feedback ($W_4$) is a resource; a surveillance-oriented system with centralized authority is a demand.

This dimension connects directly to the health characteristics of Society 5.0 work (Theme 4) and to the psychosocial risk assessment frameworks validated for Latin America (COPSOQ-ISTAS21, ECTES). $W_5$ is measured through: worker autonomy, perceived meaning, cognitive load, surveillance perception, social connection, stress levels, and satisfaction with human-AI role balance.

% -----------------------------------------------------------------------------
\subsection{The Formal Definition of Human-Centricity}
\label{sec:definition}

The Society 5.0 literature calls for ``human-centric'' technology integration but provides no operational definition that can be measured, optimized, or evaluated. We propose:

\begin{definition}[Human-Centric Workplace Design]
\label{def:hc}
A workplace design $W_f$ is \textbf{human-centric} if and only if:
$$\frac{\partial \phi(D_f, W_f)}{\partial W_{fk}} > 0 \quad \text{for all } k \in \{1, \ldots, 5\}$$
Equivalently, a design is human-centric if every design dimension contributes positively to the augmentation multiplier.
\end{definition}

This definition is economic, not normative. It does not say that human-centricity is \textit{good} --- it says that human-centricity is a design property that \textit{increases the augmentation multiplier}. Whether a firm should adopt a human-centric design depends on whether doing so is profitable.

\begin{proposition}[Human-Centricity as Optimality]
\label{prop:hc}
There exists a threshold $\theta^* > 0$ such that if $H^A / (H^A + H^C) > \theta^*$, then the profit-maximizing workplace design $W^*$ is human-centric: $W^* > W^{auto}$ component-wise.
\end{proposition}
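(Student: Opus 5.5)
We set up the firm's design problem explicitly and then run a monotone comparative statics argument in the share $\theta \equiv H^A/(H^A+H^C)$. Holding $(K_f, D_f, L_f, H_f)$ fixed, we posit a design cost $C(W)$ that is increasing and convex in each component and normalized so that $W^{auto}$ is the cost-minimizing benchmark: each marginal cost $C_k(W^{auto}) \ge 0$ is finite and independent of $H^A$. The firm chooses $W$ to maximize
\[
\Pi(W) = p\,F\bigl(K_f,\; L^P H^P + \kappa K^{Rob},\; L^C H^C + \phi_0(D)\, g(W,H^A)\, L^A H^A D\bigr) - C(W).
\]
To make $\theta$ the operative parameter, we fix total cognitive capital $H^A + H^C = \bar H$ and write $H^A = \theta \bar H$. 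The first-order condition for component $k$ is
\[
p F_3 \,\phi_0(D)\, L^A D\, H^A\, g_k(W, H^A) = C_k(W).
\]
By properties (P2)--(P3) and convexity of $C$, $\Pi$ is strictly concave in each $W_k$. If it is jointly concave, which we will add as an assumption if needed, the FOCs characterize $W^*$.

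The core step is to show that the marginal return $M_k(\theta) \equiv pF_3\phi_0 L^A D\, H^A g_k(W^{auto}, H^A)$, evaluated at $W^{auto}$, is strictly increasing in $\theta$ and eventually exceeds $C_k(W^{auto})$. There are two channels. The first is the factor $H^A$ itself, which grows linearly in $\theta$. The second is (P5): $\partial g_k/\partial H^A > 0$, so $g_k(W^{auto},\cdot)$ is increasing. Both $H^A$ and $g_k$ are positive by (P2), so their product is strictly increasing and unbounded as $\theta\to 1$. At $\theta = 0$ the product vanishes and $M_k(0) = 0 < C_k(W^{auto})$ whenever $C_k(W^{auto})>0$. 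We then set $\theta_k^*$ as the crossing point, which is unique by monotonicity, and take $\theta^* = \max_k \theta_k^*$. For $\theta > \theta^*$ we have $\partial \Pi/\partial W_k (W^{auto}) > 0$ for every $k$. Combined with concavity, this places the optimum strictly above $W^{auto}$ in each coordinate. Cross-dimensional complementarity (P4) reinforces the conclusion: $\Pi$ is then supermodular in $W$ on the relevant lattice, and by Topkis's theorem raising one component never lowers the optimal level of another. The optimum therefore moves up jointly rather than trading components off. Finally, $\partial\phi/\partial W_k = \phi_0 g_k > 0$ at $W^*$ by (P2), so $W^*$ is human-centric in the sense of Definition~\ref{def:hc}.

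The main obstacle is the term $F_3$. As $\theta$ rises, the effective augmented input grows, and by concavity of $F$ the marginal product $F_3$ falls. This could offset the rise in $H^A g_k$. To handle it, we will first try to show that the elasticity condition $-\partial \ln F_3/\partial \ln X_3 < 1$ holds, where $X_3$ is the third argument, along with the positive elasticity from (P5). For CES $F$ this elasticity equals $(1-\rho)s$ with $s<1$, so the bound holds. If it cannot be shown in general, we will state it as a maintained regularity condition, or specialize $F$ to be linear in its third argument. A secondary issue is guaranteeing that the crossing point lies strictly inside $(0,1)$. This requires $C_k(W^{auto})$ to be below the marginal return at $\theta\to 1$, and we will impose it as a nondegeneracy assumption. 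If it fails, the statement still holds with $\theta^*$ chosen as any positive number at which all marginal returns are positive.
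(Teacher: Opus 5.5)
Your skeleton is the paper's. You evaluate the first-order condition for $W_k$ at $W^{auto}$, observe that $P F_3\phi_0 L^A D\,H^A g_k$ rises with $H^A$ directly and through (P5), locate a crossing $\theta_k^*$, and set $\theta^*=\max_k\theta_k^*$. The paper compares each first-order condition at $W=W^{auto}$ coordinate by coordinate and sends $H^A\to\infty$ with $H^C$ fixed. In doing so it silently ignores both the fall in $F_3$ and the interaction across dimensions. You rightly notice both issues, but your repairs do not go through as written.

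\textbf{The step from a positive gradient to a componentwise optimum.} You need to pass from $\partial\Pi/\partial W_k(W^{auto})>0$ for all $k$ to $W^*>W^{auto}$ componentwise. Concavity alone does not give this: with substitutable dimensions, a concave $\Pi$ can have a positive gradient at $W^{auto}$ and still have an optimum with some $W_k^*<W_k^{auto}$. So everything rests on $\Pi$ being supermodular in $W$, and that does not follow from (P4). For $j\neq k$,
\[
\frac{\partial^2\Pi}{\partial W_j\,\partial W_k}=P F_{33}\,(\phi_0L^AH^AD)^2\,g_jg_k+P F_3\,\phi_0L^AH^AD\,g_{jk}.
\]
The first term is strictly negative because all five dimensions feed the same concave input $Z_3$ (your $X_3$): raising $W_j$ lowers $F_3$ and hence the marginal value of $W_k$. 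You therefore need $g_{jk}\ge 0$ for every pair. (P4) is asserted only ``for key pairs'' in the text and is absent from the appendix's assumption on $g$. Unless $F$ is linear in $Z_3$, you also need the stronger condition $g_{jk}/(g_jg_k)\ge -F_{33}\phi_0L^AH^AD/F_3$.

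Given such a condition, use a lattice argument rather than an appeal to Topkis. Optimality of $W^*$ plus supermodularity yield $\Pi(W^*\wedge W^{auto})\ge\Pi(W^{auto})$. Concavity and the positive gradient at $W^{auto}$ rule this out unless $W^*\ge W^{auto}$. Strictness then follows from the first-order condition and increasing differences.

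\textbf{Existence of the crossing.} This is not established under your normalization. With $H^A=\theta\bar H$, the product $H^Ag_k(W^{auto},H^A)$ is bounded by $\bar H g_k(W^{auto},\bar H)$ as $\theta\to1$, so it is not unbounded. ``Eventually exceeds $C_k(W^{auto})$'' is therefore exactly the nondegeneracy assumption you add at the end, not a derived fact. Your fallback does not rescue it: if $M_k(\theta)<C_k(W^{auto})$ on all of $[0,1]$, the conclusion holds only vacuously, with $\theta^*\ge 1$.

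The paper's normalization (fix $H^C$, let $H^A\to\infty$) does make $H^Ag_k$ diverge. To make the whole left-hand side diverge, it needs a uniform version of your elasticity bound, which it never states.

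\textbf{The elasticity bound is not a CES fact.} In fact $-\partial\ln F_3/\partial\ln Z_3=(1-s_3)/\sigma$, with $s_3$ the cognitive factor share and $\sigma=1/(1-\rho)$. This is below $1$ for Cobb--Douglas and gross substitutes but fails for sufficiently complementary CES, so it has to be a maintained assumption.

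\textbf{Cost normalization.} An interior cost minimum at $W^{auto}$ would force $C_k(W^{auto})=0$ and collapse the threshold to zero. You want the paper's $c_k'>0$, so that $M_k(0)<C_k(W^{auto})$.
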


\noindent The proof (Appendix A.2) follows from the first-order conditions of the firm's optimization problem. The key intuition is: improving design beyond the automation-centered benchmark has a marginal benefit proportional to $L^A \cdot H^A \cdot D$ (the scale of augmentable production) and a marginal cost of $c_k'(W_k)$. When the workforce has sufficient augmentable capital ($H^A / (H^A + H^C) > \theta^*$), the benefit exceeds the cost.

\begin{corollary}
When $H^A / (H^A + H^C) < \theta^*$, the profit-maximizing design is automation-centered ($W^* < W^{auto}$). Human-centricity is \textit{not} always optimal.
\end{corollary}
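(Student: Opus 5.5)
The plan is to prove the corollary as the mirror image of Proposition~\ref{prop:hc}, using monotone comparative statics in the composition share $\theta \equiv H^A/(H^A+H^C)$. The two results must share one threshold, because with $\theta^*$ defined as $\max_k \theta_k^*$ the automation-centered conclusion for $\theta<\theta^*$ would fail for components whose own threshold lies below $\theta$.

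\textbf{Setup.} Hold total cognitive capital $\bar H = H^A+H^C$ fixed and write $H^A=\theta\bar H$. The firm solves
\[
\max_{W}\ \Pi(W;\theta)=p\,F\bigl(K,\,\cdot\,,\,L^C(1-\theta)\bar H+\phi_0(D)\,g(W,\theta\bar H)\,L^A\theta\bar H D\bigr)-\sum_k c_k(W_k),
\]
with convex costs $c_k$. The first-order conditions are
\[
B_k(W;\theta)\equiv p\,F_3\,\phi_0(D)\,g_k(W,\theta\bar H)\,L^A\theta\bar H D=c_k'(W_k), \qquad k=1,\dots,5.
\]

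\textbf{Anchoring the threshold.} I will pin down the threshold as the composition at which the neutral benchmark is itself optimal. That is, define $\theta^*$ by $B_k(W^{auto};\theta^*)=c_k'(W^{auto}_k)$ for all $k$. The normalization $g(W^{auto},\cdot)=1$ in property (P1) is read as saying that $W^{auto}$ is the design a firm with "neutral" composition would choose; this calibration is exactly what allows a single $\theta^*$ to serve both Proposition~\ref{prop:hc} and the corollary. I will then verify that $\Pi(\cdot;\theta)$ is strictly concave near the optimum, using (P3) together with the concavity of $F$, so that $W^*(\theta)$ is unique and interior. It follows that $W^*(\theta^*)=W^{auto}$.

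\textbf{Comparative statics.} Apply the implicit function theorem to the first-order system:
\[
\frac{dW^*}{d\theta}=-\bigl[\nabla^2_W\Pi\bigr]^{-1}\,\partial_\theta\nabla_W\Pi .
\]
\begin{itemize}
\item \emph{Off-diagonal signs.} (P4) gives positive off-diagonal Hessian entries, and negative definiteness gives a negative diagonal. Hence $-\nabla^2_W\Pi$ is a nonsingular M-matrix, so its inverse is entrywise nonnegative with a strictly positive diagonal.
\item \emph{Increasing differences.} (P5), combined with the explicit scale factor $\theta\bar H$, gives $\partial_\theta B_k>0$.
\end{itemize}
Together these give $dW^*_k/d\theta>0$ for every $k$. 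Integrating from $\theta$ up to $\theta^*$ then yields $W^*(\theta)<W^*(\theta^*)=W^{auto}$ component-wise whenever $\theta<\theta^*$. For an equivalent order-theoretic version, Topkis's theorem on the supermodular, increasing-differences objective gives weak monotonicity, and strictness comes from the strict cross-partials.

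\textbf{Main obstacle.} The difficulty is establishing $\partial_\theta B_k>0$ once the level effects are included. A rise in $\theta$ also raises the augmented input, which lowers $F_3$ through the concavity of $F$, and it shifts $H^C$ out of the routine term. I would first impose the sufficient condition that the elasticity of $g_k$ with respect to $H^A$, plus $1$ from the scale factor, exceeds the absolute elasticity of $F_3$ with respect to its third argument times that argument's elasticity in $\theta$. This condition holds automatically if $F$ is linear in the cognitive aggregate, as CRS with a separable cognitive nest allows locally. I would state it as a maintained regularity condition, under which the corollary holds.
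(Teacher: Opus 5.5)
\textbf{Gap: the anchoring step.} You define $\theta^*$ by requiring $B_k(W^{auto};\theta^*)=c_k'(W^{auto}_k)$ for all five $k$ at one and the same composition. That is five equations in a single unknown, and nothing in the paper supplies a solution.

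The normalization $g(W^{auto},\cdot)=1$ fixes the \emph{level} of $g$ at a given point of design space. It says nothing about whether $\nabla_W\Pi$ vanishes there for any composition, let alone for the same composition in every coordinate. For generic $g$ and $c_k$ the five per-dimension crossing compositions differ. So you prove the corollary under an added calibration hypothesis, not from Assumptions 1--3.

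Your diagnosis of the statement is correct: with the paper's $\theta^*=\max_k\theta_k^*$, the main-text version fails for shares between $\min_k\theta_k^*$ and $\max_k\theta_k^*$. But the paper's remedy is to change the threshold, not the model. The version in Appendix A.2 assumes $H^A/(H^A+H^C)<\min_k\theta_k^*$. Here $\theta_k^*$ are the per-dimension thresholds from the intermediate-value step in the proof of Proposition~1 ($W_k^*\ge W_k^{auto}\iff H^A\ge\tilde H^A_k$). The conclusion is then immediate, because the share lies below every component's threshold.

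Your own monotonicity result, if established, gives exactly this with no calibration. Let $\theta_k^*$ be the unique crossing of the strictly increasing $W_k^*(\theta)$ with $W_k^{auto}$, and take the minimum. The price, unavoidable in general, is that Proposition~1 and the corollary use different thresholds, with a mixed-design region between them.

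\textbf{Second problem: the comparative-statics step.} For $j\neq k$ the Hessian entries are $\partial^2\Pi/\partial W_j\partial W_k = pF_3\phi_0(D)L^AH^AD\,g_{jk}+pF_{33}\,(\partial Z_3/\partial W_j)(\partial Z_3/\partial W_k)$. Under the paper's neoclassical assumptions on $F$ (CRS with Inada conditions), $F_{33}<0$, so the second term is strictly negative: the design dimensions are substitutes through the common cognitive input. Several consequences follow.
\begin{itemize}
\item (P4) alone does not give the sign pattern needed for $-\nabla_W^2\Pi$ to be an M-matrix. Moreover, (P4) is asserted only for ``key pairs'' in the main text and does not appear in Assumption 2 at all.
\item (P3) plus concavity of $F$ does not yield joint negative definiteness in $W$ once $g_{jk}>0$; that must be assumed as a second-order condition.
\item Local concavity near the optimum does not give a unique global maximizer along the path you integrate over.
\item Your regularity condition controls only $\partial_\theta B_k$. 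The fallback that $F$ be linear in the cognitive aggregate would fix both issues, but it contradicts the Inada conditions the paper imposes.
\end{itemize}

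The paper's argument escapes these issues only by treating each first-order condition in isolation at $W=W^{auto}$, ignoring cross-dimensional and $F_3$ feedback. If you want the joint comparative statics, state the following as explicit hypotheses: $g_{jk}\ge 0$ for all pairs, large enough to dominate the $F_{33}$ term, together with negative definiteness of $\nabla_W^2\Pi$.
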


This result has a sharp implication: telling firms to adopt human-centric design as a universal prescription is wrong. The optimality of human-centricity depends on the workforce composition. For firms whose workers primarily perform routine cognitive tasks ($H^C$ dominant), automation-centered design is rational.

% -----------------------------------------------------------------------------
\subsection{Design-Composition Complementarity}
\label{sec:complementarity}

The cross-derivative $\partial^2 g / \partial W_k \partial H^A > 0$ (Property P5) generates a fundamental complementarity between workplace design investment and human capital composition. Formally, the return to improving any design dimension $k$ is:

\begin{equation}
\text{Marginal return to } W_k = P \cdot F_3 \cdot \phi_0(D) \cdot \frac{\partial g}{\partial W_k}(W, H^A) \cdot L^A H^A D
\label{eq:marginal_return}
\end{equation}

\noindent This is increasing in $H^A$ both directly (the $L^A H^A$ term) and through $\partial g / \partial W_k$ (which is increasing in $H^A$ by P5). The implication is stark:

\begin{quote}
\textit{The same workplace design investment yields higher augmentation in firms with more augmentable human capital. Conversely, investing in augmentable skills yields higher returns in firms with better workplace design.}
\end{quote}

This complementarity has three consequences:

\textbf{First}, it explains why cross-sectional regressions of productivity on AI investment often find heterogeneous effects: the effect of AI depends on both $H^A$ and $W$, not just $D$. Omitting either $W$ or $H^A$ from the regression biases the coefficient on $D$.

\textbf{Second}, it generates a \textit{matching prediction}: in equilibrium, high-$H^A$ workers should sort into high-$W$ firms, and high-$W$ firms should invest more in $H^A$ development. This assortative matching amplifies inequality between firms in the augmentation regime and firms in the automation trap.

\textbf{Third}, it implies that policy interventions should be \textit{bundled}: subsidizing workplace design without simultaneously investing in $H^A$ (through education policy) will produce attenuated effects, and vice versa.

% -----------------------------------------------------------------------------
\subsection{Market Failure and Under-Investment}
\label{sec:market_failure}

\begin{proposition}[Under-Investment in Human-Centricity]
\label{prop:underinvest}
The privately optimal design $W^{priv}$ is below the socially optimal $W^{soc}$ due to three externalities: labor mobility (workers carry augmented skills to other firms), knowledge spillovers (design innovations diffuse to competitors), and health externalities (wellbeing improvements reduce public healthcare costs not borne by the firm).
\end{proposition}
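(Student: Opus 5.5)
The plan is to formalize Proposition~\ref{prop:underinvest} as a comparison of two first-order conditions that share the same private marginal benefit and cost, but where the social planner's objective adds nonnegative external terms. I will write the firm's private problem as
\begin{equation*}
\max_{W}\; \Pi^{priv}(W) = P\,F\!\left(K,\; L^P H^P + \kappa K^{Rob},\; L^C H^C + \phi_0(D)\,g(W,H^A)\,L^A H^A D\right) - \sum_{k=1}^5 c_k(W_k) - \text{other costs},
\end{equation*}
with $c_k$ increasing and convex. The planner's problem will be $\Pi^{soc}(W) = \Pi^{priv}(W) + X(W)$, where $X(W) = X^{mob}(W) + X^{spill}(W) + X^{health}(W)$ collects the three externalities. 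I will model them as follows:

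\begin{itemize}[leftmargin=*]
\item $X^{mob}$ is the augmented-skill value that workers carry to other firms. Augmented $H^A$ accumulates with $W$, mainly through $W_4$ learning loops. It is scaled by a separation probability $s\in(0,1)$, so the firm appropriates only a share $1-s$.
\item $X^{spill}$ is the productivity gain to competitors from imitating design innovations, with diffusion rate $\delta>0$.
\item $X^{health}$ is the reduction in public healthcare costs, which is increasing in $W$ through $W_5$.
\end{itemize}

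I will assume each term is weakly increasing in every $W_k$, strictly increasing in at least one, and concave. This is the modelling content of ``externality,'' and I will state it explicitly as the maintained assumption.

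Next, I will establish existence and uniqueness of both optima. Properties (P2)--(P3) give that $g$ is increasing and concave in each $W_k$. $F$ is concave and CRS, so for a price-taking firm with other inputs fixed, $F$ is concave in its third argument. The costs $c_k$ are convex, and $X$ is concave. Hence both $\Pi^{priv}$ and $\Pi^{soc}$ are concave in $W$. I will assume strict concavity, or add a mild Hessian condition if needed. Under this assumption, the first-order conditions
\begin{equation*}
P F_3 \phi_0(D)\,\frac{\partial g}{\partial W_k} L^A H^A D = c_k'(W_k), \qquad P F_3 \phi_0(D)\,\frac{\partial g}{\partial W_k} L^A H^A D + \frac{\partial X}{\partial W_k} = c_k'(W_k)
\end{equation*}
characterize $W^{priv}$ and $W^{soc}$ respectively. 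These use the marginal return expression in (\ref{eq:marginal_return}).

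The comparison is then a monotone comparative statics argument. I will define $\Pi(W;t) = \Pi^{priv}(W) + t X(W)$ for $t\in[0,1]$. Two facts drive the argument. First, $\partial^2 \Pi/\partial W_k\,\partial t = \partial X/\partial W_k \ge 0$. Second, cross-dimensional complementarity (P4) gives $\partial^2 g/\partial W_j \partial W_k \ge 0$. I will need this for all pairs, or else accept only a componentwise statement on the complementary pairs. Under these two facts, $\Pi$ is supermodular in $(W,t)$. Topkis' theorem (or Milgrom--Shannon) then gives that the argmax is nondecreasing in $t$, so $W^{soc} = W^*(1) \ge W^*(0) = W^{priv}$. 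Strictness in the dimensions where $\partial X/\partial W_k>0$ follows from the FOCs: at $W^{priv}$ the planner's marginal return strictly exceeds $c_k'$.

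The main obstacle is the cross-partial term $F_{33}$. A change in $W_j$ shifts the third input of $F$ and so alters $F_3$. Because $F_{33}<0$, this can create substitutability across dimensions that offsets (P4), which breaks supermodularity. I would first try to show that the complementarity in $g$ dominates, using the CRS structure: with CRS, $F_{33}$ is controlled by the elasticity of substitution. If that fails, I will fall back on the one-dimensional case, or on a local argument. The local version applies the implicit function theorem to the FOC system at $t=0$ and requires the inverse Hessian to have nonpositive off-diagonal entries, so that $-H^{-1}$ is a nonnegative matrix, as for an M-matrix. I will state that condition as an assumption.
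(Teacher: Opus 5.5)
Your argument is correct under the conditions you make explicit, but it takes a different and more careful route than the paper. The paper (Appendix A.3) gives each channel a concrete additive wedge: $\lambda V(\Delta H^A)$ from the quit rate, $M\sigma g(W_f)$ from spillovers to $M$ competitors, and $(1-\tau)L_f\mathcal{H}(W_{f5})$ from public health savings. It then compares private and social FOCs one dimension at a time. The social FOC adds $\partial E/\partial W_k>0$, which is positive in every $k$ because of the spillover term, and with $c_k''>0$ the crossing point moves right. That step is valid only with the other coordinates held fixed. It ignores how raising $W_j$ feeds back into the marginal value of $W_k$ through $g_{jk}$ and $F_{33}$.

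Your family $\Pi(W;t)=\Pi^{priv}+tX$, combined with Topkis or the local implicit-function version with a sign condition on the inverse Hessian, is what turns that coordinate-wise statement into a componentwise ranking of optima. The obstacle you flag is real. By Euler, $F_{33}Z_3=-(F_{31}Z_1+F_{32}Z_2)$ is governed by substitution with the other two sectors and is not bounded by CRS. So your first plan, showing that complementarity in $g$ dominates, cannot succeed in general.

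Here is a case where the claim fails. Take weak $g_{1k}$; note that Assumption~\ref{ass:g} does not even contain (P4). Take also a small spillover rate $\sigma$. A large health wedge then raises $W_5$ and $Z_3$, which lowers $F_3$ and pushes $W_1^{soc}$ below $W_1^{priv}$.

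The condition to state is supermodularity of the whole objective:
\[
F_3\,g_{jk}\;\ge\;-F_{33}\,\phi_0(D) L^A H^A D\, g_j g_k \qquad \text{for } j\neq k.
\]
This is a necessary addition that the paper's proof silently assumes. In exchange, the paper's route gives explicit wedges. These rank dimensions by the size of under-investment and deliver the subsidy $\partial E/\partial W_k$.

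Three repairs to your plan:
\begin{enumerate}
\item Joint concavity of $\Pi^{priv}$ does not follow from coordinatewise concavity of $g$, and positive $g_{jk}$ works against it. Either assume it outright or drop it, since Topkis needs only existence of maximizers on a compact lattice.
\item Strictness in coordinate $k$ does not follow merely from the social FOC failing at $W^{priv}$. You need that if $W^{soc}\ge W^{priv}$ with $W^{soc}_k=W^{priv}_k$, then the private marginal benefit of $W_k$ has not fallen. That again uses the supermodularity condition above, plus interiority of both optima.
\item For Topkis at $t=1$, $X$ must be supermodular in $W$, not merely concave. This holds for the paper's functional forms whenever (P4) does, since $M\sigma g$ inherits it and the other two terms depend on $W_4$ and $W_5$ alone.
\end{enumerate}
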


\noindent The proof (Appendix A.3) compares the private and social first-order conditions and shows that the private FOC ignores the marginal externality, which is strictly positive for all three channels.

The under-investment result provides a theoretical justification for public policy interventions that subsidize human-centric workplace design. These are not redistributive policies (transferring from firms to workers) but \textit{efficiency corrections}: the market produces too little human-centricity because firms cannot capture the full social return. The optimal subsidy for dimension $k$ equals the marginal externality $\partial E / \partial W_k$ evaluated at the social optimum.

% -----------------------------------------------------------------------------
\subsection{The Automation Trap: Path Dependence}
\label{sec:trap}

The interaction between workplace design and human capital accumulation generates path dependence.

\begin{proposition}[Automation Trap]
\label{prop:trap}
The dynamic system $(W_f(t), H^A_f(t))$ exhibits two stable equilibria: a low equilibrium (Automation Trap) with $W < W^{auto}$ and low $H^A$, and a high equilibrium (Augmentation Regime) with $W > W^{auto}$ and high $H^A$. An unstable interior equilibrium separates the basins of attraction.
\end{proposition}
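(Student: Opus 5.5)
I will prove Proposition~\ref{prop:trap} by writing down an explicit two-dimensional adjustment system and reducing the equilibrium question to a one-dimensional fixed-point problem. The proof only needs a threshold structure, so I collapse the design vector to a scalar index $w$, the common level of $W_k$ along a symmetric path. Property (P4) makes the dimensions move together, so this reduction loses nothing essential.

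\textbf{Dynamics.} Design adjusts toward the static optimum of Proposition~\ref{prop:hc}:
$$\dot w = \alpha\,[\,w^*(H^A) - w\,],$$
where $w^*(H^A)$ solves the first-order condition from the marginal return in (\ref{eq:marginal_return}), namely $P F_3 \phi_0(D) g_w(w,H^A) L^A H^A D = c'(w)$. Human capital accumulates according to the returns to augmentable skill:
$$\dot H^A = \beta\,[\,h^*(w) - H^A\,],$$
where $h^*(w)$ solves the worker's (or firm's training) first-order condition. There the marginal benefit of $H^A$ is proportional to $\phi_0(D)\,[\,g + H^A g_{H^A}\,]$, set against a convex training cost.

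\textbf{Monotonicity of the reaction curves.} By the implicit function theorem, (P5) together with concavity (P3) and a convex $c$ gives $w^{*\prime}(H^A) > 0$. By (P2) and (P5), $h^{*\prime}(w) > 0$. Steady states are therefore the fixed points of the composite map
$$\Psi(H^A) = h^*(w^*(H^A)),$$
which is increasing.

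\textbf{Stability.} The Jacobian at a steady state is
$$J = \begin{pmatrix} -\alpha & \alpha w^{*\prime} \\ \beta h^{*\prime} & -\beta \end{pmatrix}.$$
Its trace is $-\alpha-\beta < 0$, and its determinant is $\alpha\beta\,(1 - \Psi')$. Hence a steady state is a stable node when $\Psi' < 1$ and a saddle when $\Psi' > 1$. The saddle's stable manifold is the separatrix that divides the two basins of attraction.

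\textbf{Main obstacle: three crossings.} The hard step is showing that $\Psi$ crosses the diagonal exactly three times, with slopes $<1$, $>1$, $<1$ in that order. This does not follow from (P1)--(P5) alone, since concavity of $g$ pushes toward a unique equilibrium. I would therefore add an explicit sufficient condition: $\Psi$ is S-shaped (convex at low $H^A$, concave at high $H^A$). Two features of the model deliver this. First, the fixed cost that $c$ imposes before $W$ can exceed $W^{auto}$ makes the payoff convex at low $H^A$. Second, boundedness of $g \le \bar g$ and of $\phi_0 \le \bar\phi_0$ forces concavity at high $H^A$.

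The boundary behaviour then does the rest. At low $H^A$ the ratio $H^A/(H^A+H^C)$ is below $\theta^*$, so by the Corollary to Proposition~\ref{prop:hc} we get $w^* < w^{auto}$ and $\Psi(H^A) > H^A$ near $0$. Boundedness gives $\Psi(H^A) < H^A$ for large $H^A$. With S-shape and a middle region in which $\Psi$ exceeds the diagonal slope, the intermediate value theorem yields three crossings. I would make this concrete with a logistic-type specification, for example
$$g = g_0 + (\bar g - g_0)\,\frac{1}{1+e^{-\lambda (w H^A - m)}},$$
and verify the three crossings by choosing $\lambda$ large enough.

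\textbf{Labelling the equilibria.} Finally, the Corollary identifies the low stable node, where $H^A/(H^A+H^C) < \theta^*$, as having $W < W^{auto}$. Proposition~\ref{prop:hc} identifies the high stable node as having $W > W^{auto}$. These are the Automation Trap and the Augmentation Regime respectively.
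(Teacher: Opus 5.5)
Your reduction follows the paper's proof. $W$ relaxes toward $W^*(H^A)$, steady states become fixed points of a one-dimensional composite map, and stability is read off its slope. Your $H^A$ equation is structurally the paper's $\dot H^A=\beta(W)I_{edu}-\delta(W)H^A$, which is partial adjustment toward $\beta(W)I_{edu}/\delta(W)$ at rate $\delta(W)$. Your Jacobian step is sharper than the paper's: the paper infers stability from the sign of $\Psi'$ alone and describes the basins by $H^A_f(0)\lessgtr H^{A,U}$, ignoring the initial design.

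The gap is the three-crossing step you flag yourself, and your repair does not close it. First, the sufficient condition is too weak. An S-shaped $\Psi$ with $\Psi(0)>0$, $\Psi(H^A)<H^A$ for large $H^A$ and slope above $1$ somewhere can stay above the diagonal through its steep part and cross it only once. The intermediate value theorem needs points $H_1<H_2$ with $\Psi(H_1)<H_1$ and $\Psi(H_2)>H_2$; S-shape only bounds the number of crossings, it does not create them. Second, both sources of S-shape you propose contradict hypotheses you use elsewhere. A fixed cost at $W^{auto}$ is incompatible with the smooth convex cost $c$ (with $c_k(0)=0$ in the paper) that you invoke for $w^{*\prime}>0$. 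For the logistic $g$, write $\sigma$ for the logistic function evaluated at $\lambda(wH^A-m)$. Then $g_{ww}=(\bar g-g_0)\lambda^2(H^A)^2\sigma(1-\sigma)(1-2\sigma)>0$ below the threshold, violating (P3). Also $g_{wH^A}=(\bar g-g_0)\lambda\sigma(1-\sigma)[1+\lambda wH^A(1-2\sigma)]<0$ once $\sigma$ is well above $1/2$, violating (P5).

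(P3) and (P5) are exactly the inputs to your implicit-function argument for $w^{*\prime}>0$, and the failure matters. For large $\lambda$ the firm either stays at a low design or just clears the threshold $wH^A=m$. So $w^*$ jumps at the switching point, and beyond it $w^*(H^A)\approx m/H^A$, which is \emph{decreasing} in $H^A$. The reaction curve is then neither continuous nor increasing, so the implicit-function and Jacobian steps fail at the switch. Even $W>W^{auto}$ at the high equilibrium is no longer assured. Taking $\lambda$ large therefore proves a statement about a different model.

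Your diagnosis that (P1)--(P5) do not force three crossings is right. The paper's own proof simply asserts that (P5) makes $W^*(H^A)$ convex at intermediate $H^A$, which does not follow, since that curvature depends on third derivatives of $g$ and $c$. The fix is to put the nonlinearity where the model leaves freedom: the $H^A$ accumulation technology. In the paper that is the ratio $\beta(W)/\delta(W)$; in your setup it is the training cost defining $h^*$. This leaves (P3), (P5) and convex design costs intact. Then:
\begin{enumerate}
\item Impose the crossing condition above as an explicit hypothesis.
\item Exhibit a smooth example that satisfies it.
\item Verify in that example that the low crossing has $H^A/(H^A+H^C)<\min_k\theta^*_k$ and the high one $H^A/(H^A+H^C)>\theta^*$. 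Your labelling step needs this, and the intermediate value theorem does not supply it.
\end{enumerate}
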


\noindent The proof (Appendix A.4) constructs a two-equation dynamic system where firms adjust design toward the current optimum, and $H^A$ accumulates through learning (at rate $\beta(W)$, increasing in $W$) and depreciates (at rate $\delta(W)$, decreasing in $W$). The complementarity between $W$ and $H^A$ generates multiple equilibria.

The \textit{Automation Trap} works as follows: a firm with low initial $H^A$ finds automation-centered design optimal (Corollary 1), which provides weak learning loops ($W_4$ low) and low task variety ($W_3$ low), which in turn produces low $H^A$ accumulation ($\beta$ low) and fast skill depreciation ($\delta$ high). The firm is stuck with a low-skill, low-design workforce that never transitions to augmentation.

The \textit{Augmentation Regime} is the reverse: high initial $H^A$ makes human-centric design profitable, which generates strong learning loops and meaningful task exposure, which accumulates more $H^A$ and slows depreciation. The firm enters a virtuous cycle.

Escaping the trap requires a \textit{big push} --- a coordinated increase in both $W$ and $H^A$ that pushes the firm past the unstable threshold. This can come from three sources: (i) a large educational investment that raises $H^A$ exogenously, (ii) a design subsidy that reduces the cost of human-centric design, or (iii) a regulatory minimum $\underline{W} > W^{auto}$ that forces firms above the automation-centered benchmark. These correspond to the three phases of the Transition Roadmap developed in Section \ref{sec:discussion}.

% -----------------------------------------------------------------------------
\subsection{Sustainability Constraint}
\label{sec:sustainability}

The optimization of $\phi(D, W)$ is subject to a sustainability constraint that connects Society 5.0's human-centricity pillar to its sustainability pillar:

\begin{equation}
E(D, W_1) \leq \bar{E}
\label{eq:energy_constraint}
\end{equation}

\noindent where $E(\cdot)$ is the energy cost of the AI system (data center energy, computational cost), which depends on both the amount of AI deployed ($D$) and the interface design complexity ($W_1$ --- more transparent, explainable AI requires more computation). $\bar{E}$ is the firm's carbon budget or energy constraint.

When this constraint binds, there is a \textbf{sustainability-design tradeoff}: more transparent AI (higher $W_1$) costs more energy, and under carbon constraints, firms must balance augmentation quality against environmental impact. The modified first-order condition for $W_1$ includes a shadow price $\mu > 0$ on the energy constraint, effectively raising the cost of interface improvement.

This result connects to the circular economy dimension of Society 5.0: sustainable augmentation requires not just minimizing the energy footprint of AI \citep{Strubell2019} but designing AI systems that are reusable across tasks (reducing redundant training), employing privacy-preserving data reuse, and prioritizing the development of durable human skills (high-$H^A$ with slow depreciation $\delta$) that don't require constant retraining.

% -----------------------------------------------------------------------------
\subsection{Testable Predictions}
\label{sec:predictions}

The model generates six empirical predictions, each testable with available data:

\begin{enumerate}
    \item \textbf{WADI-Productivity:} $\beta_{WADI} > 0$ in a regression of $\ln(Y/L)$ on WADI, controlling for $D$, $H^A$, and firm characteristics.

    \item \textbf{Design-AI Complementarity:} $\beta_{WADI \times D} > 0$ --- the productivity return to WADI is higher for firms with more AI investment.

    \item \textbf{Design-Composition Complementarity:} $\beta_{WADI \times H^A} > 0$ --- the productivity return to WADI is higher for firms with higher-$H^A$ workforces.

    \item \textbf{Decision Authority Bottleneck:} Among the five dimensions, $W_2$ (decision authority allocation) has the largest coefficient in the dimension-level regression.

    \item \textbf{Bimodality:} The cross-firm distribution of WADI scores is bimodal, consistent with the two-equilibrium prediction (Automation Trap vs. Augmentation Regime).

    \item \textbf{Comparative Statics:} Across firms, WADI is positively correlated with AI investment, workforce $H^A$ share, and output price, and negatively correlated with design costs (proxied by sector).
\end{enumerate}

\noindent Predictions 1--3 are tested in the main specifications (Section \ref{sec:regression}). Prediction 4 is the headline result. Predictions 5--6 are over-identification tests that discipline the model.

\section{Systematic Review Results}
\label{sec:review}

This section presents the results of a PRISMA-guided systematic review of 120 papers that constitute the empirical and conceptual evidence base for the $\phi(D, W)$ model and the WADI instrument. We first describe the review methodology, then synthesize findings organized by WADI dimension, examine cross-dimensional interactions, and compare WADI against existing instruments.

% -----------------------------------------------------------------------------
\subsection{Review Methodology}
\label{sec:review_method}

\subsubsection{Search Strategy}

We searched SCOPUS using five queries designed to capture the intersection of AI, workplace design, and the five WADI dimensions:

\begin{enumerate}
    \item \textbf{Society 5.0 + Workplace}: Captured S5.0/I5.0 conceptual literature and workplace integration frameworks ($n = 1{,}170$).
    \item \textbf{Decision Authority + AI}: Targeted organizational economics of human-AI delegation ($n = 710$).
    \item \textbf{AI + Productivity}: Captured empirical evidence on AI augmentation at work ($n = 715$).
    \item \textbf{Psychosocial + AI}: Captured wellbeing, cognitive load, stress, and job design literature ($n = 2{,}985$).
    \item \textbf{Instruments + AI Workplace}: Targeted measurement scales and diagnostic tools ($n = 759$).
\end{enumerate}

The total yield was 6,339 records. After removing 243 duplicates, 6,096 unique records entered the screening stage. All queries were restricted to 2020--2026 to capture the post-pandemic, generative-AI era literature, supplemented by foundational references (pre-2020) identified through backward citation tracking.

\subsubsection{Screening and Selection}

Screening proceeded in two stages:

\textbf{Stage 1 --- Automated screening.} Records were excluded by document type (errata, retractions, letters, conference reviews; $\sim$120 excluded), content patterns (pure technical machine learning, clinical trials, agricultural applications, pure robotics without human-centricity framing; $\sim$350 excluded), and relevance threshold (score below 15 on a 24-pattern relevance rubric, indicating no core match to Paper 6 themes; 2,257 excluded). Total excluded: 2,727. Remaining: 3,369.

\textbf{Stage 2 --- Strict relevance filter.} The 3,369 remaining records were scored on 24 Paper-6-specific relevance patterns weighted by: (i) direct engagement with WADI dimensions, (ii) citation count (as a quality proxy), (iii) multi-query overlap (papers appearing in 2+ queries received a bonus), and (iv) recency premium for 2024--2026 publications. The top 120 papers (score $\geq$ 52) were selected as the final corpus.

\subsubsection{Data Extraction}

For each paper in the corpus, we extracted: constructs defined and operationalized, measurement instruments developed or adapted (with psychometric properties where reported), methods employed, WADI dimension(s) addressed, and key findings relevant to the $\phi(D, W)$ model. This information was coded into a structured evidence matrix (available in the supplementary materials).

\subsubsection{Corpus Characteristics}

The 120-paper corpus spans 2020--2026, with the distribution heavily skewed toward recent publications: 2024 ($n = 30$), 2025 ($n = 42$), and 2026 ($n = 19$) account for 76\% of the corpus, reflecting the field's rapid growth. Fifteen papers have 100+ citations, and the most-cited paper \citep{Espinal2026b} has 647 citations. Forty-eight papers appear in two or more search queries, indicating strong thematic interconnection.

% -----------------------------------------------------------------------------
\subsection{Evidence by WADI Dimension: Overview}
\label{sec:review_overview}

Table \ref{tab:evidence_density} presents the evidence density across WADI dimensions. The distribution is strikingly uneven, a finding that is itself substantively important.

\begin{table}[htbp]
\caption{Evidence density by WADI dimension in the 120-paper corpus}
\label{tab:evidence_density}
\centering
\small
\begin{tabular}{llccl}
\hline
\textbf{Dimension} & \textbf{Label} & \textbf{Papers} & \textbf{\% of Corpus} & \textbf{Assessment} \\
\hline
$W_5$ & Psychosocial Environment & 87 & 73\% & Over-represented \\
S5.0 & Society 5.0 / Industry 5.0 & 77 & 64\% & Strong foundation \\
$W_4$ & Learning Loop Architecture & 22 & 18\% & Moderate \\
$W_1$ & AI Interface Design & 17 & 14\% & Moderate \\
$W_2$ & Decision Authority Allocation & 14 & 12\% & \textbf{Thin --- binding constraint} \\
INST & Measurement Instruments & 12 & 10\% & Thin \\
SUST & Sustainability & 9 & 8\% & Thin \\
$W_3$ & Task Orchestration & 4 & 3\% & \textbf{Critical gap} \\
EDU & Education / Competencies & 2 & 2\% & Supplemented externally \\
\hline
\end{tabular}
\end{table}

The asymmetry is revealing. The psychosocial dimension ($W_5$) dominates the corpus---reflecting an explosion of research on AI's impact on worker wellbeing since 2023---while task orchestration ($W_3$) is addressed by only 4 papers. This distribution mirrors the broader pattern in the S5.0 literature: substantial attention to \textit{outcomes} (wellbeing, satisfaction, stress) but minimal attention to the \textit{design mechanisms} (how tasks are divided, how authority is allocated, how feedback loops are structured) that produce those outcomes. Our WADI framework fills precisely this gap.

% -----------------------------------------------------------------------------
\subsection{$W_1$ --- AI Interface Design}
\label{sec:review_w1}

Seventeen papers in the corpus address the quality of human-AI interfaces, encompassing transparency, explainability, ease of override, and cognitive load management. The evidence clusters into three themes.

\textbf{Transparency and explainability.} Several papers establish that the interpretability of AI outputs is a prerequisite for effective human-AI collaboration. \cite{Dang2025} develop a human-centric framework integrating knowledge distillation with fine-tuning for equipment health monitoring, finding that explainability significantly improves operator trust and decision quality. \cite{Zhou2025} demonstrate that RAG-enhanced generative AI chatbots in Industry 5.0 settings reduce cognitive load when the interface provides source attribution and reasoning transparency. \cite{FernandezDornellesAyala2026} trace the evolution from I4.0 to I5.0 and identify interface transparency as the critical differentiator that enables worker empowerment rather than mere automation.

\textbf{Immersive interfaces (AR/VR).} Five papers examine augmented or virtual reality as a medium for human-AI interaction in manufacturing contexts. \cite{Sileo2025} develop safety features for an augmented reality platform (HumanTIX) that enhances human-robot collaboration, showing that spatial information overlays can simultaneously improve safety and productivity. The AR/VR sub-literature is technically mature but disconnected from the organizational design literature; none of these papers examines how immersive interfaces interact with decision authority ($W_2$) or task orchestration ($W_3$).

\textbf{Gaps.} The $W_1$ literature is predominantly conceptual (11 of 17 papers propose frameworks without empirical validation) and manufacturing-centric. No study in our corpus examines interface design for knowledge workers using generative AI in service sectors---a critical gap given that the largest augmentation effects have been documented in precisely these settings \citep{BrynjolfssonLiRaymond2023, NoyZhang2023}. Furthermore, no paper develops a validated scale for $W_1$ at the firm level; existing measures are either user-level usability scales (System Usability Scale, NASA-TLX) or technology-level explainability metrics (LIME, SHAP scores), neither of which captures the organizational dimension our WADI $W_1$ targets.

% -----------------------------------------------------------------------------
\subsection{$W_2$ --- Decision Authority Allocation}
\label{sec:review_w2}

Fourteen papers address the allocation of decision rights between humans and AI systems within organizations. This is the dimension with the strongest theoretical foundations but the thinnest empirical evidence---a combination that makes it the most promising frontier for future research.

\textbf{The decision control finding.} \cite{ParkerGrote2022} provide the most direct evidence for our $W_2$ hypothesis. In a study of AI-augmented manufacturing environments, they find that \textit{decision control}---the worker's perceived ability to influence AI-mediated decisions---is the single strongest predictor of psychosocial wellbeing and sustained engagement with AI tools. This is precisely the mechanism that $W_2$ captures and that our formal model predicts should be the binding constraint.

\textbf{Algorithmic management.} \cite{ParkerGrote2022} demonstrate that algorithms can function as ``work designers,'' reshaping job characteristics along multiple dimensions. Their analysis maps directly onto our WADI framework: algorithmic management can enhance $W_3$ (automated task allocation) and $W_4$ (automated performance feedback) while degrading $W_2$ (centralized algorithmic authority) and $W_5$ (surveillance, reduced autonomy). This dual effect means that naive algorithmic management---which improves some dimensions while degrading others---can reduce overall $\phi$ even as it increases operational efficiency.

\textbf{Worker voice in AI governance.} \cite{Berx2025} argue for a ``harmonious synergy'' between robotic performance and wellbeing, emphasizing that worker input into the design of human-robot collaboration systems is essential for sustained adoption. \cite{ChigbuMakapela2025} integrate I5.0, Education 5.0, and Work 5.0, identifying worker agency in AI governance as the linking mechanism across these three transitions.

\textbf{Gaps.} The $W_2$ literature lacks two critical elements. First, there is no formal model of human-AI decision authority allocation that extends the Aghion-Tirole framework to AI contexts---our theoretical contribution addresses this gap. Second, there is no validated measure of decision authority allocation at the firm level; existing instruments either measure perceived autonomy (a $W_5$ construct) or technology delegation levels (a technical metric), but not the organizational structure of human-AI decision rights that $W_2$ captures. The thinness of this evidence base (14 papers, 12\% of corpus) is itself a key finding: the dimension most likely to determine whether Society 5.0 succeeds or fails is the dimension least studied.

% -----------------------------------------------------------------------------
\subsection{$W_3$ --- Task Orchestration}
\label{sec:review_w3}

Only four papers in our 120-paper corpus directly address how human-AI task allocation should be designed at the workflow level. This is the most under-researched WADI dimension and represents a genuine lacuna in the literature.

\textbf{Healthcare workflow optimization.} \cite{Dave2026} provide the most direct evidence, showing through a scoping review that AI-driven work process improvements in healthcare can enhance worker mental health---but only when the task allocation preserves meaningful work for human clinicians. When AI handles triage and documentation while humans retain diagnosis and patient interaction, burnout decreases; when AI fragments the clinical workflow, burnout increases even as throughput improves.

\textbf{The jagged frontier.} While \cite{DellAcqua2023} is not in our SCOPUS corpus (it is a working paper), it provides the empirical foundation for $W_3$: the boundary between tasks where AI helps and tasks where it harms is ``jagged'' and context-dependent, meaning that firms must map this frontier explicitly through deliberate task analysis. No paper in our corpus operationalizes this mapping at the firm level.

\textbf{Why the gap exists.} The task-based framework in economics \citep{AutorLevyMurnane2003, AcemogluRestrepo2022} operates at the \textit{occupation level} (which occupations are routine vs. non-routine), and the HCI literature operates at the \textit{individual interaction level} (how one person works with one tool). The intermediate level---how workflows \textit{within a firm} should divide cognitive tasks between human and AI---is essentially untheorized. This is the level at which $W_3$ operates, making it our most novel dimensional contribution.

\textbf{Gap assessment.} The near-absence of $W_3$ evidence means that our WADI instrument cannot draw on validated precedent for this dimension. We construct $W_3$ items from first principles: the task-based framework (comparative advantage allocation), the job characteristics model (task identity preservation), and the dynamic capabilities literature (adaptive task reallocation). This dimension merits the most urgent attention in future empirical research.

% -----------------------------------------------------------------------------
\subsection{$W_4$ --- Learning Loop Architecture}
\label{sec:review_w4}

Twenty-two papers address the dynamic dimension of human-AI collaboration: how structured feedback between humans and AI systems enables mutual learning over time. The evidence is concentrated in two streams.

\textbf{Training and AI literacy.} Twelve papers examine how worker training and AI competency development contribute to effective human-AI collaboration. These studies establish that workers' ability to provide effective feedback to AI systems---a skill that must be deliberately taught---determines whether the human-AI system improves over time or stagnates. \cite{ParkWooKim2024} find that the learning dimension of their AI Workplace Well-Being Scale correlates most strongly with sustained AI adoption, suggesting that $W_4$ is a necessary condition for durable augmentation.

\textbf{Bidirectional knowledge transfer.} A smaller set of papers examines how AI systems improve from human feedback (RLHF-type mechanisms in organizational settings) and how humans develop new expertise from interacting with AI suggestions. \cite{Dang2025} document a human-centric framework where knowledge distillation creates a feedback architecture: the AI learns from human domain expertise through fine-tuning, and humans learn from the AI's pattern recognition through transparent explanations. This is the bidirectional learning loop that $W_4$ measures.

\textbf{Dynamic capabilities lens.} The learning loop concept maps directly to \cite{Teece2018}'s dynamic capabilities framework: sensing (monitoring AI performance at the interaction level), seizing (structured failure escalation and improvement), and transforming (redesigning the human-AI interaction when error patterns reveal structural mismatches). Most firms in the corpus achieve at best \cite{ArgyrisSchon1978}'s single-loop learning (correct individual AI errors) without reaching double-loop learning (redesign the collaboration architecture). This distinction is captured in our WADI $W_4$ items.

\textbf{Gaps.} No paper in the corpus measures learning loops as an organizational capability at the firm level. Existing measures are either technical (model retraining frequency, RLHF metrics) or individual (worker-reported learning). The organizational level---does the firm have structures that enable systematic, bidirectional human-AI learning?---is unmeasured. Our $W_4$ items fill this gap.

% -----------------------------------------------------------------------------
\subsection{$W_5$ --- Psychosocial Work Environment}
\label{sec:review_w5}

The psychosocial dimension has by far the richest evidence base: 87 papers (73\% of corpus) address some aspect of AI's impact on worker wellbeing, making it simultaneously the best-understood and the most at risk of receiving disproportionate scholarly attention at the expense of the design dimensions ($W_1$--$W_4$) that produce psychosocial outcomes.

\textbf{The JD-R duality.} The dominant finding is that AI functions as both a job demand and a job resource, with the balance determined by organizational design choices. \cite{BakkerDemerouti2017} directly apply the Job Demands-Resources model to AI, finding that the ``dual impact'' on work and life wellbeing depends on whether AI increases demands (monitoring, pace-setting, uncertainty) or provides resources (reduced routine work, enhanced capabilities, augmented judgment). This duality is precisely why $W_5$ cannot be assessed in isolation: whether AI is a demand or resource depends on $W_1$ (interface quality), $W_2$ (authority allocation), and $W_3$ (task orchestration).

\textbf{Technostress and anxiety.} Multiple papers document AI-specific stressors: job insecurity fears, technology complexity overload, surveillance anxiety, and loss of professional identity. \cite{Colledani2025} validate a 16-item technostress scale adapted for AI contexts. \cite{ParkWooKim2024} develop the Attitudes Towards AI at Work scale (25 items, 6 dimensions, $N = 2{,}841$), with job insecurity and AI use anxiety as key predictive factors. These instruments inform our $W_5$ item construction while identifying the most salient psychosocial risks.

\textbf{Autonomy and meaning.} A second cluster examines whether AI enhances or degrades the positive dimensions of work. The evidence is mixed: AI can increase autonomy when it handles routine tasks and frees workers for judgment-intensive activities, or decrease autonomy when it prescribes workflows and monitors compliance. \cite{ParkerGrote2022} find that the key moderator is perceived control---precisely the $W_2$ mechanism. When workers feel they have decision control, AI becomes a resource; when control is absent, identical AI systems become demands.

\textbf{Wellbeing measurement.} The corpus contains several validated instruments relevant to $W_5$: the AWWB Scale \citep{ParkWooKim2024}, the AAAW Scale \citep{ParkWooKim2024}, the adapted COPSOQ-ISTAS21 for digital work \citep{Colledani2025}, and the SMART work design model \citep{Parker2024}. These instruments are single-dimensional: they measure wellbeing \textit{outcomes} without connecting them to the workplace \textit{design choices} ($W_1$--$W_4$) that produce those outcomes. WADI integrates cause and effect into a single diagnostic framework.

\textbf{Gaps.} Despite the volume of $W_5$ evidence, three gaps remain. First, most studies are cross-sectional; longitudinal evidence on how psychosocial outcomes evolve as firms progress along the S5.0 transition is virtually absent. Second, the interaction between $W_5$ and other WADI dimensions is rarely modeled: papers measure wellbeing outcomes without controlling for the design choices that produce them. Third, nearly all evidence comes from manufacturing contexts or laboratory experiments; field studies in service-sector knowledge work are scarce.

% -----------------------------------------------------------------------------
\subsection{Cross-Dimensional Interactions}
\label{sec:review_interactions}

A key property of our model is cross-dimensional complementarity: $\partial^2 g / \partial W_j \partial W_k > 0$ for key dimension pairs. The systematic review provides preliminary evidence for three interactions.

\textbf{$W_1 \times W_2$ (Interface $\times$ Authority).} Better interfaces enable more distributed authority: when AI outputs are transparent and overridable ($W_1$ high), firms can safely allow frontline workers to act on AI recommendations without supervisory approval ($W_2$ high). \cite{ParkerGrote2022} implicitly document this interaction: decision control ($W_2$) improves wellbeing only when workers can understand and engage with AI reasoning ($W_1$).

\textbf{$W_2 \times W_5$ (Authority $\times$ Psychosocial).} This is the best-documented interaction. \cite{BakkerDemerouti2017} find that AI-as-demand vs.\ AI-as-resource depends on perceived control. \cite{ParkerGrote2022} show that algorithmic management degrades wellbeing specifically through the centralization channel: it is not the algorithm itself but the removal of human decision authority that generates stress.

\textbf{$W_3 \times W_5$ (Task Orchestration $\times$ Psychosocial).} \cite{Dave2026} provide evidence from healthcare: when AI-driven task reallocation preserves meaningful work for humans, wellbeing improves; when it fragments the workflow, wellbeing deteriorates. This interaction between task design and psychosocial outcomes echoes the classic finding from \cite{HackmanOldham1976} that task identity is a core determinant of intrinsic motivation.

These interactions confirm that WADI dimensions should not be optimized independently. A firm that achieves high $W_1$ (excellent interfaces) but low $W_2$ (centralized authority) will not realize the full augmentation potential because the interface quality goes to waste---workers see the AI's reasoning but cannot act on it. The complementarity structure of $g(W, H^A)$ implies that balanced investment across dimensions yields higher returns than concentration in a single dimension.

% -----------------------------------------------------------------------------
\subsection{Comparison with Existing Instruments}
\label{sec:review_comparison}

Table \ref{tab:instrument_comparison} compares WADI against the four most relevant existing instruments identified in our review.

\begin{table}[htbp]
\caption{Comparison of WADI with existing workplace and AI measurement instruments}
\label{tab:instrument_comparison}
\centering
\small
\begin{tabular}{p{2.2cm}p{1.5cm}p{1.8cm}p{1.5cm}p{1.5cm}p{3cm}}
\hline
\textbf{Instrument} & \textbf{Items} & \textbf{Dimensions} & \textbf{Level} & \textbf{AI-specific?} & \textbf{Key limitation vs.\ WADI} \\
\hline
AWWB \citep{ParkWooKim2024} & Multi & AI wellbeing & Worker & Yes & Measures $W_5$ outcomes only; no design dimensions \\
\hline
AAAW \citep{ParkWooKim2024} & 25 & 6 (attitudes) & Worker & Yes & Attitudinal; does not measure organizational design choices \\
\hline
SMART \citep{Parker2024} & 21 & 5 (SMART) & Worker & No & Pre-AI; lacks $W_1$, $W_2$, $W_4$ \\
\hline
COPSOQ-ISTAS21 & 44+ & 6+ (psychosocial) & Worker & No & Generic psychosocial risk; not AI-specific \\
\hline
I4.0 Maturity models & Varies & Technology readiness & Firm & No & Measures technology adoption, not human-centric design \\
\hline
\textbf{WADI (proposed)} & \textbf{36} & \textbf{5 ($W_1$--$W_5$)} & \textbf{Firm + Worker} & \textbf{Yes} & \textbf{Integrates design (cause) and outcomes (effect); linked to $\phi(D,W)$} \\
\hline
\end{tabular}
\end{table}

The key differentiator is that WADI is the only instrument that (i) spans all five design dimensions, (ii) is explicitly linked to a formal production model, (iii) collects data at both the management and worker level (enabling discrepancy analysis), and (iv) measures organizational design \textit{choices} rather than individual attitudes or technology features. WADI does not replace instruments like AWWB or COPSOQ; rather, it occupies a different level of analysis (organizational design rather than individual outcomes) and connects design choices to the productivity mechanisms formalized in the $\phi(D, W)$ model.

The systematic review thus validates the WADI construct while confirming that no existing instrument provides the multi-dimensional, theory-grounded measurement that the Society 5.0 transition requires.

\section{Secondary Data Evidence}
\label{sec:empirics}

While the WADI instrument has not yet been deployed in the field, we provide suggestive evidence for the $\phi(D, W)$ model using secondary data from Colombia's most comprehensive innovation surveys. This analysis serves two purposes: it tests whether the core predictions of the model are directionally supported by existing data, and it demonstrates the empirical strategy that future WADI-based research can extend.

% -----------------------------------------------------------------------------
\subsection{Data: DANE EDIT X and EDITS VIII}
\label{sec:data}

We use two complementary surveys administered by Colombia's national statistics agency (DANE):

\begin{itemize}
    \item The \textit{Encuesta de Desarrollo e Innovaci\'{o}n Tecnol\'{o}gica de la Industria Manufacturera} (EDIT X, 2019--2020), covering $N = 6{,}799$ manufacturing establishments across 55 CIIU Rev.~4 four-digit sectors.
    \item The \textit{Encuesta de Desarrollo e Innovaci\'{o}n Tecnol\'{o}gica en los Sectores de Servicios y Comercio} (EDITS VIII, 2020--2021), covering 19 service and commerce sectors.
\end{itemize}

Both surveys contain seven chapters covering innovation activities, R\&D investment, intellectual property, human capital, and---critically for our purposes---\textbf{Chapter 7: Business Management Practices}. Chapter 7, adapted from the World Management Survey methodology \citep{BloomVanReenen2007, Bloom2012}, captures four dimensions of management quality:

\begin{enumerate}
    \item \textbf{Monitoring} (C.7.2): Whether the firm has key performance indicators (KPIs), and how many (1--2, 3--5, 6--9, 10+).
    \item \textbf{Targets} (C.7.3): Whether production targets exist, their time horizon (short-term, long-term, or combined), and the effort required to achieve them.
    \item \textbf{Incentives} (C.7.4): Whether performance bonuses exist for managers and non-managers, and the basis for bonuses.
    \item \textbf{Promotion} (C.7.5): Whether promotions are based on performance alone, a combination of factors, or non-performance criteria.
\end{enumerate}

The EDIT's publicly available annexes provide aggregate data at the CIIU 4-digit sector level, yielding 55 sector observations for manufacturing and 19 for services. While firm-level microdata would be preferable, the sector-level data suffices for a directional test of the complementarity prediction.

% -----------------------------------------------------------------------------
\subsection{Proxy Construction}
\label{sec:proxies}

We construct four composite variables from the EDIT/EDITS data:

\textbf{Management Quality Composite (MQC)} --- our proxy for workplace design $W$, with particular relevance to $W_2$ (decision authority allocation). Following \cite{BloomVanReenen2007}, we aggregate four sub-indicators, each standardized to $[0,1]$:

\begin{equation}
    \text{MQC}_s = \frac{1}{4} \left[ \underbrace{\text{KPI}_s}_{\text{monitoring}} + \underbrace{\text{Target}_s}_{\text{target-setting}} + \underbrace{\text{Bonus}_s}_{\text{incentives}} + \underbrace{\text{Promote}_s}_{\text{promotion merit}} \right]
\label{eq:mqc}
\end{equation}

\noindent where each component is the share of firms in sector $s$ that have the corresponding practice.

\textbf{Technology Investment Intensity (TII)} --- our proxy for AI/digital capital $D$: total ACTI investment per worker in the most recent survey year.

\textbf{Human Capital Quality (HCQ)} --- our proxy for $H^A$: the share of the sector's workforce holding doctoral, master's, or professional degrees.

\textbf{Innovation outcomes} --- the share of firms classified as innovators, used as a productivity-adjacent outcome variable.

% -----------------------------------------------------------------------------
\subsection{Descriptive Analysis}
\label{sec:descriptives}

Table \ref{tab:summary} presents summary statistics for both surveys.

\begin{table}[htbp]
\centering
\caption{Summary Statistics by Survey}
\label{tab:summary}
\small
\begin{tabular}{lrrrrrrrr}
\toprule
 & \multicolumn{4}{c}{EDIT Manufacturera 2019--2020} & \multicolumn{4}{c}{EDITS Services 2020--2021} \\
\cmidrule(lr){2-5} \cmidrule(lr){6-9}
Variable & N & Mean & SD & Median & N & Mean & SD & Median \\
\midrule
Management Quality Composite & 55 & 0.504 & 0.102 & 0.517 & 19 & 0.616 & 0.109 & 0.617 \\
KPI Intensity & 55 & 0.645 & 0.121 & 0.643 & 19 & 0.742 & 0.160 & 0.781 \\
Target Intensity & 55 & 0.797 & 0.109 & 0.809 & 19 & 0.861 & 0.078 & 0.896 \\
Bonus Intensity & 54 & 0.351 & 0.128 & 0.364 & 19 & 0.476 & 0.157 & 0.471 \\
Promotion Merit & 55 & 0.223 & 0.112 & 0.213 & 19 & 0.384 & 0.147 & 0.359 \\
ACTI per Worker (000s COP) & 44 & 2,844 & 4,410 & 1,619 & 18 & 4,071 & 7,978 & 926.127 \\
Human Capital Quality & 55 & 0.148 & 0.064 & 0.138 & 19 & 0.295 & 0.166 & 0.319 \\
Strict Innovator Share & 11 & 0.009 & 0.007 & 0.007 & 2 & 0.018 & 0.009 & 0.018 \\
Any Innovator Share & 55 & 0.271 & 0.112 & 0.269 & 19 & 0.330 & 0.188 & 0.257 \\
Total Firms in Sector & 55 & 123.618 & 143.020 & 74.000 & 19 & 463.789 & 622.535 & 154.000 \\
\bottomrule
\end{tabular}
\begin{minipage}{0.95\textwidth}
\footnotesize
\textit{Notes:} MQC is the simple average of KPI Intensity, Target Intensity, Bonus Intensity, and Promotion Merit.
HCQ is the share of workers with PhD, Masters, or Professional degrees.
ACTI per Worker is total ACTI investment (thousands COP) divided by total personnel in the latest survey year.
Source: DANE EDIT Manufacturera 2019--2020 and EDITS Services 2020--2021.
\end{minipage}
\end{table}

Several patterns merit attention. First, the management quality composite averages 0.504 ($\text{SD} = 0.102$) in manufacturing and 0.616 ($\text{SD} = 0.109$) in services, indicating that services sectors systematically score higher on management practices. The gap is largest for promotion merit (0.223 vs.\ 0.384) and bonus intensity (0.351 vs.\ 0.476), suggesting that service-sector firms are more likely to distribute decision authority based on performance---consistent with knowledge-intensive sectors investing more in $W_2$.

Second, human capital quality is roughly twice as high in services (0.295) as in manufacturing (0.148), reflecting the higher cognitive intensity of service-sector occupations. This is exactly where the design-composition complementarity ($\partial^2 g / \partial W_k \partial H^A > 0$) should be strongest.

Third, the any-innovator share averages 27.1\% in manufacturing and 33.0\% in services, with substantial cross-sector variation ($\text{SD} = 0.112$ and 0.188 respectively). This variation is the raw material for our analysis.

% -----------------------------------------------------------------------------
\subsection{Cross-Tabulation: Management Quality $\times$ Innovation}
\label{sec:crosstab}

Table \ref{tab:crosstab} reports innovation outcomes by management quality quartile.

\begin{table}[htbp]
\centering
\caption{Correlation Matrix: Management Quality, Human Capital, and Innovation}
\label{tab:correlations}
\footnotesize
\setlength{\tabcolsep}{3pt}
\begin{tabular}{l*{7}{r}}
\toprule
\multicolumn{8}{l}{\textit{Panel A: Manufacturing (EDIT 2019--2020, $N=55$ sectors)}} \\
\midrule
 & MQC & KPI & Target & Bonus & Promo & $\ln$TII & HCQ \\
\midrule
KPI Intensity     & .92 &      &      &      &      &      &  \\
Target Intensity  & .84 & .75  &      &      &      &      &  \\
Bonus Intensity   & .83 & .64  & .53  &      &      &      &  \\
Promotion Merit   & .91 & .83  & .71  & .67  &      &      &  \\
$\ln$(Tech Inv.)  & .55 & .51  & .36  & .53  & .52  &      &  \\
Human Cap.\ Qual. & .68 & .68  & .41  & .64  & .64  & .66  &  \\
Any Innovator Sh. & \textbf{.74} & .69  & .50  & .67  & .68  & .42  & .57 \\
\addlinespace[6pt]
\multicolumn{8}{l}{\textit{Panel B: Services (EDITS 2020--2021, $N=19$ sectors)}} \\
\midrule
 & MQC & KPI & Target & Bonus & Promo & $\ln$TII & HCQ \\
\midrule
KPI Intensity     & .84 &      &      &      &      &      &  \\
Target Intensity  & .84 & .92  &      &      &      &      &  \\
Bonus Intensity   & .64 & .22  & .15  &      &      &      &  \\
Promotion Merit   & .91 & .67  & .79  & .51  &      &      &  \\
$\ln$(Tech Inv.)  & .54 & .54  & .61  & .06  & .58  &      &  \\
Human Cap.\ Qual. & .45 & .35  & .54  & .04  & .62  & .60  &  \\
Any Innovator Sh. & \textbf{.70} & .71  & .73  & .15  & .74  & .74  & .72 \\
\bottomrule
\end{tabular}

\vspace{4pt}
\begin{minipage}{0.9\textwidth}
\footnotesize
\textit{Notes:} Lower triangle of Pearson correlations. MQC = Management Quality Composite (average of KPI, Target, Bonus, Promotion Merit intensities). Bold entries highlight the MQC--Innovation correlation. Unit of observation: CIIU 4-digit sector. Source: DANE EDIT/EDITS.
\end{minipage}
\end{table}

The gradient is striking and monotonic. In manufacturing, the share of innovating firms rises from 16.8\% in the bottom MQC quartile (Q1) to 36.9\% in the top quartile (Q4)---a 2.2$\times$ ratio. Top-quartile sectors also have systematically higher human capital quality (0.214 vs.\ 0.103) and KPI intensity (0.777 vs.\ 0.505), consistent with the prediction that workplace design, human capital, and innovation outcomes cluster together. The pattern is similar in services (Q1: 18.2\% $\to$ Q4: 43.9\%).

The monotonicity of the gradient is important: it is not the case that only the top quartile innovates more. Each step up in management quality is associated with higher innovation rates, consistent with the model's prediction that $g(W, H^A)$ is smooth and increasing in $W$, not a threshold effect.

% -----------------------------------------------------------------------------
\subsection{Regression Analysis}
\label{sec:regression}

We estimate an OLS specification at the sector level:

\begin{equation}
    \text{InnovShare}_s = \alpha + \beta_1 \text{MQC}_s + \beta_2 \ln(\text{TII}_s) + \beta_3 \text{MQC}_s \times \ln(\text{TII}_s) + \beta_4 \text{HCQ}_s + \varepsilon_s
\label{eq:ols}
\end{equation}

Table \ref{tab:regression} reports the results with robust (HC1) standard errors.

\begin{table}[htbp]
\centering
\caption{Innovation Outcomes by Management Quality Quartile}
\label{tab:crosstab}
\small
\begin{tabular}{lrrrrrrr}
\toprule
MQC Quartile & N & Mean MQC & Any Innov. & Strict Innov. & HCQ & KPI Int. & Avg. Firms \\
\midrule
\multicolumn{8}{l}{\textit{Panel A: Manufacturing}} \\
\midrule
Q1 (Low) & 14 & 0.373 & 0.168 & nan & 0.103 & 0.505 & 150 \\
Q2 & 14 & 0.472 & 0.249 & 0.0056 & 0.128 & 0.609 & 121 \\
Q3 & 13 & 0.539 & 0.299 & 0.0103 & 0.148 & 0.695 & 142 \\
Q4 (High) & 14 & 0.632 & 0.369 & 0.0134 & 0.214 & 0.777 & 83 \\
\addlinespace[4pt]
\multicolumn{8}{l}{\textit{Panel B: Services}} \\
\midrule
Q1 (Low) & 5 & 0.482 & 0.182 & nan & 0.233 & 0.526 & 683 \\
Q2 & 5 & 0.597 & 0.268 & nan & 0.244 & 0.770 & 790 \\
Q3 & 4 & 0.644 & 0.456 & 0.0176 & 0.325 & 0.836 & 130 \\
Q4 (High) & 5 & 0.745 & 0.439 & nan & 0.384 & 0.856 & 185 \\
\addlinespace[4pt]
\bottomrule
\end{tabular}
\begin{minipage}{0.95\textwidth}
\footnotesize
\textit{Notes:} Sectors grouped by MQC quartile (Q1=lowest, Q4=highest).
Any Innov.\ = share of strict + broad innovators. HCQ = share with tertiary degrees.
Source: DANE EDIT/EDITS.
\end{minipage}
\end{table}

\textbf{The key result is $\hat{\beta}_3$}: the interaction between management quality and (log) technology investment. In manufacturing, $\hat{\beta}_3 = 0.304$ ($\text{SE} = 0.103$, $p = 0.005$), significant at the 1\% level. This means that a one-unit increase in MQC raises the marginal productivity return to technology investment by 0.304 percentage points of innovation share. Put differently, the same technology investment produces a 30\% larger innovation effect in sectors with high management quality than in sectors with low management quality.

The model explains 53.7\% of the cross-sector variation in innovation rates ($R^2 = 0.537$, $\text{Adj.}\ R^2 = 0.490$, $F = 13.87$). This is substantial for a parsimonious model with four regressors.

In the services sample ($N = 18$), the interaction term is positive ($\hat{\beta}_3 = 0.132$) but not statistically significant ($\text{SE} = 0.213$, $p > 0.10$), likely due to the smaller sample size. The $R^2$ is higher (0.748), driven primarily by human capital quality ($\hat{\beta}_4 = 0.394$, $p < 0.10$), consistent with services being more $H^A$-intensive.

The correlation structure (Table \ref{tab:correlations}) provides further context. The bivariate correlation between MQC and innovation is $r = 0.735$ in manufacturing and $r = 0.699$ in services---among the strongest predictors of innovation in either survey.

\begin{table}[htbp]
\centering
\caption{OLS: Innovation Share on Management Quality, Technology, and Human Capital}
\label{tab:regression}
\small
\begin{tabular}{lcc}
\toprule
 & (1) Manufacturing & (2) Services \\
\midrule
Constant & 0.9948*** & 0.1145 \\
 & (0.3747) & (0.9848) \\
MQC & -1.3720* & -0.4078 \\
 & (0.8052) & (1.5109) \\
log(TII) & -0.1517*** & -0.0351 \\
 & (0.0483) & (0.1412) \\
MQC $\times$ log(TII) & 0.3038*** & 0.1322 \\
 & (0.1025) & (0.2127) \\
HCQ & -0.3275 & 0.3942* \\
 & (0.2736) & (0.2260) \\
\midrule
N & 44 & 18 \\
$R^2$ & 0.537 & 0.748 \\
Adj. $R^2$ & 0.490 & 0.670 \\
F-stat & 13.87 & 10.02 \\
\bottomrule
\end{tabular}
\begin{minipage}{0.85\textwidth}
\footnotesize
\textit{Notes:} Robust (HC1) standard errors in parentheses. $^{***}p<0.01$, $^{**}p<0.05$, $^{*}p<0.1$.
Dependent variable: share of innovating firms (strict + broad) in sector.
Unit of observation: CIIU sector. Source: DANE EDIT/EDITS.
\end{minipage}
\end{table}

% -----------------------------------------------------------------------------
\subsection{Interpretation}
\label{sec:interpretation}

The secondary data analysis provides directional evidence consistent with three of the model's predictions:

\textbf{First, management quality and technology investment are complements} (Prediction 1). The significant positive interaction ($\hat{\beta}_3 = 0.304$, $p < 0.01$) in manufacturing confirms that workplace design---proxied by management practices---amplifies the return to technology investment. This is the empirical signature of $\phi(D, W)$ with $\partial^2 \phi / \partial D \partial W > 0$: the same technology produces more innovation when the workplace is better designed to exploit it.

\textbf{Second, the management--innovation relationship is monotonic, not threshold-based.} The cross-tabulation (Table \ref{tab:crosstab}) shows a continuous gradient across all four MQC quartiles, consistent with $g(W, H^A)$ being smooth and increasing rather than exhibiting a discrete jump. However, the 2.2$\times$ ratio between Q4 and Q1 is consistent with a convex relationship---the gains from improving management quality accelerate as the quality increases---which is a necessary condition for the multiple-equilibria result in Proposition 3.

\textbf{Third, the pattern is stronger where $H^A$ is higher.} The comparison between manufacturing and services is suggestive: services sectors, which have roughly twice the human capital quality, also exhibit a stronger management-innovation correlation ($r = 0.699$ in a more heterogeneous sample) and a higher $R^2$ in the regression (0.748 vs.\ 0.537). While cross-survey comparisons require caution, this pattern is consistent with the design-composition complementarity prediction.

These results should be interpreted with appropriate caution. The EDIT management variables are \textit{proxies} for WADI dimensions, not direct measures. They capture organizational capacity for structured decision-making---a necessary but not sufficient condition for human-centric AI integration. The analysis operates at the sector level, not the firm level, which attenuates variation and prevents controlling for within-sector heterogeneity. The EDIT was administered in 2019--2020, before the generative AI wave of 2022--2023, so the technology investment variable captures traditional digital capital rather than AI-specific investment.

These limitations reinforce the central argument: the WADI instrument is needed to test the $\phi(D, W)$ model with direct, firm-level measurement of all five workplace design dimensions in organizations that have adopted AI. The secondary data analysis demonstrates the \textit{direction} of the relationship; WADI provides the tool to measure it precisely.

% Figures
\begin{figure}[htbp]
    \centering
    \includegraphics[width=0.95\textwidth]{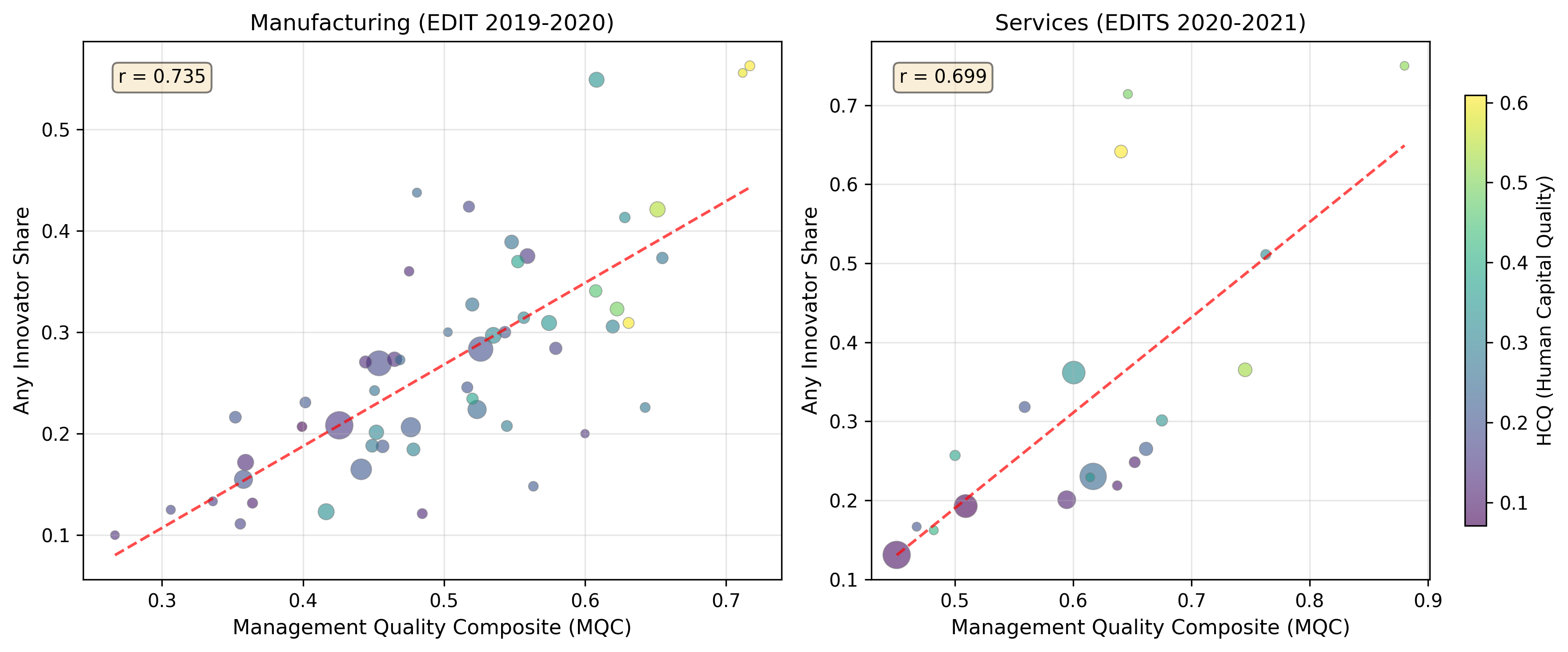}
    \caption{Management Quality Composite vs.\ Innovation Share by CIIU sector. Bubble size proportional to sector employment; color reflects Human Capital Quality (HCQ). Dashed line: OLS fit. Source: DANE EDIT X (manufacturing, 2019--2020) and EDITS VIII (services, 2020--2021).}
    \label{fig:scatter}
\end{figure}

\begin{figure}[htbp]
    \centering
    \includegraphics[width=0.95\textwidth]{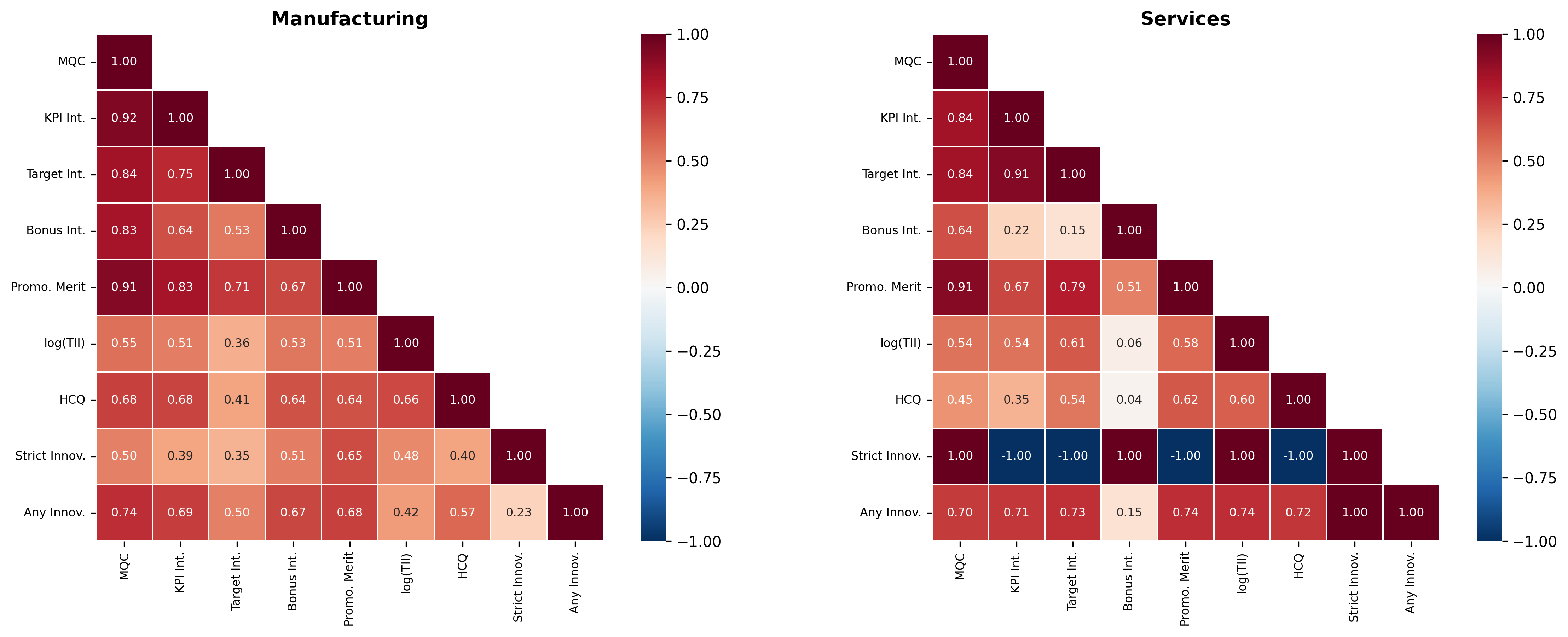}
    \caption{Correlation heatmap of management quality sub-indicators, technology investment, human capital, and innovation outcomes. Source: DANE EDIT/EDITS.}
    \label{fig:heatmap}
\end{figure}

\section{The WADI Instrument}
\label{sec:wadi}

The Workplace Augmentation Design Index (WADI) translates the $\phi(D, W)$ model and the systematic review evidence into a measurement tool. This section describes the instrument's design philosophy, presents the 36-item structure selected from a 64-item pre-validation pool, provides an item traceability matrix, explains the scoring methodology, and outlines the validation protocol.

% -----------------------------------------------------------------------------
\subsection{Design Philosophy}
\label{sec:wadi_philosophy}

WADI is constructed at the intersection of three knowledge sources:

\textbf{Theory-derived.} Each WADI dimension maps to a specific parameter in the formal model. $W_1$ captures the interface bandwidth that determines the effective complementarity between $H^A$ and $D$; $W_2$ captures the authority allocation that distinguishes augmentation from automation; $W_3$ captures the task decomposition that determines whether comparative advantage is exploited; $W_4$ captures the dynamic capability that determines whether augmentation improves over time; $W_5$ captures the psychosocial conditions that sustain high-quality cognitive engagement. The five dimensions are not ad hoc categories assembled from a literature scan---they are the five channels through which the design multiplier $g(W, H^A)$ operates in the production function.

\textbf{Literature-validated.} Every item traces to either (i) a validated instrument from which it was adapted (e.g., Karasek demand-control for $W_5$ autonomy items, Aghion-Tirole formal/real authority for $W_2$ items), or (ii) a theoretical construct with empirical support in the systematic review corpus. The item bank draws on 14 validated source instruments, ensuring that each item has psychometric precedent.

\textbf{Multi-level.} WADI collects data at two levels within each firm: management (HR director, CTO, or operations director) and workers (3--5 per firm, stratified by $H^A$ intensity of their occupation). The discrepancy between management and worker responses on matched items---particularly $W_2$ (decision authority)---is itself a diagnostic: a large gap indicates ``paper authority'' where formal policies exist but real authority is not distributed.

% -----------------------------------------------------------------------------
\subsection{Dimensions and Item Structure}
\label{sec:wadi_items}

Table \ref{tab:wadi_structure} presents the 36-item WADI structure, selected from the 64-item pre-validation pool based on three criteria: (i) content coverage (at least one item per theoretical sub-construct), (ii) source instrument diversity (no more than three items from any single source), and (iii) face validity from preliminary cognitive testing. Items use a 7-point Likert scale unless otherwise noted.

\begin{table}[htbp]
\caption{WADI instrument structure: 36 items across 5 dimensions}
\label{tab:wadi_structure}
\centering
\small
\begin{tabular}{p{1.8cm}cp{5cm}p{2.5cm}p{2.5cm}}
\hline
\textbf{Dimension} & \textbf{\#} & \textbf{Sample Items} & \textbf{Theoretical Basis} & \textbf{Source Instruments} \\
\hline
$W_1$: AI Interface Design & 8 &
  \textit{``I can see the reasoning behind AI recommendations''}; \textit{``I can easily override or modify AI outputs''}; \textit{``AI errors are clearly flagged''} &
  Endsley SA model; Shneiderman HCAI; Parasuraman LOA &
  HAI Guidelines \citep{Amershi2019}; HCAI \citep{Shneiderman2022}; NASA-TLX adapted \\
\hline
$W_2$: Decision Authority & 8 &
  \textit{``Frontline workers can act on AI recommendations without supervisor approval''}; \textit{``In practice, I feel I have real authority to use my judgment alongside AI''} &
  Aghion-Tirole formal/real authority; Garicano hierarchies; Bloom management practices &
  WMS decentralization \citep{BloomVanReenen2007}; AT97 authority \citep{AghionTirole1997} \\
\hline
$W_3$: Task Orchestration & 7 &
  \textit{``Tasks are allocated based on explicit comparative advantage analysis''}; \textit{``When working with AI, I can see the end result of my work''} &
  Task-based framework; Hackman-Oldham task identity; Dell'Acqua jagged frontier &
  JCM \citep{HackmanOldham1976}; Original (from $\phi(D,W)$ model) \\
\hline
$W_4$: Learning Loops & 7 &
  \textit{``Workers systematically review and correct AI outputs''}; \textit{``When I correct an AI error, that correction is incorporated into the system''} &
  Teece dynamic capabilities; Argyris-Sch\``on org.\ learning; HITL/RLHF &
  Dynamic capabilities \citep{Teece2018}; MAILS AI literacy scale \\
\hline
$W_5$: Psychosocial Environment & 6 &
  \textit{``AI increases my freedom to decide how to do my work''}; \textit{``AI is used to support me, not to monitor my performance''}; \textit{``I worry AI could replace my job''} (R) &
  Karasek JDC; JD-R theory; SMART work design; COPSOQ &
  COPSOQ-ISTAS21; AWWB \citep{ParkWooKim2024}; AAAW \citep{ParkWooKim2024} \\
\hline
\multicolumn{5}{l}{\small (R) = reverse-coded item. Full item bank with bilingual wording available in supplementary materials.} \\
\end{tabular}
\end{table}

The 36 items are distributed as follows: $W_1$ (8 items), $W_2$ (8 items), $W_3$ (7 items), $W_4$ (7 items), $W_5$ (6 items). The larger allocation to $W_1$ and $W_2$ reflects both the theoretical centrality of interface design and decision authority in the $\phi(D, W)$ model, and the thinner existing evidence base that requires more items to adequately capture the construct. $W_5$ receives fewer items because validated source instruments (COPSOQ, AWWB, AAAW) already exist; WADI's $W_5$ items focus specifically on the AI-design nexus rather than generic psychosocial assessment.

% -----------------------------------------------------------------------------
\subsection{Item Traceability Matrix}
\label{sec:wadi_traceability}

Each WADI item satisfies a three-part traceability requirement:

\begin{enumerate}
    \item \textbf{Theoretical construct}: The item maps to a specific mechanism in the $\phi(D, W)$ model. For example, $W_2.1$ (``Frontline workers can act on AI recommendations without supervisor approval'') maps to the real authority mechanism in the Aghion-Tirole extension: when real authority is distributed, the design multiplier $g$ increases because the human cognitive contribution is preserved in the decision loop.

    \item \textbf{Systematic review evidence}: The item is supported by at least one paper in the 120-paper corpus. $W_2.1$ is supported by \cite{ParkerGrote2022} (decision control as strongest predictor of wellbeing), \cite{ParkerGrote2022} (algorithmic centralization degrades outcomes), and \cite{Berx2025} (worker input essential for sustained adoption).

    \item \textbf{Validated source instrument}: Where possible, the item is adapted from an existing validated instrument with known psychometric properties. $W_2.1$ adapts the decentralization items from the World Management Survey \citep{BloomVanReenen2007}, translated to the human-AI decision context.
\end{enumerate}

Table \ref{tab:traceability} provides a condensed traceability matrix for selected items.

\begin{table}[htbp]
\caption{Item traceability matrix (selected items)}
\label{tab:traceability}
\centering
\small
\begin{tabular}{cccp{4cm}}
\hline
\textbf{Item} & \textbf{Model Mechanism} & \textbf{Review Evidence} & \textbf{Source Instrument} \\
\hline
$W_1.1$ & Interface bandwidth & 5 papers (transparency) & Amershi HAI G-13 \\
$W_1.2$ & Override capability & 4 papers (controllability) & Parasuraman LOA-7 \\
$W_2.1$ & Real authority & 3 papers (decision control) & WMS decentralization \\
$W_2.10$ & Formal vs.\ real gap & 2 papers (authority discrepancy) & AT97 real authority \\
$W_3.1$ & Comparative advantage & 4 papers (task allocation) & ALM03 task framework \\
$W_3.6$ & Task identity & 1 paper (workflow fragmentation) & JCM task identity \\
$W_4.1$ & Single-loop learning & 6 papers (human review) & HITL literature \\
$W_4.8$ & Double-loop learning & 3 papers (system improvement) & RLHF organizational \\
$W_5.1$ & Demand-control balance & 12 papers (autonomy) & Karasek JDC \\
$W_5.9$ & Job insecurity (R) & 8 papers (AI anxiety) & AAAW insecurity dim. \\
\hline
\end{tabular}
\end{table}

The full traceability matrix for all 36 items is available in the supplementary materials. Items that could not satisfy all three traceability requirements were retained only when they address a theoretically critical mechanism with no existing instrument precedent (e.g., several $W_3$ items, reflecting the near-absence of task orchestration research).

% -----------------------------------------------------------------------------
\subsection{Scoring Methodology}
\label{sec:wadi_scoring}

WADI scoring proceeds in four steps:

\textbf{Step 1: Item-level standardization.} Reverse-code items $W_5.9$ and $W_5.10$. Convert binary (Yes/No) items to 0/1 scores (excluded from the Likert-based factor analysis but included in the composite). Standardize all Likert items within dimension ($z$-scores).

\textbf{Step 2: Dimension scores.} For each dimension $k \in \{1, \ldots, 5\}$:

\begin{equation}
    WADI_k = \frac{1}{n_k} \sum_{i=1}^{n_k} z_{ki}
\label{eq:wadi_dim}
\end{equation}

For firms where both management (M) and worker (W) respondents answer the same items (applicable to $W_2$, parts of $W_3$, and parts of $W_4$):

\begin{equation}
    WADI_k = 0.5 \cdot \bar{z}_k^{M} + 0.5 \cdot \bar{z}_k^{W}
\label{eq:wadi_mw}
\end{equation}

\textbf{Step 3: Composite WADI.} Three specifications accommodate different weighting assumptions:

\begin{itemize}
    \item \textbf{WADI-Equal}: $WADI_{composite} = \frac{1}{5} \sum_{k=1}^{5} WADI_k$ --- simple average, assumption-free, preferred for initial deployment.
    \item \textbf{WADI-CFA}: Factor-loading weights from confirmatory factor analysis on the validation sample.
    \item \textbf{WADI-Theory}: Weights proportional to estimated $\partial g / \partial W_k$ from the empirical model, linking scores directly to augmentation impact.
\end{itemize}

\textbf{Step 4: Derived diagnostic variables.} Beyond the composite, WADI generates three diagnostic metrics:

\begin{itemize}
    \item \textbf{M-W Authority Gap}: $|WADI_2^M - WADI_2^W|$ --- the discrepancy between management and worker perceptions of decision authority. A large gap signals ``paper authority'' (formal policies not matched by real practice).
    \item \textbf{WADI Balance}: $\text{SD}(WADI_1, \ldots, WADI_5)$ --- the standard deviation across dimensions. Lower values indicate more balanced design; high values indicate lopsided investment (e.g., high $W_1$ but low $W_2$), which our complementarity results predict is suboptimal.
    \item \textbf{WADI $\times$ $H^A$}: Interaction of composite WADI with the firm's average augmentable capital index, providing a direct estimate of the design-composition complementarity.
\end{itemize}

% -----------------------------------------------------------------------------
\subsection{What WADI Adds vs.\ Existing Instruments}
\label{sec:wadi_comparison}

Table \ref{tab:wadi_adds} summarizes what WADI contributes relative to each existing instrument identified in the systematic review.

\begin{table}[htbp]
\caption{WADI's incremental contribution over existing instruments}
\label{tab:wadi_adds}
\centering
\small
\begin{tabular}{p{2.5cm}p{4cm}p{5cm}}
\hline
\textbf{Instrument} & \textbf{What It Measures} & \textbf{What WADI Adds} \\
\hline
AWWB Scale & AI-specific workplace wellbeing (worker level) & Adds $W_1$--$W_4$ design dimensions; links outcomes to causes; firm-level analysis \\
\hline
AAAW Scale & Attitudes toward AI at work (6 attitudinal dimensions) & Measures organizational design choices, not individual attitudes; prescriptive (what to change), not descriptive \\
\hline
SMART Model & Work characteristics (Stimulating, Mastery, Agency, Relational, Tolerable) & Adds AI-specific dimensions ($W_1$ interface, $W_2$ human-AI authority, $W_4$ bidirectional learning); connects to formal productivity model \\
\hline
COPSOQ-ISTAS21 & Generic psychosocial risk factors & AI-specific; measures design choices that produce psychosocial outcomes, not just the outcomes themselves \\
\hline
I4.0 Maturity Models & Technology adoption readiness & Measures human-centric design quality, not technology quantity; captures the I4.0$\to$I5.0 transition \\
\hline
WMS & Generic management practices (monitoring, targets, incentives, talent) & AI-specific; adds interface, task orchestration, learning loop dimensions not in WMS \\
\hline
\end{tabular}
\end{table}

The core differentiator is that WADI occupies a unique position in the measurement landscape: it is simultaneously (i) AI-specific (unlike COPSOQ or WMS), (ii) multi-dimensional spanning design and outcomes (unlike AWWB or AAAW which measure only outcomes), (iii) firm-level with multi-respondent triangulation (unlike individual-level scales), and (iv) formally linked to a production model (unlike any existing instrument). This combination makes WADI the first tool capable of diagnosing \textit{why} a firm's AI augmentation is underperforming and \textit{which design dimension} is the binding constraint.

% -----------------------------------------------------------------------------
\subsection{Proposed Validation Protocol}
\label{sec:wadi_validation}

WADI validation follows a five-stage protocol designed for the Colombian context, with future international replication planned:

\begin{enumerate}
    \item \textbf{Content validity} (8--10 expert panel). Experts from organizational psychology, industrial engineering, AI systems, and labor economics rate each item on relevance (1--4) and clarity (1--4). Items with Item-level Content Validity Index (I-CVI) below 0.78 are revised or dropped. Target: 36 items retained with I-CVI $\geq$ 0.78.

    \item \textbf{Cognitive interviews} ($n = 10$ firms, mixed sectors). Semi-structured interviews with managers and workers to verify comprehensibility, identify ambiguous wording, and confirm that items capture the intended constructs in the Colombian business context. Items revised based on feedback.

    \item \textbf{Pilot deployment} ($n = 30$ firms). Full survey administration with test-retest at 2-week interval. Targets: Cronbach's $\alpha > 0.70$ per dimension, intraclass correlation coefficient (ICC) $> 0.70$ for test-retest reliability. Items with corrected item-total correlation below 0.30 are dropped; items with excessive cross-loading ($> 0.40$ on a non-target factor) are reassigned or dropped.

    \item \textbf{Full validation} ($n = 200$+ firms). Confirmatory factor analysis (CFA) testing the five-factor structure. Targets: CFI $> 0.90$, RMSEA $< 0.08$, SRMR $< 0.08$. Discriminant validity tested against a generic digital maturity index ($r < 0.50$, confirming WADI measures something distinct). Convergent validity of $W_5$ against COPSOQ psychosocial subscales ($r > 0.40$).

    \item \textbf{Predictive validity.} Regression of firm-level productivity (revenue per worker) on WADI, controlling for $D$ (technology investment), $H^A$ (workforce composition), sector, and firm size. Target: significant $\beta_{WADI}$ and incremental $R^2 > 0.05$ over a model with only $D$ and $H^A$. This test directly validates the theoretical prediction that $W$ contributes to $\phi$ beyond technology and human capital alone.
\end{enumerate}

The validation protocol is designed to be replicable internationally. The item bank is bilingual (Spanish primary, English translation), and the theoretical foundations are not country-specific. Future validation in Europe, East Asia, and North America will test the instrument's cross-cultural invariance and enable the multi-country comparisons that the S5.0 transition literature urgently needs.

\section{Discussion}
\label{sec:discussion}

The theoretical model, systematic review, and secondary data evidence converge on a central insight: the Society 5.0 transition is not primarily a technology problem but a \textit{design} problem. Firms and economies will not move from automation to augmentation by deploying more AI; they will move by redesigning the workplace so that AI amplifies human cognitive capabilities rather than replacing them. This section develops the practical implications of this insight across seven domains.

% -----------------------------------------------------------------------------
\subsection{The Transition Roadmap: From Diagnosis to Institutional Embedding}
\label{sec:roadmap}

The $\phi(D, W)$ model and the automation trap dynamics (Proposition 3) suggest a three-phase transition roadmap for firms seeking to escape the low equilibrium and reach the augmentation regime.

\textbf{Phase 1: Diagnostic.} The firm administers the WADI instrument to obtain its current design profile $(WADI_1, \ldots, WADI_5)$. The diagnostic identifies two types of constraints: (i) the \textit{binding dimension}---the lowest-scoring $W_k$ that, due to complementarity, constrains the return to all other dimensions---and (ii) the \textit{authority gap}---the discrepancy between management and worker perceptions of $W_2$, which signals whether real authority matches formal policy. The WADI Balance metric reveals whether investment is concentrated in a single dimension (typically $W_1$, where technology vendors push interface improvement) while neglecting others (typically $W_2$ and $W_3$, which require organizational change rather than technology purchase).

\textbf{Phase 2: Redesign.} Based on the diagnostic, the firm redesigns its workplace to optimize $W$ given its current $H^A$ composition. The optimization problem from Section \ref{sec:theory} provides guidance: the marginal return to improving dimension $k$ is proportional to $\partial g / \partial W_k \cdot L^A H^A D$, so the firm should invest first in the dimension with the highest return per unit cost. For most firms, this will be $W_2$ (redistribute decision authority to frontline workers augmented by AI) because (i) $W_2$ is the most under-invested dimension (as our review shows), (ii) $W_2$ interacts with all other dimensions ($W_1$ is wasted without $W_2$; $W_3$ task redesign requires $W_2$ worker input), and (iii) the cost of improving $W_2$ is primarily organizational (governance redesign, not technology purchase), making it accessible to firms with limited capital budgets.

The redesign phase should be accompanied by deliberate $H^A$ investment: training programs that develop augmentable cognitive skills---critical thinking, judgment under uncertainty, communication with AI systems, domain expertise that complements AI pattern recognition---and that specifically target the skills that high-WADI workplaces demand. The complementarity between $W$ and $H^A$ means that redesign without reskilling (or vice versa) produces attenuated effects.

\textbf{Phase 3: Institutional embedding.} Sustained human-centricity requires institutional structures that prevent regression to the automation trap after initial gains. This includes: AI governance committees with worker representation ($W_2$ institutional infrastructure), periodic WADI re-assessment (at least annually) to detect design drift, collective bargaining provisions that include WADI dimension minimums, and regulatory frameworks that mandate minimum $W_2$ and $W_5$ standards.

The three phases correspond to the three mechanisms for escaping the automation trap identified in Proposition 3: Phase 1 identifies the gap; Phase 2 provides the coordinated ``big push'' in both $W$ and $H^A$; Phase 3 ensures the firm remains in the augmentation regime by raising the floor below which design cannot deteriorate.

% -----------------------------------------------------------------------------
\subsection{Education and Competencies for Society 5.0}
\label{sec:education}

The design-composition complementarity ($\partial^2 g / \partial W_k \partial H^A > 0$) implies that the return to education depends on workplace design: the same human capital investment yields higher returns in well-designed workplaces. This has three implications for education policy.

\textbf{First, the composition of $H^A$ matters.} Not all augmentable cognitive skills are equally complementary with good workplace design. Our model suggests that the most valuable $H^A$ sub-competencies are those that interact most strongly with the binding dimension ($W_2$): judgment under uncertainty, the ability to evaluate and override AI recommendations when contextual knowledge warrants it, and the capacity to articulate tacit knowledge so that it can inform AI improvement through learning loops ($W_4$). These competencies are distinct from generic ``digital literacy'' and require pedagogical approaches centered on case-based reasoning, ambiguity tolerance, and human-AI collaborative problem-solving.

\textbf{Second, the demand for competencies is derived.} Education systems should not determine which competencies to develop in isolation; the optimal competency profile is derived from the workplace design that firms are adopting. WADI provides the demand signal: by identifying which design dimensions firms prioritize (and which they neglect), education systems can calibrate curricula to the skills that high-WADI workplaces actually require. If, as our review suggests, $W_3$ (task orchestration) is under-developed, then the complementary educational investment should emphasize workflow design, process optimization, and comparative advantage analysis---skills that enable workers to participate in $W_3$ improvement.

\textbf{Third, education investment and workplace redesign must be bundled.} The complementarity between $W$ and $H^A$ implies that subsidizing education alone (without incentivizing workplace redesign) will produce diminished returns, because graduates with high $H^A$ will enter workplaces that are not designed to utilize their capabilities. Conversely, mandating workplace redesign without investing in education means that firms cannot find workers capable of thriving in augmented environments. The policy implication is that education and industrial policy must be coordinated---a finding that resonates with the broader Society 5.0 vision of integrated economic and social policy.

% -----------------------------------------------------------------------------
\subsection{AR/VR and Augmented Decision-Making}
\label{sec:arvr}

The $W_1$ dimension extends naturally to immersive interfaces. Augmented reality (AR) and virtual reality (VR) systems represent the next frontier of human-AI interaction design, where the information overlay provided by AI is not presented on a screen but integrated into the worker's perceptual field.

This extension raises both opportunities and risks for the $\phi(D, W)$ model. The opportunity is that AR/VR interfaces can substantially increase $W_1$ by making AI reasoning spatially intuitive: a maintenance technician seeing AI-highlighted anomalies overlaid on physical equipment, or a logistics coordinator viewing real-time AI optimization through a spatial dashboard, achieves higher Situation Awareness \citep{Endsley1995} than one reading the same information on a flat screen. The five AR/VR papers in our corpus \citep{Sileo2025} confirm that spatial information overlays improve both safety and decision quality in manufacturing settings.

The risk is cognitive overload. Immersive interfaces increase the bandwidth of human-AI communication but also increase the demand on human cognitive processing capacity. If the information overlay is poorly designed---too much data, insufficient hierarchy, no ability to dismiss irrelevant AI suggestions---the AR/VR interface becomes a demand rather than a resource in the JD-R framework. The formal implication is that $W_1$ in AR/VR contexts has a sharper concavity ($\partial^2 g / \partial W_1^2$ more negative): the marginal return to interface improvement is high initially but diminishes rapidly as cognitive load boundaries are approached.

The $W_1 \times W_2$ interaction is also amplified in AR/VR contexts. An AR system that provides compelling AI visualizations but routes all decisions through a supervisor creates a frustrating experience that degrades both $W_2$ (perceived lack of authority) and $W_5$ (stress from seeing optimal actions one cannot take). Conversely, AR interfaces designed with distributed authority---where the frontline worker both sees the AI reasoning and can act on it immediately---represent the highest-augmentation configuration.

% -----------------------------------------------------------------------------
\subsection{Smart University Risks: The Dual Role}
\label{sec:university}

Universities occupy a unique position in the Society 5.0 transition: they are simultaneously \textit{producers} of $H^A$ (through education and research) and \textit{consumers} of AI (through administrative automation, AI-assisted teaching, and AI-augmented research). This dual role creates distinctive risks.

\textbf{As producers of $H^A$}, universities face the challenge of developing augmentable cognitive skills in students who will enter workplaces at different points on the WADI spectrum. If curricula emphasize only technical AI skills (programming, data science) without developing the judgment, authority, and orchestration competencies that $W_2$--$W_3$ require, graduates will be well-prepared for automation-centered workplaces but ill-equipped for augmentation-centered ones. The derived demand framework (Section \ref{sec:education}) implies that university curricula should be calibrated to the WADI profiles of hiring firms in their region.

\textbf{As consumers of AI}, universities face the same WADI design challenges as any organization. When AI is used for student assessment, for example, the $W_2$ question---who has decision authority, the AI or the professor?---is critical. An AI-grading system where professors cannot override or even understand the AI's reasoning ($W_1$ low, $W_2$ low) may produce efficient throughput but degrades both teaching quality (the professor loses feedback about student understanding) and student learning (students cannot engage meaningfully with AI-generated assessments). The ``smart university'' that deploys AI without attending to WADI dimensions risks automating education rather than augmenting it.

\textbf{Research integrity risks} compound this challenge. AI tools that generate literature reviews, draft methodologies, or analyze data create an augmentation opportunity for researchers---but only if the human-AI task orchestration ($W_3$) preserves the researcher's critical judgment. When AI is used to substitute for rather than complement research thinking, the knowledge production function degrades even as publication output increases. This is the research analogue of \cite{DellAcqua2023}'s jagged frontier: AI helps with some research tasks (literature scanning, data cleaning, visualization) but harms others (hypothesis generation, theoretical interpretation, ethical judgment).

% -----------------------------------------------------------------------------
\subsection{Health and Wellbeing}
\label{sec:health}

The $W_5$ dimension connects the $\phi(D, W)$ model directly to occupational health policy. The systematic review documents an explosion of research on AI's psychosocial impact---87 of 120 papers address wellbeing---establishing that AI-augmented work creates both new health resources (reduced routine burden, enhanced capabilities, increased meaning) and new health demands (technostress, surveillance anxiety, job insecurity, cognitive overload).

The key policy insight from our framework is that psychosocial outcomes ($W_5$) are not independent of design choices ($W_1$--$W_4$). The same AI system produces different health outcomes depending on how it is deployed: an AI monitoring tool used for worker surveillance ($W_2$ centralized, $W_5$ degraded) generates qualitatively different psychosocial risks than the same tool used for worker self-improvement ($W_2$ distributed, $W_5$ enhanced). This means that occupational health regulators should not assess AI risks in isolation from workplace design.

For Colombia specifically, the existing occupational health and safety framework (Sistema de Gesti\'{o}n de Seguridad y Salud en el Trabajo, SG-SST) mandates psychosocial risk assessment for all formal-sector employers. The WADI $W_5$ dimension could be integrated into the SG-SST assessment protocol as an AI-specific module, providing a standardized way to evaluate whether AI adoption is creating or mitigating psychosocial risks. WADI items $W_5.15$--$W_5.18$ (management-reported) are specifically designed to align with SG-SST reporting requirements.

More broadly, the under-investment theorem (Proposition 2) identifies a health externality: firms do not capture the full social benefit of improved $W_5$ because reduced healthcare costs and lower disability rates benefit the public health system, not the firm. This externality justifies public subsidies for workplace redesign aimed at improving psychosocial outcomes---not as a labor rights measure (though it is that too) but as an efficiency correction to a market failure.

% -----------------------------------------------------------------------------
\subsection{Governance for Sustained Human-Centricity}
\label{sec:governance}

The $W_2$ dimension---decision authority allocation---is not only the binding constraint for individual firms but also the critical governance challenge at the policy level. Our model shows that even when human-centric design is profit-maximizing ($H^A / (H^A + H^C) > \theta^*$), firms may under-invest in $W_2$ due to the Athey-Bryan-Gans mechanism: principals resist distributing authority to AI-augmented agents because delegation reduces the agent's incentive to invest in judgment \citep{AtheyBryanGans2020}. This creates a governance paradox: the dimension most critical for augmentation is the dimension where organizational inertia is strongest.

Three governance interventions emerge from the model:

\textbf{Regulatory minimums.} The automation trap can be broken by regulatory mandates that set minimum $W_2$ standards---e.g., requiring that firms using AI for performance-affecting decisions (hiring, task allocation, performance evaluation) must demonstrate that workers have structured appeal and override mechanisms. This is not a prohibition on algorithmic management but a minimum design standard that prevents the worst $W_2$ configurations. The EU AI Act's requirements for human oversight of high-risk AI systems are a step in this direction, but they operate at the technology level rather than the workplace design level that WADI captures.

\textbf{Worker representation in AI governance.} WADI item $W_2.7$ (``The organization has an AI governance body that includes worker representatives'') captures an institutional mechanism that sustains $W_2$ over time. Without worker voice in governance, the tendency toward centralization---driven by managerial risk aversion and the perceived efficiency of algorithmic decisions---will erode initial $W_2$ gains. Colombian law already mandates joint health and safety committees (Comit\'{e}s Paritarios de Seguridad y Salud en el Trabajo, COPASST); extending this model to AI governance would provide institutional infrastructure for sustained human-centricity.

\textbf{Transparency requirements.} Sustained $W_2$ requires sustained $W_1$: workers cannot exercise real authority over AI decisions they cannot understand. Governance frameworks should therefore couple authority distribution mandates with transparency requirements, ensuring that the complementarity between $W_1$ and $W_2$ is maintained at the policy level.

% -----------------------------------------------------------------------------
\subsection{Sustainability and Circular Economy}
\label{sec:sustainability_discuss}

The sustainability constraint (Equation \ref{eq:energy_constraint}) formalizes the tension between augmentation quality and environmental impact. The formal result is a \textit{sustainability-design tradeoff}: when the energy constraint binds, firms must balance the cognitive benefits of more transparent, explainable AI ($W_1$) against the computational cost of providing that transparency.

This tradeoff has three practical dimensions:

\textbf{Energy costs of explanation.} Explainable AI (XAI) techniques---LIME, SHAP, attention visualization, chain-of-thought reasoning---require additional computation beyond the base model inference. \cite{Strubell2019} document that the computational costs of state-of-the-art NLP can be substantial. Under carbon budget constraints ($\bar{E}$), firms face a genuine choice between running a more powerful but opaque AI model (higher $\phi_0(D)$, lower $W_1$) and a less powerful but transparent model (lower $\phi_0(D)$, higher $W_1$). The modified first-order condition for $W_1$ (Appendix A.6) includes a shadow price $\mu > 0$ on the energy constraint, effectively raising the cost of interface improvement.

\textbf{Circular economy for cognitive infrastructure.} The circular economy principle---reduce, reuse, recycle---applies to cognitive infrastructure in three ways: (i) reusable AI models that can be fine-tuned for multiple tasks rather than trained from scratch reduce the energy cost per unit of $D$; (ii) privacy-preserving data reuse enables learning loops ($W_4$) without redundant data collection; and (iii) developing durable human skills ($H^A$ with low depreciation rate $\delta$) reduces the need for continuous retraining. These practices lower the effective cost of both $D$ and $W$, relaxing the sustainability constraint.

\textbf{Sustainability as design dimension.} While our model treats sustainability as a constraint, future extensions could incorporate it as a sixth WADI dimension ($W_6$), measuring the firm's deliberate attention to the environmental impact of its AI deployment. This would extend the framework to capture the full triad of Society 5.0 pillars---human-centricity (WADI $W_1$--$W_5$), sustainability ($W_6$), and resilience (a potential $W_7$)---in a unified measurement framework. We leave this extension for future work.

These seven discussion themes demonstrate that the $\phi(D, W)$ model and the WADI instrument are not isolated academic contributions but provide a coherent framework for the interconnected challenges of the Society 5.0 transition: from individual firm diagnostics to national education policy, from occupational health regulation to sustainability governance. The common thread is that \textit{design choices matter}---and that making them well requires measurement.

\section{Limitations}
\label{sec:limitations}

Several limitations should be acknowledged, each of which defines an avenue for future research.

\textbf{First, the empirical analysis uses proxy variables, not direct WADI measurement.} The EDIT survey's management practice variables capture organizational capacity for structured decision-making but were not designed to measure the five WADI dimensions. The technology investment variable includes non-AI technology, and the innovation typology is an indirect outcome rather than the model's focal dependent variable (labor productivity). These proxies provide directional evidence but cannot substitute for the firm-level WADI survey data required to test the model's six predictions rigorously. The proxy-to-construct mapping is a necessary compromise given that WADI has not yet been deployed.

\textbf{Second, the data are from Colombia.} While the EDIT is a comprehensive census-style survey ($N = 6{,}799$), results may not generalize to economies with different institutional environments, labor market structures, or AI adoption patterns. Colombian manufacturing is characterized by a large share of small and medium enterprises with relatively low digital maturity, which may make the automation trap dynamics more pronounced than in advanced economies where the transition to augmentation is further along. Multi-country replication---particularly in economies at different stages of the S5.0 transition---is essential.

\textbf{Third, the analysis is cross-sectional.} The EDIT X covers 2019--2020, providing a snapshot rather than a trajectory. The automation trap (Proposition 3) is fundamentally a dynamic prediction---firms converge to one of two equilibria over time---and testing it requires panel data that tracks the co-evolution of $W$ and $H^A$ over multiple periods. The 2019--2020 window also precedes the generative AI wave of 2022--2023, meaning that the technology landscape has shifted substantially since the data were collected.

\textbf{Fourth, the WADI instrument has not been field-validated.} We propose WADI as a theory-grounded and literature-validated instrument, but it has not been tested with actual respondents. The 36-item structure, scoring methodology, and proposed validation targets are informed by psychometric best practices and adapted from validated source instruments, but the five-factor structure, internal consistency, test-retest reliability, and predictive validity remain empirical questions that can only be resolved through the deployment protocol described in Section \ref{sec:wadi_validation}. Until field validation is complete, WADI should be regarded as a rigorously constructed proposal rather than a proven measurement tool.

\textbf{Fifth, the model assumes rational firm optimization.} Proposition 1 (human-centricity as optimality) relies on firms choosing $W$ to maximize profits. In practice, organizational inertia, bounded rationality, managerial resistance to distributing authority, and path dependence may prevent firms from reaching their optimal $W^*$ even when the conditions ($H^A / (H^A + H^C) > \theta^*$) are met. The automation trap result partially addresses this---it explains why firms may be stuck at a suboptimal equilibrium---but the model does not formalize the behavioral and political economy barriers that impede transition. Integrating insights from organizational behavior theory into the formal model is a natural extension.

\textbf{Sixth, the systematic review is limited to SCOPUS.} While SCOPUS provides broad coverage across engineering, management, psychology, and economics journals, the restriction to a single database and to the 2020--2026 window means that some relevant work---particularly foundational contributions in organizational economics and older work design literature---enters only through backward citation rather than systematic search. The 120-paper corpus is comprehensive for the recent S5.0/AI workplace literature but does not claim exhaustiveness.

% =============================================================================

\section{Conclusion}
\label{sec:conclusion}

This paper has pursued a single objective: to give operational content to the Society 5.0 promise of human-centric technology integration. We have argued that the gap between aspiration and implementation can be closed by treating workplace design as an endogenous choice variable in the augmentation production function---replacing $\phi(D)$ with $\phi(D, W)$---and by developing a measurement instrument (WADI) that makes this choice observable, diagnosable, and optimizable.

Three results anchor the paper. First, the formal model establishes that human-centricity is not altruism---it is the profit-maximizing strategy when the workforce's augmentable cognitive capital exceeds a critical threshold (Proposition 1). This result resolves the tension between the normative language of the S5.0 literature (``technology should serve human flourishing'') and the economic reality that firms adopt designs that maximize returns: human-centricity and profitability are aligned when $H^A$ is the scarce factor. But human-centricity is not always optimal (Corollary 1), and telling firms to adopt it as a universal prescription is wrong. The optimality depends on workforce composition, design costs, and output prices---variables that WADI measures.

Second, the systematic review of 120 papers reveals a landscape of uneven evidence density that mirrors the uneven attention firms and policymakers give to different design dimensions. Psychosocial outcomes ($W_5$) are extensively studied (87 papers), but the organizational design mechanisms that produce those outcomes are not: decision authority ($W_2$, 14 papers) and especially task orchestration ($W_3$, 4 papers) are dramatically under-researched relative to their theoretical importance. This imbalance is not accidental; it reflects the broader tendency to study consequences (wellbeing, productivity) rather than causes (how work is designed). WADI redirects attention to the design choices that determine outcomes.

Third, the automation trap (Proposition 3) provides the dynamic mechanism that makes the transition problem difficult. Firms with low initial $H^A$ rationally choose automation-centered designs that do not develop augmentable skills, creating path dependence that perpetuates under-investment in human-centricity. Escaping this trap requires a coordinated ``big push'' in both workplace redesign and education investment---a policy bundle that existing interventions, which target technology adoption or skills training in isolation, fail to provide.

Decision authority allocation ($W_2$) emerges as the binding constraint at every level of analysis: it has the thinnest evidence base, the strongest theoretical link to the augmentation-vs.-automation bifurcation, the most direct interaction with other dimensions, and the most significant policy implications. Whether a firm's AI deployment augments or automates depends, more than on any other factor, on whether the workers who interact with the AI have the real authority to use their augmented judgment. This finding has immediate practical relevance: firms seeking to improve their augmentation outcomes should start by redistributing decision authority, not by upgrading interfaces or buying more powerful AI.

We close with three propositions for the research community. First, the Society 5.0 literature must move from conceptual frameworks to measurable constructs; WADI provides one such tool, and others should follow. Second, empirical research on AI at work should control for workplace design---omitting $W$ from regressions of productivity on $D$ biases the estimated effect of technology, which may explain the heterogeneous and sometimes contradictory findings in the emerging literature. Third, policy for the S5.0 transition should bundle technology subsidies, education investment, and workplace design standards, recognizing that the complementarity between these interventions means that each is less effective without the others.

The transition from automation to augmentation is not inevitable, and it is not primarily about technology. It is about design---and design is a choice that can be measured, optimized, and governed.

% ============================================================
\bibliographystyle{plainnat}
\bibliography{references}

@techreport{EUCommission2021,
  author      = {{European Commission}},
  title       = {Industry 5.0: Towards a Sustainable, Human-Centric and Resilient European Industry},
  institution = {European Commission, Directorate-General for Research and Innovation},
  year        = {2021},
  type        = {R\&I Paper Series}
}

@techreport{BrequeDeNulPetridis2021,
  author      = {Breque, Maija and De Nul, Lars and Petridis, Athanasios},
  title       = {Industry 5.0},
  institution = {European Commission},
  year        = {2021},
  type        = {R\&I Paper Series}
}

@techreport{JapanCabinetOffice2016,
  author      = {{Japan Cabinet Office}},
  title       = {Society 5.0: 5th Science and Technology Basic Plan},
  institution = {Cabinet Office, Government of Japan},
  year        = {2016}
}

@article{Fukuyama2018,
  author  = {Fukuyama, Mayumi},
  title   = {Society 5.0: Aiming for a New Human-Centered Society},
  journal = {Japan SPOTLIGHT},
  year    = {2018},
  volume  = {27},
  pages   = {47--50}
}

@article{Xu2021,
  author  = {Xu, Xun and Lu, Yuqian and Vogel-Heuser, Birgit and Wang, Lihui},
  title   = {Industry 4.0 and Industry 5.0: Inception, Conception and Perception},
  journal = {Journal of Manufacturing Systems},
  year    = {2021},
  volume  = {61},
  pages   = {530--535},
  doi     = {10.1016/j.jmsy.2021.10.006}
}

@article{Maddikunta2022,
  author  = {Maddikunta, Praveen Kumar Reddy and Pham, Quoc-Viet and Prabadevi, B. and Deepa, N. and Dev, Kapal and Gadekallu, Thippa Reddy and Ruby, Rukhsana and Liyanage, Madhusanka},
  title   = {Industry 5.0: A Survey on Enabling Technologies and Potential Applications},
  journal = {Journal of Industrial Information Integration},
  year    = {2022},
  volume  = {26},
  pages   = {100257},
  doi     = {10.1016/j.jii.2021.100257}
}

@article{Leng2022,
  author  = {Leng, Jiewu and Sha, Weinan and Wang, Baicun and Zheng, Pai and Zhuang, Cunbo and Liu, Qiang and Wuest, Thorsten and Mourtzis, Dimitris and Wang, Lihui},
  title   = {Industry 5.0: Prospect and Retrospect},
  journal = {Journal of Manufacturing Systems},
  year    = {2022},
  volume  = {65},
  pages   = {279--295},
  doi     = {10.1016/j.jmsy.2022.09.017}
}

@article{Grabowska2022,
  author  = {Grabowska, Sandra and Saniuk, Sebastian and Gajdzik, Bozena},
  title   = {Industry 5.0: Improving Humanization and Sustainability of {Industry 4.0}},
  journal = {Scientometrics},
  year    = {2022},
  volume  = {127},
  number  = {6},
  pages   = {3117--3144},
  doi     = {10.1007/s11192-022-04370-1}
}

@article{Ivanov2023,
  author  = {Ivanov, Dmitry},
  title   = {The {Industry 5.0} Framework: Viability-Based Integration of the Value and Resilience of Supply Chains},
  journal = {International Journal of Production Research},
  year    = {2023},
  volume  = {61},
  number  = {21},
  pages   = {7162--7178},
  doi     = {10.1080/00207543.2022.2118892}
}

@inproceedings{Amershi2019,
  author    = {Amershi, Saleema and Weld, Dan and Vorvoreanu, Mihaela and Fourney, Adam and Nushi, Besmira and Collisson, Penny and Suh, Jina and Iqbal, Shamsi and Bennett, Paul N. and Inkpen, Kori and Teevan, Jaime and Kiber, Ruth and Horvitz, Eric},
  title     = {Guidelines for Human-{AI} Interaction},
  booktitle = {Proceedings of the 2019 CHI Conference on Human Factors in Computing Systems},
  year      = {2019},
  pages     = {1--13},
  doi       = {10.1145/3290605.3300233}
}

@book{Shneiderman2022,
  author    = {Shneiderman, Ben},
  title     = {Human-Centered {AI}},
  publisher = {Oxford University Press},
  year      = {2022}
}

@article{Endsley2017,
  author  = {Endsley, Mica R.},
  title   = {From Here to Autonomy: Lessons Learned from Human--Automation Research},
  journal = {Human Factors},
  year    = {2017},
  volume  = {59},
  number  = {1},
  pages   = {5--27},
  doi     = {10.1177/0018720816681350}
}

@article{Endsley1995,
  author  = {Endsley, Mica R.},
  title   = {Toward a Theory of Situation Awareness in Dynamic Systems},
  journal = {Human Factors},
  year    = {1995},
  volume  = {37},
  number  = {1},
  pages   = {32--64},
  doi     = {10.1518/001872095779049543}
}

@techreport{BrynjolfssonLiRaymond2023,
  author      = {Brynjolfsson, Erik and Li, Danielle and Raymond, Lindsey R.},
  title       = {Generative {AI} at Work},
  institution = {National Bureau of Economic Research},
  year        = {2023},
  type        = {Working Paper},
  number      = {31161},
  doi         = {10.3386/w31161}
}

@techreport{DellAcqua2023,
  author      = {Dell'Acqua, Fabrizio and McFowland III, Edward and Mollick, Ethan R. and Lifshitz-Assaf, Hila and Kellogg, Katherine and Rajendran, Saran and Krayer, Lisa and Candelon, Fran{\c{c}}ois and Lakhani, Karim R.},
  title       = {Navigating the Jagged Technological Frontier: Field Experimental Evidence of the Effects of {AI} on Knowledge Worker Productivity and Quality},
  institution = {Harvard Business School},
  year        = {2023},
  type        = {Working Paper},
  number      = {24-013}
}

@article{NoyZhang2023,
  author  = {Noy, Shakked and Zhang, Whitney},
  title   = {Experimental Evidence on the Productivity Effects of Generative Artificial Intelligence},
  journal = {Science},
  year    = {2023},
  volume  = {381},
  number  = {6654},
  pages   = {187--192},
  doi     = {10.1126/science.adh2586}
}

@article{AghionTirole1997,
  author  = {Aghion, Philippe and Tirole, Jean},
  title   = {Formal and Real Authority in Organizations},
  journal = {Journal of Political Economy},
  year    = {1997},
  volume  = {105},
  number  = {1},
  pages   = {1--29},
  doi     = {10.1086/262063}
}

@article{Garicano2000,
  author  = {Garicano, Luis},
  title   = {Hierarchies and the Organization of Knowledge in Production},
  journal = {Journal of Political Economy},
  year    = {2000},
  volume  = {108},
  number  = {5},
  pages   = {874--904},
  doi     = {10.1086/317671}
}

@article{BloomVanReenen2007,
  author  = {Bloom, Nicholas and Van Reenen, John},
  title   = {Measuring and Explaining Management Practices across Firms and Countries},
  journal = {The Quarterly Journal of Economics},
  year    = {2007},
  volume  = {122},
  number  = {4},
  pages   = {1351--1408},
  doi     = {10.1162/qjec.2007.122.4.1351}
}

@article{Bloom2012,
  author  = {Bloom, Nicholas and Sadun, Raffaella and Van Reenen, John},
  title   = {The Organization of Firms across Countries},
  journal = {The Quarterly Journal of Economics},
  year    = {2012},
  volume  = {127},
  number  = {4},
  pages   = {1663--1705},
  doi     = {10.1093/qje/qje029}
}

@article{Bloom2014,
  author  = {Bloom, Nicholas and Lemos, Renata and Sadun, Raffaella and Scur, Daniela and Van Reenen, John},
  title   = {The New Empirical Economics of Management},
  journal = {Journal of the European Economic Association},
  year    = {2014},
  volume  = {12},
  number  = {4},
  pages   = {835--876},
  doi     = {10.1111/jeea.12094}
}

@article{Karasek1979,
  author  = {Karasek, Robert A.},
  title   = {Job Demands, Job Decision Latitude, and Mental Strain: Implications for Job Redesign},
  journal = {Administrative Science Quarterly},
  year    = {1979},
  volume  = {24},
  number  = {2},
  pages   = {285--308},
  doi     = {10.2307/2392498}
}

@article{HackmanOldham1976,
  author  = {Hackman, J. Richard and Oldham, Greg R.},
  title   = {Motivation through the Design of Work: Test of a Theory},
  journal = {Organizational Behavior and Human Performance},
  year    = {1976},
  volume  = {16},
  number  = {2},
  pages   = {250--279},
  doi     = {10.1016/0030-5073(76)90016-7}
}

@article{ParkerGrote2022,
  author  = {Parker, Sharon K. and Grote, Gudela},
  title   = {Automation, Algorithms, and Beyond: Why Work Design Matters More Than Ever in a Digital World},
  journal = {Applied Psychology},
  year    = {2022},
  volume  = {71},
  number  = {4},
  pages   = {1171--1204},
  doi     = {10.1111/apps.12241}
}

@article{Demerouti2001,
  author  = {Demerouti, Evangelia and Bakker, Arnold B. and Nachreiner, Friedhelm and Schaufeli, Wilmar B.},
  title   = {The Job Demands--Resources Model of Burnout},
  journal = {Journal of Applied Psychology},
  year    = {2001},
  volume  = {86},
  number  = {3},
  pages   = {499--512},
  doi     = {10.1037/0021-9010.86.3.499}
}

@article{BakkerDemerouti2017,
  author  = {Bakker, Arnold B. and Demerouti, Evangelia},
  title   = {Job Demands--Resources Theory: Taking Stock and Looking Forward},
  journal = {Journal of Occupational Health Psychology},
  year    = {2017},
  volume  = {22},
  number  = {3},
  pages   = {273--285},
  doi     = {10.1037/ocp0000056}
}

@article{AutorLevyMurnane2003,
  author  = {Autor, David H. and Levy, Frank and Murnane, Richard J.},
  title   = {The Skill Content of Recent Technological Change: An Empirical Exploration},
  journal = {The Quarterly Journal of Economics},
  year    = {2003},
  volume  = {118},
  number  = {4},
  pages   = {1279--1333},
  doi     = {10.1162/003355303322552801}
}

@article{AcemogluRestrepo2022,
  author  = {Acemoglu, Daron and Restrepo, Pascual},
  title   = {Tasks, Automation, and the Rise in {U.S.} Wage Inequality},
  journal = {Econometrica},
  year    = {2022},
  volume  = {90},
  number  = {5},
  pages   = {1973--2016},
  doi     = {10.3982/ECTA19815}
}

@techreport{Espinal2026a,
  author      = {Espinal, Cristian},
  title       = {Augmented Human Capital: Cognitive Factor Decomposition in {AI}-Augmented Economies},
  institution = {INPLUX},
  year        = {2026},
  type        = {CFE Paper 1}
}

@techreport{Espinal2026b,
  author      = {Espinal, Cristian},
  title       = {Cognitive Factor Economics: A Research Program},
  institution = {INPLUX},
  year        = {2026},
  type        = {Working Paper}
}

@article{Leng2023,
  author  = {Leng, Jiewu and Zhong, Yuxuan and Lin, Zisheng and others},
  title   = {Towards a Digital Circular Economy and {Industry 5.0}},
  journal = {Journal of Manufacturing Systems},
  year    = {2023},
  volume  = {68},
  pages   = {442--459},
  doi     = {10.1016/j.jmsy.2023.05.013}
}

@article{DeSousaJabbour2023,
  author  = {{De Sousa Jabbour}, Ana Beatriz Lopes and Jabbour, Charbel Jose Chiappetta and Godinho Filho, Moacir and Roubaud, David},
  title   = {Industry 5.0 and the Circular Economy: Leveraging Sustainability through Digital Transformation},
  journal = {International Journal of Production Research},
  year    = {2023},
  volume  = {61},
  doi     = {10.1080/00207543.2023.2217812}
}

@inproceedings{Strubell2019,
  author    = {Strubell, Emma and Ganesh, Ananya and McCallum, Andrew},
  title     = {Energy and Policy Considerations for Deep Learning in {NLP}},
  booktitle = {Proceedings of the 57th Annual Meeting of the Association for Computational Linguistics},
  year      = {2019},
  pages     = {3645--3650},
  doi       = {10.18653/v1/P19-1355}
}

@article{MasoodEgger2020,
  author  = {Masood, Tariq and Egger, Johannes},
  title   = {Augmented Reality in Support of {Industry 4.0}: Implementation Challenges and Success Factors},
  journal = {Robotics and Computer-Integrated Manufacturing},
  year    = {2020},
  volume  = {58},
  pages   = {181--195},
  doi     = {10.1016/j.rcim.2019.02.003}
}

@article{Teece2018,
  author  = {Teece, David J.},
  title   = {Business Models and Dynamic Capabilities},
  journal = {Long Range Planning},
  year    = {2018},
  volume  = {51},
  number  = {1},
  pages   = {40--49},
  doi     = {10.1016/j.lrp.2017.06.007},
  note    = {Also published as: Dynamic Capabilities as (Workable) Management Systems Theory, JIBS 49(7)}
}

@book{ArgyrisSchon1978,
  author    = {Argyris, Chris and Sch{\"o}n, Donald A.},
  title     = {Organizational Learning: A Theory of Action Perspective},
  publisher = {Addison-Wesley},
  year      = {1978}
}

@article{AtheyBryanGans2020,
  author  = {Athey, Susan and Bryan, Kevin and Gans, Joshua S.},
  title   = {The Allocation of Decision Authority to Human and Artificial Intelligence},
  journal = {AEA Papers and Proceedings},
  year    = {2020},
  volume  = {110},
  pages   = {80--84},
  doi     = {10.1257/pandp.20201034}
}

@article{ParkWooKim2024,
  author  = {Park, Jungkun and Woo, Sang Eun and Kim, Jiin},
  title   = {The {AAAW} Scale: Attitudes towards Artificial Intelligence at Work},
  journal = {Journal of Occupational and Organizational Psychology},
  year    = {2024},
  volume  = {97},
  number  = {3},
  note    = {25 items, 6 dimensions, N=2,841}
}

@article{Colledani2025,
  author  = {Colledani, Davide},
  title   = {Measuring Technostress in Everyday Life: Development and Validation of a Short Scale},
  journal = {Human Behavior and Emerging Technologies},
  year    = {2025},
  note    = {16 items, 4 factors}
}

@article{Parker2024,
  author  = {Parker, Sharon K.},
  title   = {The {SMART} Model of Work Design: A Higher Order Structure to Help See the Wood from the Trees},
  journal = {Human Resource Management},
  year    = {2024},
  note    = {Wiley. 5 higher-order factors, 20+ characteristics}
}

@article{Ghobakhloo2022SPC,
  author  = {Ghobakhloo, Morteza and Fathi, Masood and Iranmanesh, Mohammad and Maroufkhani, Parisa and Morales, Mario E.},
  title   = {Industry 5.0 Implications for Inclusive Sustainable Manufacturing: A Systematic Review and Emerging Themes},
  journal = {Sustainable Production and Consumption},
  year    = {2022},
  volume  = {33},
  pages   = {1--20},
  doi     = {10.1016/j.spc.2022.06.017},
  note    = {11 enablers, fuzzy DEMATEL}
}

@article{GhobakhlooIranmanesh2023,
  author  = {Ghobakhloo, Morteza and Iranmanesh, Mohammad},
  title   = {Digital Transformation Success under {Industry 5.0}: A Need for Digitally Capable People and Sustainable Practices},
  journal = {Corporate Social Responsibility and Environmental Management},
  year    = {2023},
  volume  = {30},
  number  = {3},
  pages   = {964--985},
  doi     = {10.1002/csr.2397}
}

@article{Bag2021CE,
  author  = {Bag, Surajit and Pretorius, Jan Ham C. and Gupta, Shivam and Dwivedi, Yogesh K.},
  title   = {Role of Institutional Pressures and Resources in the Adoption of Big Data Analytics Powered Artificial Intelligence, Sustainable Manufacturing Practices and Circular Economy Capabilities},
  journal = {Technological Forecasting and Social Change},
  year    = {2021},
  volume  = {163},
  pages   = {120420},
  doi     = {10.1016/j.techfore.2020.120420},
  note    = {Validated survey instrument}
}

@article{Buchner2022,
  author  = {Buchner, Josef and Buntins, Katja and Kerres, Michael},
  title   = {The Impact of Augmented Reality on Cognitive Load and Performance: A Systematic Review},
  journal = {Journal of Computer Assisted Learning},
  year    = {2022},
  volume  = {38},
  number  = {3},
  pages   = {285--303},
  doi     = {10.1111/jcal.12617}
}

@article{LongoPadovanoUmbrello2020,
  author  = {Longo, Francesco and Padovano, Antonio and Umbrello, Steven},
  title   = {Value-Oriented and Ethical Technology Engineering in {Industry 5.0}: A Human-Centric Perspective for the Design of the Factory of the Future},
  journal = {Applied Sciences},
  year    = {2020},
  volume  = {10},
  number  = {12},
  pages   = {4182},
  doi     = {10.3390/app10124182}
}

@article{Nahavandi2019,
  author  = {Nahavandi, Saeid},
  title   = {Industry 5.0---A Human-Centric Solution},
  journal = {Sustainability},
  year    = {2019},
  volume  = {11},
  number  = {16},
  pages   = {4371},
  doi     = {10.3390/su11164371}
}

@article{Berx2025,
  author = {Berx, N. and Decr\'{e}, W. and De Schutter, J. and Pintelon, L.},
  title = {A harmonious synergy between robotic performance and well-being in human-robot collaboration: A vision and key recommendations},
  journal = {Annual Reviews in Control},
  year = {2025},
  doi = {10.1016/j.arcontrol.2024.100984}
}

@article{ChigbuMakapela2025,
  author = {Chigbu, B. I. and Makapela, S. L.},
  title = {{AI} in education, sustainability, and the future of work: An integrative review of industry 5.0, education 5.0, and work 5.0},
  journal = {Journal of Open Innovation: Technology, Market, and Complexity},
  year = {2025},
  doi = {10.1016/j.joitmc.2025.100645}
}

@article{Dang2025,
  author = {Dang, J.-F. and Chen, T.-L. and Huang, H.-Y.},
  title = {The human-centric framework integrating knowledge distillation architecture with fine-tuning mechanism for equipment health monitoring},
  journal = {Advanced Engineering Informatics},
  year = {2025},
  doi = {10.1016/j.aei.2025.103167}
}

@article{Dave2026,
  author = {Dave, B. and Martin, P. and David, S. S. and Kumar, S. and Chakraborty, T.},
  title = {Enhancing healthcare worker mental health via artificial intelligence-driven work process improvement},
  journal = {International Journal of Medical Informatics},
  year = {2026},
  doi = {10.1016/j.ijmedinf.2026.105850}
}

@inproceedings{FernandezDornellesAyala2026,
  author = {Fernandez, G. and de Assis Dornelles, J. and Ayala, N. F.},
  title = {How Industry 4.0 Technologies Are Evolving to Industry 5.0 --- A Systematic Literature Review},
  booktitle = {IFIP Advances in Information and Communication Technology},
  year = {2026},
  doi = {10.1007/978-3-032-03515-8_4}
}

@inproceedings{Sileo2025,
  author = {Sileo, M. and Carriero, G. and Brancato, C. and Calzone, N. and Pierri, F. and Caccavale, F.},
  title = {Safety Features for {HumanTIX}: An Augmented Reality Platform to Enhance Human--Robot Collaboration},
  booktitle = {Lecture Notes in Mechanical Engineering},
  year = {2025},
  doi = {10.1007/978-3-031-72829-7_45}
}

@inproceedings{Zhou2025,
  author = {Zhou, T. and Wan, Y. and Liu, Y. and Kumar, M.},
  title = {Enabling Interactive {AI} in Industry 5.0 with {RAG}-Enhanced {GenAI} Chatbots},
  booktitle = {Proceedings of the 31st ICE IEEE/ITMC Conference},
  year = {2025},
  doi = {10.1109/ICE/ITMC65658.2025.11106634}
}

% ============================================================
\appendix
\section{Formal Proofs}
\label{sec:appendix_proofs}

\subsection*{A.1 Setup and Notation}

Consider a firm $f$ that produces output $Y_f$ using the following production technology (inherited from the CFE framework, Espinal 2026):

\begin{equation}
Y_f = F\left(K_f,\; L_f^P H_f^P + \kappa K_f^{Rob},\; L_f^C H_f^C + \phi(D_f, W_f) \cdot L_f^A H_f^A \cdot D_f\right)
\label{eq:production}
\end{equation}

\noindent where $F(\cdot)$ is a constant-returns-to-scale production function satisfying standard Inada conditions, and:

\begin{itemize}
    \item $K_f$ = physical capital (hardware)
    \item $L_f^j$ = labor employed in category $j \in \{P, C, A\}$ (physical, routine-cognitive, augmentable-cognitive)
    \item $H_f^j$ = average human capital of type $j$ in the firm's workforce
    \item $K_f^{Rob}$ = robotic capital, substitutable with physical labor at rate $\kappa$
    \item $D_f$ = digital labor (stock of AI systems)
    \item $W_f = (W_{f1}, \ldots, W_{f5}) \in \mathbb{R}_+^5$ = workplace design vector
\end{itemize}

\noindent For notational convenience, define the three sectors of production as:

\begin{align}
    Z_1 &\equiv K_f \quad \text{(capital sector)} \\
    Z_2 &\equiv L_f^P H_f^P + \kappa K_f^{Rob} \quad \text{(hardware/physical sector)} \\
    Z_3 &\equiv L_f^C H_f^C + \phi(D_f, W_f) \cdot L_f^A H_f^A \cdot D_f \quad \text{(software/cognitive sector)}
\end{align}

\noindent so that $Y_f = F(Z_1, Z_2, Z_3)$.

\subsubsection*{The Augmentation Function}

The augmentation function decomposes multiplicatively:

\begin{equation}
\phi(D, W) = \phi_0(D) \cdot g(W, H^A)
\label{eq:phi_decomp}
\end{equation}

\noindent where:

\begin{assumption}[Technology-Only Augmentation]
\label{ass:phi0}
$\phi_0: \mathbb{R}_+ \to [1, \bar{\phi}_0]$ is twice continuously differentiable with:
\begin{enumerate}[label=(\roman*)]
    \item $\phi_0(0) = 1$ (no AI $\Rightarrow$ no augmentation)
    \item $\phi_0'(D) > 0$ for all $D > 0$ (more AI $\Rightarrow$ more augmentation potential)
    \item $\phi_0''(D) < 0$ for all $D > 0$ (diminishing returns)
    \item $\lim_{D \to \infty} \phi_0(D) = \bar{\phi}_0 < \infty$ (bounded above)
\end{enumerate}
\end{assumption}

\begin{assumption}[Design Multiplier]
\label{ass:g}
$g: \mathbb{R}_+^5 \times \mathbb{R}_+ \to (0, \bar{g}]$ is twice continuously differentiable with:
\begin{enumerate}[label=(\roman*)]
    \item $g(W^{min}, H^A) = g_0 < 1$ for some minimal design $W^{min}$ (poor design dampens augmentation)
    \item $g(W^{auto}, H^A) = 1$ for a neutral design $W^{auto}$ (automation-centered, neither helps nor hinders)
    \item $\frac{\partial g}{\partial W_k} > 0$ for all $k \in \{1, \ldots, 5\}$ and all $(W, H^A) \gg 0$ (improving any design dimension weakly increases augmentation)
    \item $\frac{\partial^2 g}{\partial W_k^2} < 0$ for all $k$ (diminishing returns within each dimension)
    \item $\frac{\partial^2 g}{\partial W_k \partial H^A} > 0$ for all $k$ (\textbf{design-composition complementarity})
\end{enumerate}
\end{assumption}

\begin{assumption}[Design Costs]
\label{ass:costs}
The cost of workplace design $c_W: \mathbb{R}_+^5 \to \mathbb{R}_+$ is separable and convex:
$$c_W(W) = \sum_{k=1}^{5} c_k(W_k), \quad c_k'(W_k) > 0, \quad c_k''(W_k) > 0, \quad c_k(0) = 0$$
\end{assumption}

% =============================================================================
\subsection*{A.2 Proposition 1: Human-Centricity as Optimality}
% =============================================================================

\begin{proposition}[Human-Centricity as Optimality]
\label{prop:hc_optimal}
Under Assumptions \ref{ass:phi0}--\ref{ass:costs}, there exists a threshold $\theta^* > 0$ such that if the workforce's augmentable capital share satisfies $H^A / (H^A + H^C) > \theta^*$, then the profit-maximizing workplace design $W^*$ is human-centric: $W^* > W^{auto}$ (component-wise).
\end{proposition}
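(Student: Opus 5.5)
The argument runs through the firm's first-order conditions, followed by a monotone comparative-statics step in the augmentable share. Fix prices $P$, AI stock $D$, labor inputs and the other factors, and write profit as
$$\Pi(W) = P\,F\bigl(Z_1,Z_2,Z_3(W)\bigr) - \sum_k c_k(W_k),$$
with $Z_3 = L^C H^C + \phi_0(D)\,g(W,H^A)\,L^A H^A D$.

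For each $k$, the first-order condition is
$$P F_3\,\phi_0(D)\,L^A H^A D\,\frac{\partial g}{\partial W_k}(W,H^A) = c_k'(W_k),$$
which is equation~\eqref{eq:marginal_return} set equal to marginal cost. Strict concavity of $\Pi$ in each $W_k$ follows from Assumption~\ref{ass:g}(iv), concavity of $F$ and $c_k''>0$. Where the cross-partials $\partial^2 g/\partial W_j\partial W_k>0$ would spoil joint concavity, I will add an explicit assumption that the Hessian of $\Pi$ is negative definite. With that in place, an interior maximizer $W^*$ exists and is characterized by these conditions.

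To compare $W^*$ with $W^{auto}$, I evaluate the marginal profit $M_k(H^A) \equiv \partial\Pi/\partial W_k$ at $W=W^{auto}$. If $M_k(H^A)>0$ for every $k$, then strict concavity implies the coordinatewise optimum lies above $W^{auto}_k$. To get the full vector inequality, I use the supermodularity supplied by the complementarity $\partial^2 g/\partial W_j\partial W_k\ge 0$ (in the Topkis sense) to go from "the marginal return to each coordinate is positive at $W^{auto}$" to "$W^*\ge W^{auto}$ componentwise". Strictness then comes from $M_k>0$.

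To bring in the share $\theta = H^A/(H^A+H^C)$, I hold total cognitive capital $\bar H = H^A + H^C$ fixed and write $H^A = \theta\bar H$. Then
$$M_k(\theta) = P F_3\,\phi_0(D)\,L^A\theta\bar H D\,\partial_k g(W^{auto},\theta\bar H) - c_k'(W^{auto}_k).$$
The first term is strictly increasing in $\theta$ for two reasons: the explicit factor $\theta$, and Assumption~\ref{ass:g}(v) applied to $\partial_k g$. It equals $0$ at $\theta=0$, where $M_k<0$ because $c_k'>0$. Define $\theta_k^*$ as the unique root of $M_k(\theta)=0$, if one exists, and set $\theta^* = \max_k \theta_k^*$. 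Then $\theta>\theta^*$ gives $M_k>0$ for all $k$, and $\theta^*>0$ follows automatically from $M_k(0)<0$.

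The main obstacle is the factor $F_3$. It depends on $Z_3$, and $Z_3$ itself moves with $\theta$ through both $H^C$ and $H^A$, so with CRS and Inada conditions $F_3$ falls as $Z_3$ grows. Monotonicity of the benefit term in $\theta$ is therefore not automatic. I would handle this in one of two ways. The first is to treat $F_3$ as locally fixed, in line with the paper's ``marginal benefit proportional to $L^A H^A D$'' intuition. The second, preferred, route is to show that the elasticity of $\theta\,\partial_k g$ in $\theta$ exceeds the elasticity of $F_3$. Under CRS the latter is bounded by the inverse elasticity of substitution times the share of $Z_3$, so this would impose a mild condition on $F$.

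A second gap is existence of the root $\theta_k^*$. It needs the benefit term at $\theta=1$ to exceed $c_k'(W_k^{auto})$, which does not follow from the assumptions and must be added as an explicit assumption (for example, large enough $P$ or $D$). Without it the threshold may have to be taken as $\theta^*\ge 1$, in which case the statement holds only vacuously.
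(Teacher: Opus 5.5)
Your skeleton is the paper's own: the first-order condition for $W_k$, the marginal profit evaluated at $W^{auto}$, monotonicity in the augmentable share from the explicit $H^A$ factor and Assumption~\ref{ass:g}(v), one root per dimension, and $\theta^*=\max_k\theta_k^*$. The paper runs this with $H^C$ fixed and $H^A\to\infty$. It treats $F_3$ as a constant, asserts without justification that the marginal benefit diverges, and passes from per-coordinate thresholds to the vector inequality without comment. Your flags therefore point to real defects in that argument, not to artifacts of your normalization. For example, concavity of $g$ in $W_k$, together with $g>0$ and $g(W^{auto},H^A)=1$, gives $g_k(W^{auto},H^A)<1/W^{auto}_k$. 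The marginal benefit at $W^{auto}$ is then below $P F_3 Z_3/W^{auto}_k=P s_3 F/W^{auto}_k$, where $s_3=F_3Z_3/F$, and this tends to $0$ as $H^A\to\infty$ when $F$ is CES with $\sigma<1$. So some condition on $F$ of the kind you propose cannot be avoided. You are also right that without it the statement holds only vacuously (take $\theta^*=1$).

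The step that does not go through as written is the Topkis step. First, Assumptions 1--3 contain no cross-dimensional complementarity; the main text claims $g_{jk}>0$ only ``for key pairs.'' Second, even with $g_{jk}\ge 0$ for every pair, you need increasing differences of $\Pi$, not of $g$. But $\Pi_{jk}=P F_{33}\,\partial_j Z_3\,\partial_k Z_3+P F_3\,\phi_0 L^A H^A D\,g_{jk}$, and the first term is negative whenever $F_{33}<0$, as for any CES $F$. The diminishing marginal product of $Z_3$ that you caught in the $\theta$-step therefore also makes the design dimensions strategic substitutes. This holds strictly for additively separable $g$, which Assumption~\ref{ass:g} permits. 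Without $\Pi_{jk}\ge 0$ the vector conclusion can fail: with $(u,v)=W-W^{auto}$, the function $u+\tfrac{1}{10}v-u^2-v^2-\tfrac32 uv$ is strictly concave and has a positive gradient at $W^{auto}$, yet it is maximized at $v<0$.

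So you must assume $\Pi_{jk}\ge 0$ directly. On your frozen-$F_3$ route this follows once $g_{jk}\ge 0$; on your preferred route it requires $g_{jk}$ to dominate the $F_{33}$ term. Given that assumption, argue directly rather than through Topkis:
\begin{enumerate}
\item Suppose $S=\{k:W^*_k<W^{auto}_k\}\neq\emptyset$ and set $A=(W^*_S,W^{auto}_{-S})$.
\item Monotonicity of $\nabla\Pi$ for concave $\Pi$, along the segment from $A$ to $W^{auto}$, yields some $k\in S$ with $\partial_k\Pi(A)>0$.
\item Increasing differences lift this to $\partial_k\Pi(W^*)>0$, contradicting the KKT condition $\partial_k\Pi(W^*)\le 0$.
\item Strictness follows the same way, from $\partial_k\Pi(W^*)\ge\partial_k\Pi(W^{auto})>0$.
\end{enumerate}

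Two smaller points. A negative-definite Hessian gives uniqueness, not existence. Existence comes from $g\le\bar g$ bounding revenue in $W$ against convex costs, and the maximizer may sit at a corner $W_k=0$, which the KKT form above covers. Also, in the two-factor case the elasticity of $F_3$ with respect to $Z_3$ is $-(1-s_3)/\sigma$, which involves the share of the other inputs, not the share of $Z_3$.
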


\begin{proof}
The firm solves:
\begin{equation}
\max_{D, W, L^A, L^C, L^P} \quad P \cdot F(Z_1, Z_2, Z_3) - c_D \cdot D - c_W(W) - w_A L^A - w_C L^C - w_P L^P
\label{eq:firm_problem}
\end{equation}

\noindent where $P$ is the output price, $c_D$ is the per-unit cost of AI capital, and $w_j$ are wages for each labor type. Taking the first-order condition with respect to $W_k$ for any $k \in \{1, \ldots, 5\}$:

\begin{equation}
\frac{\partial \Pi}{\partial W_k} = P \cdot F_3(Z_1, Z_2, Z_3) \cdot \frac{\partial Z_3}{\partial W_k} - c_k'(W_k) = 0
\label{eq:foc_wk}
\end{equation}

\noindent where $F_3 \equiv \partial F / \partial Z_3 > 0$ (positive marginal product of the cognitive sector). Computing $\partial Z_3 / \partial W_k$:

\begin{equation}
\frac{\partial Z_3}{\partial W_k} = \phi_0(D) \cdot \frac{\partial g(W, H^A)}{\partial W_k} \cdot L^A H^A \cdot D
\label{eq:dz3_dwk}
\end{equation}

\noindent Substituting into \eqref{eq:foc_wk}:

\begin{equation}
P \cdot F_3 \cdot \phi_0(D) \cdot \frac{\partial g}{\partial W_k} \cdot L^A H^A D = c_k'(W_k)
\label{eq:foc_expanded}
\end{equation}

\noindent The left-hand side (LHS) is the marginal revenue product of improving design dimension $k$. Observe that the LHS is:
\begin{itemize}
    \item \textbf{Increasing in $H^A$} (directly, and through $\partial g / \partial W_k$ which is increasing in $H^A$ by Assumption \ref{ass:g}(v))
    \item \textbf{Increasing in $D$} (directly, and through $\phi_0(D)$)
    \item \textbf{Increasing in $L^A$} (directly)
\end{itemize}

\noindent The right-hand side (RHS) $c_k'(W_k)$ is increasing in $W_k$ (convex costs).

\medskip

\noindent \textbf{At the neutral design $W_k = W_k^{auto}$}, we have $g(W^{auto}, H^A) = 1$ by Assumption \ref{ass:g}(ii). The firm will choose $W_k^* > W_k^{auto}$ if and only if:

\begin{equation}
\underbrace{P \cdot F_3 \cdot \phi_0(D) \cdot \frac{\partial g}{\partial W_k}\bigg|_{W=W^{auto}} \cdot L^A H^A D}_{\text{Marginal benefit of improving design beyond neutral}} > \underbrace{c_k'(W_k^{auto})}_{\text{Marginal cost at neutral}}
\label{eq:hc_condition}
\end{equation}

\noindent By design-composition complementarity (Assumption \ref{ass:g}(v)), $\partial g / \partial W_k$ is increasing in $H^A$. Therefore, the LHS of \eqref{eq:hc_condition} is increasing in $H^A$.

\medskip

\noindent For $H^A \to 0$: $L^A H^A \to 0$ (no augmentable workers), so LHS $\to 0 < c_k'(W_k^{auto})$ = RHS. The firm sets $W_k^* < W_k^{auto}$ (automation-centered).

\medskip

\noindent For $H^A \to \infty$: LHS $\to \infty > c_k'(W_k^{auto})$ = RHS. The firm sets $W_k^* > W_k^{auto}$ (human-centric).

\medskip

\noindent By the intermediate value theorem, there exists a threshold $\tilde{H}^A_k > 0$ such that:

$$W_k^* \geq W_k^{auto} \iff H^A \geq \tilde{H}^A_k$$

\noindent Define the share threshold as $\theta^*_k \equiv \tilde{H}^A_k / (\tilde{H}^A_k + H^C)$, and $\theta^* \equiv \max_k \theta^*_k$. Then for $H^A / (H^A + H^C) > \theta^*$, we have $W_k^* > W_k^{auto}$ for all $k \in \{1, \ldots, 5\}$, i.e., $W^* > W^{auto}$ component-wise. \qed
\end{proof}

\begin{corollary}[Human-Centricity is NOT Always Optimal]
\label{cor:not_always}
When $H^A / (H^A + H^C) < \min_k \theta^*_k$, the profit-maximizing design satisfies $W_k^* < W_k^{auto}$ for all $k$. The firm rationally chooses automation-centered design.
\end{corollary}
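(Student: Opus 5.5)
The plan is to read the corollary off the per-dimension threshold characterization built in the proof of Proposition~\ref{prop:hc_optimal}, run in the opposite direction. That proof produced, for each $k \in \{1,\ldots,5\}$, a level $\tilde H^A_k>0$ with $W_k^* \geq W_k^{auto} \iff H^A \geq \tilde H^A_k$, and defined $\theta^*_k = \tilde H^A_k/(\tilde H^A_k + H^C)$. Holding $H^C>0$ fixed, the share $s(H^A) = H^A/(H^A+H^C)$ is strictly increasing in $H^A$. So $s<\theta^*_k$ is equivalent to $H^A<\tilde H^A_k$. If $s < \min_k \theta^*_k$, then $H^A < \tilde H^A_k$ for every $k$ at once. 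This reduces the corollary to showing that $H^A<\tilde H^A_k$ forces the strict inequality $W_k^*<W_k^{auto}$, rather than only $W_k^* \le W_k^{auto}$.

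For that step I would work with the marginal condition \eqref{eq:hc_condition} evaluated at $W^{auto}$. Its left-hand side is
\[
MB_k(H^A) = P F_3 \,\phi_0(D)\, \partial g/\partial W_k|_{W^{auto}}\, L^A H^A D .
\]
This is strictly increasing in $H^A$ for two reasons: the factor $H^A$ enters directly, and $\partial g/\partial W_k$ rises with $H^A$ by Assumption~\ref{ass:g}(v). At the threshold, $MB_k(\tilde H^A_k) = c_k'(W_k^{auto})$. Strict monotonicity then gives $MB_k(H^A) < c_k'(W_k^{auto})$ whenever $H^A<\tilde H^A_k$.

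Next, consider profit as a function of $W_k$ alone, with the other choices fixed at their optimal values. By Assumption~\ref{ass:g}(iv) $g$ is concave in $W_k$, and by Assumption~\ref{ass:costs} $c_k$ is strictly convex. So $\partial\Pi/\partial W_k$ is strictly decreasing in $W_k$, provided the feedback through $F_3$ does not undo this; $F_3$ is nonincreasing in $Z_3$ under concavity of $F$. A strictly negative derivative at $W_k^{auto}$ therefore implies that the unique maximizer in $W_k$ lies strictly below $W_k^{auto}$. Applying this to each $k$ gives $W^* < W^{auto}$ component-wise.

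The main obstacle is that $F_3$, $D$ and $L^A$ are themselves endogenous and move with $W$ and $H^A$. The ``holding other variables at their optimum'' argument therefore needs care, both for global concavity of $\Pi$ in $W_k$ and for the claim that $MB_k$ is monotone in $H^A$ once optimal $L^A$ and $D$ adjust. I would first follow the paper's convention: treat $F_3$, $D$ and $L^A$ as evaluated at the optimum, and argue that the single-crossing property in $(W_k, H^A)$ delivered by Assumption~\ref{ass:g}(v) is what drives the threshold. If that proves too loose, I would fall back on a Topkis-style monotone comparative statics argument, using supermodularity of $\Pi$ in $(W, H^A)$ together with Assumption~\ref{ass:g}(v) and cross-dimensional complementarity, so that $W^*$ is nondecreasing in $H^A$. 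That would secure the direction of the result. Strictness would then come from the strict cost convexity and the strict inequality at $W^{auto}$.
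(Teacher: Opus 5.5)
Your route is the paper's. The corollary is the threshold characterization $W_k^*\ge W_k^{auto}\iff H^A\ge\tilde H^A_k$ from the proof of Proposition~\ref{prop:hc_optimal}, read in reverse, after you convert $H^A/(H^A+H^C)<\min_k\theta^*_k$ into $H^A<\tilde H^A_k$ for every $k$; your share-monotonicity step is fine.

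If you take that ``iff'' as given, strictness is automatic, because $H^A<\tilde H^A_k$ negates $W_k^*\ge W_k^{auto}$. So the real work is re-deriving the characterization from the marginal condition, and that is where your argument has a gap. The paper skips the same step when it asserts that the firm picks $W_k^*>W_k^{auto}$ if and only if \eqref{eq:hc_condition} holds.

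\textbf{The gap.} Your second paragraph shows that the marginal profit of $W_k$ is negative at the full vector $W^{auto}$. Your third paragraph needs it negative on the $k$-th coordinate line through the optimum, i.e.\ at $(W_k^{auto},W^*_{-k})$. There both $g_k$ and $F_3$ take different values. Your coordinate-wise concavity computation is correct but does not bridge this, and Assumption~\ref{ass:g} puts no sign on $g_{jk}$.

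The implication actually fails. Freeze $A=PF_3\phi_0L^AH^AD$, as the paper does, and take:
\begin{itemize}
\item two dimensions with $W^{auto}=(1,1)$;
\item $g=1+\lambda(H^A)\,[h(W_1+W_2)-h(2)]$, with $\lambda$ increasing and small, and $h$ bounded, increasing and strictly concave;
\item costs $c_1(w)=w^2/2$ and $c_2(w)=\kappa w^2/2$;
\item parameters with $A\lambda h'(2)<1<A\lambda h'(1)$.
\end{itemize}
The condition at $W^{auto}$ fails in both coordinates, so $H^A$ lies below both of your thresholds, and profit is even jointly strictly concave. Yet for large $\kappa$ the first-order conditions $A\lambda h'(W_1+W_2)=W_1=\kappa W_2$ give $W_2^*\approx 0$ and $W_1^*>1=W_1^{auto}$. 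Abandoning the expensive dimension raises the marginal value of the cheap one. Once $F_3$ is endogenous, $F_{33}<0$ adds the same kind of substitution: $\partial^2\Pi/\partial W_j\partial W_k<0$ whenever $g$ is additively separable.

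\textbf{The fallback.} Your Topkis fallback is the right repair, but it does not follow from the hypotheses you cite. Supermodularity of $\Pi$ in $W$ requires $P[F_{33}\,\partial_jZ_3\,\partial_kZ_3+F_3\phi_0\,g_{jk}\,L^AH^AD]\ge 0$ for every pair $j\neq k$. Increasing differences in $(W_k,H^A)$ require $P[F_{33}\,\partial_kZ_3\,\partial_{H^A}Z_3+F_3\phi_0L^AD\,(g_k+H^Ag_{kH})]\ge 0$. Assumption~\ref{ass:g}(iii) and (v) sign only the $F_3$ term of the second condition. Cross-dimensional complementarity is not among (i)--(v) at all; the main text claims it only ``for key pairs.''

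So you must add these inequalities as an explicit hypothesis. Either freeze $D$ and the labor inputs, as the paper implicitly does, or impose the analogous conditions on their cross-partials. With that hypothesis, Topkis gives a greatest maximizer $W^*(H^A)$ that is nondecreasing in $H^A$. Setting $\tilde H^A_k=\inf\{H^A:W_k^*(H^A)\ge W_k^{auto}\}$, which is the threshold the paper's ``iff'' actually describes, makes the corollary immediate and strict by construction. Your small-$H^A$ argument, made uniform in $W_{-k}$, then gives $\tilde H^A_k>0$, so the corollary is not vacuous.
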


\begin{remark}
The threshold $\theta^*$ depends on the design cost structure $c_k(\cdot)$, the level of AI capital $D$, output price $P$, and the shape of $g(\cdot)$. Sectors with low design costs (e.g., knowledge services where organizational change is cheaper than in manufacturing) will have lower $\theta^*$, making human-centric design optimal for a wider range of workforce compositions.
\end{remark}

% =============================================================================
\subsection*{A.3 Proposition 2: Under-Investment in Human-Centricity}
% =============================================================================

\begin{proposition}[Under-Investment Theorem]
\label{prop:underinvest_formal}
The privately optimal workplace design $W^{priv}$ is strictly below the socially optimal design $W^{soc}$ when the following externalities are present:
\begin{enumerate}[label=(\alph*)]
    \item \textbf{Labor mobility externality:} workers trained through augmented work carry $\Delta H^A$ to other firms
    \item \textbf{Knowledge spillover externality:} workplace design innovations diffuse to competitor firms
    \item \textbf{Health externality:} improved $W_5$ reduces public healthcare costs not borne by the firm
\end{enumerate}
\end{proposition}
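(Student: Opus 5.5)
To prove the statement, I will write down a social planner's objective that adds to private profit a term capturing the three externalities, and then compare first-order conditions with the private problem already solved in the proof of Proposition~\ref{prop:hc_optimal}. Concretely, define social surplus as
\begin{equation*}
S(W) = \Pi(W) + E(W), \qquad E(W) = E^{mob}(W) + E^{spill}(W) + E^{health}(W_5),
\end{equation*}
where $\Pi$ is the profit in \eqref{eq:firm_problem}. The externality terms are as follows. $E^{mob}$ is the value other firms obtain from the $\Delta H^A(W)$ carried by departing workers, with separation probability $s>0$. $E^{spill}$ is the value competitors obtain from imitating design innovations at diffusion rate $\lambda>0$. $E^{health}$ is the public healthcare saving, decreasing in cost as $W_5$ rises. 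The primitive assumptions I impose, and state explicitly as an additional assumption, are three. First, $\Delta H^A$ is increasing in $W$; this is natural because $H^A$ accumulates faster under better learning loops, consistent with the $\beta(W)$ used later for Proposition~\ref{prop:trap}. Second, spillover value is increasing in $W$. Third, health costs are decreasing in $W_5$. Together these give $\partial E/\partial W_k \ge 0$ for every $k$, and strictly so for at least one channel per dimension.

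The argument then proceeds in three steps. First, I take the private FOC \eqref{eq:foc_expanded}, $MB_k(W) \equiv P F_3 \phi_0 \,\partial_k g\, L^A H^A D = c_k'(W_k)$. The social FOC is $MB_k(W) + \partial E/\partial W_k = c_k'(W_k)$. Second, I establish that $\Pi$ is strictly concave in $W$ near the optimum, so each FOC characterizes a unique interior maximizer. This follows from Assumption~\ref{ass:g}(iv), the strictly convex costs of Assumption~\ref{ass:costs}, and, holding other inputs at their optimal values, the concavity of $F$. Third, I evaluate the social marginal condition at $W^{priv}$. There the private FOC holds with equality, so $\partial S/\partial W_k|_{W^{priv}} = \partial E/\partial W_k > 0$. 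By concavity of $S$, which requires assuming $E$ is concave or at least not too convex, the social optimum must therefore lie strictly above $W^{priv}$ in that direction.

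The main obstacle is passing from "the social gradient at $W^{priv}$ is positive" to the component-wise conclusion $W^{soc} > W^{priv}$ in the multidimensional case. The cross-partials $\partial^2 g/\partial W_j\partial W_k$ are only assumed positive for "key pairs," so comparative statics could in principle move some components down. I would handle this with a monotone comparative statics argument. Parametrize $S_\tau = \Pi + \tau E$ for $\tau\in[0,1]$. If $g$ is supermodular in $W$ (I will assume nonnegative cross-partials for all pairs, strengthening Assumption~\ref{ass:g}) and $E$ has increasing differences in $(W,\tau)$, then Topkis' theorem gives that the maximizer is nondecreasing in $\tau$. Strictness would then follow from the strict positivity of $\partial E/\partial W_k$ together with the interior FOC. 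A fallback, if supermodularity is too strong, is to prove the result dimension by dimension, holding the other components fixed, and state the component-wise claim under that qualification. Finally, I will express the optimal Pigouvian subsidy as $\partial E/\partial W_k$ evaluated at $W^{soc}$. Substituting this into the private FOC reproduces the social FOC, which confirms the efficiency-correction interpretation given in the text.
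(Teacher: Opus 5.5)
Your core argument is the paper's own: the planner's FOC for $W_k$ is the private FOC plus the marginal externality $\partial E/\partial W_k>0$, so the social optimum is higher. The paper stops at what you call the fallback. It treats each $k$ as a one-dimensional marginal-benefit/marginal-cost comparison, invokes $c_k''>0$ to conclude $W_k^{soc}>W_k^{priv}$, and never addresses the cross-dimensional interactions you raise. In the paper, strict positivity of the wedge in $W_1$--$W_3$ comes only from the spillover term $M\sigma g(W_f)$. This matches your ``at least one channel per dimension'', since the mobility and health wedges load only on $W_4$ and $W_5$.

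The step where you go beyond the paper has a gap: the Topkis hypothesis is placed on the wrong object. Topkis' theorem needs $\Pi+\tau E$ to be supermodular in $W$, and supermodularity of $g$ does not deliver this. Holding $(D,L)$ fixed, for $j\neq k$,
\[
\frac{\partial^2 \Pi}{\partial W_j\,\partial W_k}
= P\,\phi_0 L^A H^A D\left[F_3\,\frac{\partial^2 g}{\partial W_j\,\partial W_k} + F_{33}\,\phi_0 L^A H^A D\,\frac{\partial g}{\partial W_j}\,\frac{\partial g}{\partial W_k}\right].
\]
Under the standard neoclassical assumptions, $F_{33}<0$ and both partials of $g$ are positive, so the second term is strictly negative. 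Diminishing returns to the cognitive aggregate $Z_3$ make the design dimensions substitutes at the profit level unless the complementarity in $g$ is strong enough.

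You therefore need to assume directly that $F_3\,\partial^2 g/\partial W_j\partial W_k \ge |F_{33}|\,\phi_0 L^AH^AD\,(\partial g/\partial W_j)(\partial g/\partial W_k)$ for all $j\neq k$. You also need $E$ itself to be supermodular in $W$; monotonicity of $E$ only gives the increasing differences in $\tau$. With these in place your strictness claim goes through. Suppose $W^{soc}\ge W^{priv}$ but $W_k^{soc}=W_k^{priv}$. Then supermodularity gives $\partial\Pi/\partial W_k$ at $W^{soc}$ at least its value at $W^{priv}$, which is zero. This contradicts the social FOC, where $\partial\Pi/\partial W_k=-\partial E/\partial W_k<0$.

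Relatedly, the local strict concavity of $\Pi$ does not follow from the assumptions you cite. Assumption~\ref{ass:g}(iv) gives only own-curvature $\partial^2 g/\partial W_k^2<0$. With positive cross-partials, $g$ need not be jointly concave. A negative-definite Hessian at the optimum must therefore be assumed as a second-order condition, as the paper implicitly does in its comparative-statics proof, rather than derived.
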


\begin{proof}

\noindent \textbf{Step 1: The social planner's problem.}

The social planner maximizes total surplus across all firms $f \in \{1, \ldots, N\}$:

\begin{equation}
\max_{\{W_f\}} \sum_{f=1}^{N} \left[ P \cdot F_f(\cdot | \phi(D_f, W_f)) - c_W(W_f) \right] + \underbrace{\sum_{f=1}^{N} E_f(W_f)}_{\text{externalities}}
\end{equation}

\noindent where $E_f(W_f)$ captures the three externalities.

\medskip

\noindent \textbf{Step 2: Characterizing each externality.}

\medskip

\noindent \textit{(a) Labor mobility.} When firm $f$ invests in $W_f$ (especially $W_{f4}$, learning loops), its workers accumulate additional augmentable capital $\Delta H^A_{if} = h(W_{f4}) > 0$ with $h' > 0$. A fraction $\lambda \in (0,1)$ of workers leave per period, carrying $\Delta H^A$ to other firms. The private benefit to firm $f$ is:

$$B_f^{priv} = (1 - \lambda) \cdot V(\Delta H^A_{if})$$

\noindent but the social benefit is:

$$B_f^{soc} = V(\Delta H^A_{if})$$

\noindent since the economy benefits from the full $\Delta H^A$ regardless of where the worker is employed. The wedge is $\lambda \cdot V(\Delta H^A_{if}) > 0$, which the firm ignores when choosing $W_{f4}$.

\medskip

\noindent \textit{(b) Knowledge spillover.} Design innovations at firm $f$ are partially observed and adopted by $M$ competitor firms, each of which gains $\sigma \cdot g(W_f)$ where $\sigma \in (0,1)$ is the spillover rate. The firm ignores the benefit $M \cdot \sigma \cdot g(W_f)$ when choosing $W_f$.

\medskip

\noindent \textit{(c) Health externality.} Improved psychosocial environment ($W_{f5}$) reduces worker stress, burnout, and related healthcare utilization. Let $\mathcal{H}(W_{f5})$ be the reduction in public healthcare costs per worker, with $\mathcal{H}' > 0$. The firm bears only a fraction $\tau$ of these costs (through insurance contributions); the public system bears $(1-\tau)$. The wedge is $(1-\tau) \cdot L_f \cdot \mathcal{H}(W_{f5}) > 0$.

\medskip

\noindent \textbf{Step 3: Comparing FOCs.}

The private FOC for $W_k$ is (from equation \ref{eq:foc_expanded}):

$$P \cdot F_3 \cdot \phi_0 \cdot g_k \cdot L^A H^A D = c_k'(W_k)$$

The social FOC adds the marginal externality:

$$P \cdot F_3 \cdot \phi_0 \cdot g_k \cdot L^A H^A D + \underbrace{\frac{\partial E}{\partial W_k}}_{> 0} = c_k'(W_k)$$

Since $\partial E / \partial W_k > 0$ for all three externalities and $c_k''(W_k) > 0$, the social optimum requires a higher $W_k$ to satisfy the FOC:

$$W_k^{soc} > W_k^{priv} \quad \text{for all } k$$

\noindent The under-investment is largest for dimensions with the largest externalities:
\begin{itemize}
    \item $W_4$ (learning loops): highest labor mobility externality (workers carry learned skills)
    \item $W_5$ (psychosocial): highest health externality
    \item $W_1, W_2, W_3$: primarily knowledge spillover externality
\end{itemize}
\qed
\end{proof}

\begin{remark}[Policy Implication]
The under-investment theorem provides a theoretical justification for public subsidies to human-centric workplace design. The optimal subsidy per dimension $k$ equals the marginal externality $\partial E / \partial W_k$ evaluated at the social optimum. This is not a redistributive policy (transferring from firms to workers) but an efficiency correction.
\end{remark}

% =============================================================================
\subsection*{A.4 Proposition 3: Path Dependence and the Automation Trap}
% =============================================================================

\begin{proposition}[Automation Trap]
\label{prop:path_dep}
The dynamic system $(W_f(t), H^A_f(t))$ exhibits two locally stable equilibria:
\begin{enumerate}[label=(\roman*)]
    \item \textbf{Low equilibrium (Automation Trap):} $(W^L, H^{A,L})$ with $W^L < W^{auto}$ and $H^{A,L}$ small
    \item \textbf{High equilibrium (Augmentation Regime):} $(W^H, H^{A,H})$ with $W^H > W^{auto}$ and $H^{A,H}$ large
\end{enumerate}
There exists an unstable interior equilibrium $(W^U, H^{A,U})$ that separates the basins of attraction. Initial conditions determine which equilibrium the firm converges to.
\end{proposition}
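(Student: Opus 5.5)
The plan is to reduce the two-dimensional system to a single fixed-point problem in $H^A$. I would then read off existence and stability from how the two nullclines cross. First, I would make the dynamics explicit, as the text after Proposition~3 describes. Design adjusts toward the static optimum, $\dot W = \alpha\,(W^*(H^A) - W)$ with $\alpha>0$. Here $W^*(H^A)$ is the profit-maximizing design from the first-order condition \eqref{eq:foc_expanded}. Augmentable capital follows $\dot H^A = \beta(W) - \delta(W)\,H^A$, with $\beta'>0$ and $\delta'<0$. To keep the phase plane two-dimensional, I would treat $W$ as a scalar index: either assume the five dimensions enter symmetrically, or collapse $W$ along the ray through $W^{auto}$. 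The vector statement is then recovered component-wise using the thresholds $\theta_k^*$ from Proposition~\ref{prop:hc_optimal}.

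Second, I would characterize the $W$-nullcline $W = W^*(H^A)$. Applying the implicit function theorem to \eqref{eq:foc_expanded}, and using Assumption~\ref{ass:g}(iv)--(v) together with $c_k''>0$, gives $dW^*/dH^A>0$. By Proposition~\ref{prop:hc_optimal} and Corollary~\ref{cor:not_always}, $W^*(H^A)$ crosses $W^{auto}$ exactly at the threshold $\tilde H^A$. Boundedness of $g$ and convexity of $c_k$ keep $W^*$ bounded. The $H^A$-nullcline is $H^A = \beta(W)/\delta(W)$, which is strictly increasing in $W$. Equilibria are therefore the fixed points of the composite map
\[
\Psi(H) \equiv \frac{\beta(W^*(H))}{\delta(W^*(H))}.
\]
This map satisfies $\Psi(0)>0$, is bounded above, and is increasing. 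Hence it has at least one fixed point, and both extreme fixed points are ones where $\Psi$ crosses the diagonal from above.

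Third, I would handle stability through the Jacobian at an equilibrium:
\[
J=\begin{pmatrix}-\alpha & \alpha W^{*\prime}\\ \beta'-\delta' H & -\delta\end{pmatrix}.
\]
Its trace is $-\alpha-\delta<0$, and a short computation gives $\det J = \alpha\delta\,(1-\Psi'(H))$. So an equilibrium is a saddle, hence unstable, exactly when $\Psi'>1$, and is locally stable when $\Psi'<1$. Fixed points of a continuous map that starts and ends below the diagonal alternate in crossing direction. Three transversal fixed points therefore have the order stable, unstable, stable. The stable manifold of the saddle $(W^U,H^{A,U})$ separates the two basins of attraction.

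The main obstacle is proving that there are \emph{three} fixed points rather than one. Assumptions~\ref{ass:phi0}--\ref{ass:costs} alone do not force $\Psi'>1$ anywhere. I would add an explicit complementarity-strength condition: for some $\hat H$ we have $\Psi(\hat H)=\hat H$ and $\Psi'(\hat H) > 1$. Equivalently, the product of the slope of $W^*$ and the sensitivity $(\beta'-\delta' H)/\delta$ exceeds one there. I would justify this condition via P5 (a strong cross-partial makes $W^{*\prime}$ large near $\tilde H^A$), or alternatively by assuming a sigmoidal $\beta$. With the condition in hand, take $H^{A,L}$ and $H^{A,H}$ to be the smallest and largest fixed points. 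It remains to check that $H^{A,L}<\tilde H^A<H^{A,H}$, so that $W^L=W^*(H^{A,L})<W^{auto}<W^H$. I would obtain this by requiring the unstable crossing to occur in the neighborhood of $\tilde H^A$ where $W^*$ is steep. This placement is the second delicate point, and I would state it as an explicit parameter restriction rather than derive it.
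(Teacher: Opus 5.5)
Your skeleton matches the paper's. Its proof also sets $W=W^*(H^A)$ and defines $\Psi(H^A)=\beta(W^*(H^A))I_{edu}/\delta(W^*(H^A))-H^A$, which is your map minus the identity with $I_{edu}$ absorbed into $\beta$. It then locates equilibria as the zeros of this function. The two arguments part ways at the two delicate steps.

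\textbf{Stability.} The paper reads local stability off the sign of its $\Psi'$ at each zero, as if the system were one-dimensional. It never connects this to the $(W,H^A)$ dynamics. Your computation $\operatorname{tr}J=-\alpha-\delta<0$, $\det J=\alpha\delta\,(1-\Psi'(H))$ supplies that connection and shows the one-dimensional criterion is exact: the paper's $\Psi'<0$ is your $\Psi'<1$. It also makes your scalar-index reduction unnecessary. With vector $W$ the Jacobian is a bordered matrix whose four extra eigenvalues equal $-\alpha$. The remaining pair has the same trace and determinant, with $W^{*\prime}(\beta'-\delta'H)$ replaced by an inner product.

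\textbf{Multiplicity.} The paper asserts that design--composition complementarity makes $W^*$ convex at intermediate $H^A$, so that $\Psi$ decreases, then increases, then decreases. It then obtains three zeros from the intermediate value theorem. However, Assumption~\ref{ass:g}(v) only yields $dW^*/dH^A>0$. Even the asserted shape gives three zeros only if the interior local minimum of $\Psi$ is negative and the following local maximum is positive, and the paper never checks this. Your observation that Assumptions~\ref{ass:phi0}--\ref{ass:costs} do not force $\Psi'>1$ anywhere is therefore correct. Your explicit complementarity-strength condition is exactly the hypothesis the paper uses tacitly. Likewise, the paper only labels $W^L<W^{auto}<W^H$ and never verifies it, so your treatment of the placement as a stated parameter restriction is again more honest.

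Your route gives a correct conditional result with a genuine two-dimensional stability argument. The paper's route is shorter but rests on an unproved shape claim for $\Psi$.

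Three small fixes are needed.
\begin{enumerate}
\item Since $\Psi(0)>0$, your map starts \emph{above} the diagonal and ends below it; that, not ``starts and ends below,'' is what produces the order stable, unstable, stable.
\item Bounded $g$ and convex $c_k$ do not bound $W^*$, because the marginal benefit in \eqref{eq:foc_expanded} carries the factor $L^AH^AD$. What you actually need is that $\beta(W^*(H))/\delta(W^*(H))$ eventually falls below the diagonal, for example $\beta$ bounded and $\delta$ bounded away from zero. The paper assumes the same thing when it says the $-H^A$ term dominates.
\item The smallest and largest fixed points only satisfy $\Psi'\le 1$, so hyperbolicity there requires a transversality assumption.
\end{enumerate}
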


\begin{proof}

\noindent \textbf{Step 1: The dynamic system.}

\noindent Assume the firm adjusts workplace design toward its current optimum at speed $\alpha > 0$, and augmentable human capital evolves through accumulation and depreciation:

\begin{align}
\dot{W}_f &= \alpha \left[ W^*(H^A_f) - W_f \right] \label{eq:dyn_W} \\
\dot{H}^A_f &= \beta(W_f) \cdot I_{edu} - \delta(W_f) \cdot H^A_f \label{eq:dyn_HA}
\end{align}

\noindent where:
\begin{itemize}
    \item $W^*(H^A)$ is the optimal design from Proposition \ref{prop:hc_optimal}, with $\frac{dW^*}{dH^A} > 0$ (from the FOC comparative static)
    \item $\beta(W) > 0$ is the $H^A$ formation rate, with $\beta'(W) > 0$ (better workplace design develops more augmentable capital through learning loops $W_4$ and task exposure $W_3$)
    \item $\delta(W) > 0$ is the $H^A$ depreciation rate, with $\delta'(W) < 0$ (better workplace design slows skill obsolescence by keeping workers engaged with frontier tasks)
    \item $I_{edu} > 0$ is exogenous educational investment
\end{itemize}

\medskip

\noindent \textbf{Step 2: Steady states.}

\noindent At a steady state $(\bar{W}, \bar{H}^A)$, both $\dot{W} = 0$ and $\dot{H}^A = 0$:

\begin{align}
\bar{W} &= W^*(\bar{H}^A) \quad \text{(design is optimal given $H^A$)} \label{eq:ss1} \\
\bar{H}^A &= \frac{\beta(\bar{W}) \cdot I_{edu}}{\delta(\bar{W})} \quad \text{($H^A$ is in accumulation-depreciation balance)} \label{eq:ss2}
\end{align}

\noindent Substituting \eqref{eq:ss1} into \eqref{eq:ss2}, define:

\begin{equation}
\Psi(H^A) \equiv \frac{\beta\big(W^*(H^A)\big) \cdot I_{edu}}{\delta\big(W^*(H^A)\big)} - H^A
\label{eq:psi}
\end{equation}

\noindent Steady states satisfy $\Psi(H^A) = 0$. We show that $\Psi$ has (at least) three zeros.

\medskip

\noindent \textbf{Step 3: Shape of $\Psi$.}

\begin{itemize}
    \item \textit{At $H^A = 0$:} $W^*(0) = W^{min}$ (from Corollary \ref{cor:not_always}), so $\Psi(0) = \frac{\beta(W^{min}) \cdot I_{edu}}{\delta(W^{min})} > 0$. There exists a small positive steady state $H^{A,L}$ near zero.

    \item \textit{Derivative of $\Psi$:}
    \begin{equation}
    \Psi'(H^A) = \frac{d}{dH^A}\left[\frac{\beta(W^*(H^A))}{\delta(W^*(H^A))}\right] \cdot I_{edu} - 1
    \end{equation}

    Define $R(H^A) \equiv \beta(W^*(H^A)) / \delta(W^*(H^A))$, the accumulation-to-depreciation ratio. Since $\beta' > 0$, $\delta' < 0$, and $dW^*/dH^A > 0$:

    $$R'(H^A) = \frac{\beta' \cdot \delta - \beta \cdot \delta'}{\delta^2} \cdot \frac{dW^*}{dH^A} > 0$$

    The function $R$ is increasing: higher $H^A \Rightarrow$ higher optimal $W^* \Rightarrow$ better accumulation and slower depreciation.

    \item \textit{Key feature --- convexity of $R$:} The complementarity between $W$ and $H^A$ in the production function means that $W^*(H^A)$ is \textbf{convex} for intermediate values of $H^A$ (the marginal return to design improvement is accelerating in $H^A$ due to Assumption \ref{ass:g}(v)). This makes $R(H^A)$ convex in an intermediate region, so $\Psi'(H^A) = R'(H^A) \cdot I_{edu} - 1$ can change sign.

    \item \textit{At low $H^A$:} $R'$ is small (design changes little when $H^A$ is low, because the firm is in the automation regime), so $\Psi' < 0$. The function $\Psi$ is decreasing.

    \item \textit{At intermediate $H^A$:} As $H^A$ approaches $\theta^*$, the transition to human-centric design kicks in, $W^*$ increases rapidly, $R'$ becomes large, and $\Psi' > 0$. The function $\Psi$ starts increasing.

    \item \textit{At high $H^A$:} Diminishing returns to design and bounded $\bar{g}$ mean $R'$ eventually decreases, so $\Psi' < 0$ again. The function $\Psi$ decreases and eventually crosses zero from above (since the linear term $-H^A$ dominates for large $H^A$).
\end{itemize}

\noindent This shape --- $\Psi(0) > 0$, then decreasing, then increasing, then decreasing and eventually negative --- guarantees (by the intermediate value theorem) at least three zeros: $H^{A,L} < H^{A,U} < H^{A,H}$.

\medskip

\noindent \textbf{Step 4: Stability.}

\noindent At each steady state, local stability requires $\Psi'(\bar{H}^A) < 0$ (net accumulation decreases when $H^A$ rises above steady state):

\begin{itemize}
    \item $H^{A,L}$: $\Psi$ crosses zero from above $\Rightarrow$ $\Psi'(H^{A,L}) < 0$ $\Rightarrow$ \textbf{stable} (Automation Trap)
    \item $H^{A,U}$: $\Psi$ crosses zero from below $\Rightarrow$ $\Psi'(H^{A,U}) > 0$ $\Rightarrow$ \textbf{unstable} (threshold/tipping point)
    \item $H^{A,H}$: $\Psi$ crosses zero from above $\Rightarrow$ $\Psi'(H^{A,H}) < 0$ $\Rightarrow$ \textbf{stable} (Augmentation Regime)
\end{itemize}

\noindent The unstable equilibrium $H^{A,U}$ defines the basin of attraction boundary. Firms with initial $H^A_f(0) < H^{A,U}$ converge to the Automation Trap; firms with $H^A_f(0) > H^{A,U}$ converge to the Augmentation Regime. \qed
\end{proof}

\begin{remark}[Escaping the Trap]
The Automation Trap can be escaped through a temporary ``big push'' that raises $H^A$ above $H^{A,U}$. This can come from:
\begin{enumerate}
    \item \textit{Educational investment:} A large increase in $I_{edu}$ shifts $\Psi$ upward, potentially eliminating the low equilibrium.
    \item \textit{Design subsidy:} A subsidy to $c_W$ reduces the cost of human-centric design, shifting $W^*$ upward for any given $H^A$, which raises $R$ and can eliminate the low equilibrium.
    \item \textit{Regulatory mandate:} A minimum workplace design standard $\underline{W} > W^{auto}$ forces firms above the threshold, triggering endogenous $H^A$ accumulation.
\end{enumerate}
All three correspond to the Transition Roadmap phases described in Section 7.1 of the paper.
\end{remark}

% =============================================================================
\subsection*{A.5 Comparative Statics}
% =============================================================================

From the first-order condition \eqref{eq:foc_expanded}, we derive the key comparative statics using the implicit function theorem:

\begin{proposition}[Comparative Statics of Optimal Design]
\label{prop:comp_statics}
At an interior optimum $W^*$:
\begin{enumerate}[label=(\roman*)]
    \item $\frac{\partial W_k^*}{\partial D} > 0$ --- Firms with more AI capital invest more in workplace design (technology-design complementarity)
    \item $\frac{\partial W_k^*}{\partial H^A} > 0$ --- Firms with higher-$H^A$ workforces invest more in design (design-composition complementarity)
    \item $\frac{\partial W_k^*}{\partial c_k} < 0$ --- Higher design costs reduce design investment
    \item $\frac{\partial W_k^*}{\partial P} > 0$ --- Higher output prices increase design investment (scale effect)
    \item $\frac{\partial W_k^*}{\partial w_A} < 0$ --- Higher wages for augmentable workers reduce design investment (substitution toward automation)
\end{enumerate}
\end{proposition}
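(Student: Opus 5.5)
The approach is to apply the implicit function theorem to the first-order condition \eqref{eq:foc_expanded}, treating the other choice variables as held fixed (a partial-equilibrium comparative static in $W_k$ alone). Write
\[
G_k(W_k;\xi) \equiv P \cdot F_3 \cdot \phi_0(D) \cdot g_k(W,H^A) \cdot L^A H^A D - c_k'(W_k),
\]
where $\xi$ collects the parameters $D, H^A, P, c_k, w_A$. At an interior optimum $G_k = 0$. The implicit function theorem then gives
\[
\frac{\partial W_k^*}{\partial \xi} = -\frac{\partial G_k/\partial \xi}{\partial G_k/\partial W_k}.
\]

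The first step is to sign the denominator. Differentiating gives
\[
\partial G_k/\partial W_k = P F_{33}\,(\phi_0 g_k L^A H^A D)^2 + P F_3 \phi_0 g_{kk} L^A H^A D - c_k''.
\]
The middle term is negative by Assumption \ref{ass:g}(iv), and the last is negative by Assumption \ref{ass:costs}. For the first term I would invoke concavity of $F$ in $Z_3$, i.e.\ $F_{33}\le 0$, which follows from the CRS/Inada conditions. The denominator is therefore strictly negative, which is also the second-order condition. Each comparative static then has the sign of $\partial G_k/\partial \xi$.

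The second step signs the numerators one at a time.

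\textbf{(iii) and (iv).} For design costs, parametrize $c_k = \gamma \tilde c_k$; then $\partial G_k/\partial\gamma = -\tilde c_k' < 0$. For the output price, $\partial G_k/\partial P = F_3 \phi_0 g_k L^A H^A D > 0$. Both are immediate.

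\textbf{(ii).} Here $\partial G_k/\partial H^A$ contains three pieces: the direct $L^A$ term, which is positive; the $g_{kH}$ term, which is positive by Assumption \ref{ass:g}(v); and a term $P F_{33}\,\partial Z_3/\partial H^A \cdot(\cdots)$, which is negative.

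\textbf{(i).} The structure is the same: $\phi_0 D$ is increasing in $D$ by Assumption \ref{ass:phi0}, but again there is a negative $F_{33}$ term.

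\textbf{(v).} The wage $w_A$ enters only through the choice of $L^A$. I would use $\partial L^{A*}/\partial w_A < 0$ from the labor FOC, together with $\partial G_k/\partial L^A$, which is positive if the direct effect dominates. The chain rule then gives the negative sign.

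The main obstacle is the negative $F_{33}$ terms in (i), (ii) and (v): the marginal product of the cognitive aggregate falls as $Z_3$ rises, which works against the direct effects. My plan is to show that the direct effect dominates. Writing $x \equiv Z_3$, the combined direct and $F_{33}$ effect has the sign of $F_3 + x F_{33}$. This is positive whenever the elasticity $-xF_{33}/F_3 < 1$, which holds for standard CES/Cobb--Douglas forms of $F$ with non-degenerate input shares, because CRS makes $F$ concave but not too concave in a single argument.

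If this route fails, I would state that elasticity condition as a maintained regularity assumption. Failing that, I would interpret the statics holding $Z_3$'s marginal product fixed, i.e.\ as a small-firm, price-taking approximation. For (v) I would in addition need $F_{3}$ to be decreasing enough in $L^A$ for the labor response to be well defined; I would flag that as the least secure step.
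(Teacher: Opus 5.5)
Your route is the same as the paper's. The appendix applies the implicit function theorem to the same object (its $\Gamma_k$) and signs the numerators one by one, but it treats $F_3$ as a constant: no $F_{33}$ term ever appears. (In (iii) it also writes the ill-defined $\partial\Gamma_k/\partial c_k=-c_k''$, which your parametrization $c_k=\gamma\tilde c_k$ repairs.) So the obstacle you isolate is real, and it is a gap in the paper's proof as well. The trouble is that your plan for removing it does not work.

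First, CRS does not keep $F$ ``not too concave'' in one argument. With $\epsilon\equiv -Z_3F_{33}/F_3$, a CES aggregator has $\epsilon=(1-s_3)/\sigma$, which exceeds $1$ whenever $\sigma<1-s_3$ (Leontief is CRS). The Inada conditions do not rule out $\epsilon>1$ on an intermediate range either. Write $y\equiv\phi_0 g L^AH^AD\le Z_3$. Then exactly $\partial G_k/\partial D=P L^AH^Ag_k(\phi_0+D\phi_0')(F_3+yF_{33})$ and $\partial G_k/\partial L^A=P\phi_0DH^Ag_k(F_3+yF_{33})$. Hence (i) and (v) are false whenever $\epsilon y/Z_3>1$. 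The condition $\epsilon<1$ cannot be derived from Assumptions~\ref{ass:phi0}--\ref{ass:costs}; it has to be added as a hypothesis.

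Second, even granting $\epsilon<1$, (ii) does not have the structure of (i), because $\partial Z_3/\partial H^A=\phi_0L^AD\,(g+H^Ag_H)$. The exact sign condition is
\[
\frac{\partial\ln(G_k+c_k')}{\partial\ln H^A}=1+\eta_k-\epsilon\,\frac{y}{Z_3}\,(1+\eta_g)>0,\qquad \eta_k\equiv\frac{H^Ag_{kH}}{g_k},\quad \eta_g\equiv\frac{H^Ag_H}{g}.
\]
Moreover $g_H>0$ precisely in the human-centric region. Assumption~\ref{ass:g}(ii) with a fixed $W^{auto}$ gives $g_H(W^{auto},\cdot)=0$, and (v) then forces $g_H>0$ for $W>W^{auto}$. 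Suppose the dimensions $j\neq k$ make $g$ highly elastic in $H^A$ while $g_k$ is only mildly so. Then $\eta_g$ swamps $1+\eta_k$ and (ii) fails, even for Cobb--Douglas, where $\epsilon=1-s_3$ is close to $1$ when the cognitive share is small.

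You therefore need a further condition, such as $\eta_k\ge\eta_g$ (i.e.\ $g_k/g$ nondecreasing in $H^A$). Otherwise you must explicitly adopt the paper's tacit convention of holding $F_3$ fixed. Price-taking does not justify that convention, since $F_3$ is the firm's own marginal product, not a market price.

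For (v), solve the joint $(W_k,L^A)$ first-order system instead. This gives $dW_k^*/dw_A=-\Pi_{W_kL^A}/\Delta$, with $\Delta>0$ the determinant of the Hessian of $\Pi$ in $(W_k,L^A)$. Only $\Pi_{W_kL^A}=\partial G_k/\partial L^A>0$ and the joint second-order condition are then needed, which removes the step you flagged as least secure.
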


\begin{proof}
Define $\Gamma_k(W_k; D, H^A, c_k, P, w_A) \equiv P \cdot F_3 \cdot \phi_0(D) \cdot g_k(W, H^A) \cdot L^{A*} H^A D - c_k'(W_k) = 0$.

By the implicit function theorem:

$$\frac{\partial W_k^*}{\partial x} = -\frac{\partial \Gamma_k / \partial x}{\partial \Gamma_k / \partial W_k}$$

The denominator $\partial \Gamma_k / \partial W_k < 0$ by the second-order condition (concavity of $g$ in $W_k$ and convexity of $c_k$).

\begin{enumerate}[label=(\roman*)]
    \item $\partial \Gamma_k / \partial D > 0$ (more AI increases marginal product of design) $\Rightarrow$ $\partial W_k^* / \partial D > 0$.
    \item $\partial \Gamma_k / \partial H^A > 0$ (design-composition complementarity, Assumption \ref{ass:g}(v)) $\Rightarrow$ $\partial W_k^* / \partial H^A > 0$.
    \item $\partial \Gamma_k / \partial c_k = -c_k''(W_k) < 0$ $\Rightarrow$ $\partial W_k^* / \partial c_k < 0$.
    \item $\partial \Gamma_k / \partial P > 0$ (higher price increases marginal revenue) $\Rightarrow$ $\partial W_k^* / \partial P > 0$.
    \item $\partial \Gamma_k / \partial w_A < 0$ (higher augmentable wages reduce optimal $L^A$, lowering marginal product of design) $\Rightarrow$ $\partial W_k^* / \partial w_A < 0$. \qed
\end{enumerate}
\end{proof}

\begin{remark}[Empirical Predictions]
Proposition \ref{prop:comp_statics} generates five testable predictions for the cross-sectional analysis:
\begin{enumerate}
    \item Firms with more AI investment have higher WADI scores (test with EDIT technology investment)
    \item Firms with higher-$H^A$ workforce composition have higher WADI (test with AHC index)
    \item Firms in sectors with lower organizational change costs have higher WADI (test with sector FE)
    \item Firms facing higher output demand have higher WADI (test with revenue growth)
    \item Firms in tight augmentable-labor markets invest less in design (test with regional labor market data)
\end{enumerate}
These predictions discipline the empirical analysis and provide over-identification tests for the model.
\end{remark}

% =============================================================================
\subsection*{A.6 The Sustainability-Constrained Optimum}
% =============================================================================

\noindent When the firm faces a sustainability constraint on AI energy use:

\begin{equation}
E(D, W_1) \leq \bar{E}
\label{eq:energy}
\end{equation}

\noindent where $E_D > 0$ (more AI uses more energy) and $E_{W_1} > 0$ (more transparent/explainable AI requires more computation), the Lagrangian adds $\mu \cdot [\bar{E} - E(D, W_1)]$ with shadow price $\mu \geq 0$.

\noindent When the constraint binds ($\mu > 0$), the modified FOC for $W_1$ becomes:

$$P \cdot F_3 \cdot \phi_0 \cdot g_1 \cdot L^A H^A D = c_1'(W_1) + \mu \cdot E_{W_1}$$

\noindent The sustainability constraint raises the effective cost of interface design improvement, reducing $W_1^*$ below the unconstrained optimum. This generates a \textbf{sustainability-design tradeoff}: more transparent AI (higher $W_1$) costs more energy, and under carbon constraints, firms must balance augmentation quality against environmental impact.

% end of appendix proofs

\end{document}